\newenvironment{breakablealgorithm}
  {
   \begin{center}
     \refstepcounter{algorithm}
     \hrule height.8pt depth0pt \kern2pt
     \renewcommand{\caption}[2][\relax]{
       {\raggedright\textbf{\ALG@name~\thealgorithm} ##2\par}%
       \ifx\relax##1\relax 
         \addcontentsline{loa}{algorithm}{\protect\numberline{\thealgorithm}##2}%
       \else 
         \addcontentsline{loa}{algorithm}{\protect\numberline{\thealgorithm}##1}%
       \fi
       \kern2pt\hrule\kern2pt
     }
  }{
     \kern4pt\hrule\relax
   \end{center}
  }
\def\@biblabel#1{\hspace*{-\labelsep}}
\newcommand{\ve}{\mathrm{Vec}}
\newcommand{\var}{\mathrm{Var}}
\newcommand{\cov}{\mathrm{Cov}}
\newcommand{\1}{\mathbbm{1}}
\newcommand{\tr}{\mathrm{tr}}
\DeclareMathOperator*{\argmin}{arg\,min}
\definecolor{codegreen}{rgb}{0,0.6,0}
\definecolor{codegray}{rgb}{0.5,0.5,0.5}
\definecolor{codepurple}{rgb}{0.58,0,0.82}
\definecolor{backcolour}{rgb}{0.95,0.95,0.92}
\lstdefinestyle{mystyle}{
	backgroundcolor=\color{backcolour},   
	commentstyle=\color{codegreen},
	keywordstyle=\color{magenta},
	numberstyle=\tiny\color{codegray},
	stringstyle=\color{codepurple},
	basicstyle=\ttfamily\footnotesize,
	breakatwhitespace=false,         
	breaklines=true,                 
	captionpos=b,                    
	keepspaces=true,                 
	numbers=left,                    
	numbersep=5pt,                  
	showspaces=false,                
	showstringspaces=false,
	showtabs=false,                  
	tabsize=2
}
\begin{document}
	\title{On Statistical Inference for High-Dimensional Binary Time Series}
		
\author{\name Dehao Dai \email ddai@ucsd.edu \\
		\addr Department of Mathematics \\
		University of California San Diego\\
		La Jolla, CA 92037, USA
		\AND
\name Yunyi Zhang \email zhangyunyi@cuhk.edu.cn\\
		\addr 
        School of Data Science\\
        The Chinese University of Hong Kong, Shenzhen\\
		Shenzhen, Guangdong 518172, China\\
		}
	
 \editor{}
	\maketitle

	\begin{abstract}
          The analysis of non-real-valued data, such as binary time series, has attracted great interest in recent years. This manuscript proposes a post-selection estimator for estimating the coefficient matrices of a high-dimensional generalized binary vector autoregressive process and establishes a Gaussian approximation theorem for the proposed estimator. Furthermore, it introduces a second-order wild bootstrap algorithm to enable statistical inference on the coefficient matrices. Numerical studies and empirical applications demonstrate the good finite-sample performance of the proposed method.

	\end{abstract}

	\begin{keywords}
	Binary time series, Vector autoregressive model, Bootstrap, Post-selection inference, High-dimensional data.
	\end{keywords}

\section{Introduction}
Time series data frequently appear in practice, making their analysis a common and essential task in many real-life applications, like those introduced in   
\cite{MR2172368}, \cite{politis2019time}, \cite{MR4154894},  \cite{MR4270034}, \cite{MR4356579}, \cite{10.1371/journal.pbio.3002758},  \cite{MR4698614},  among others.  After practitioners collect a time series dataset, various models, such as ARMA or GARCH processes, are often resorted to summarize the conditional mean or volatility information of the data. Furthermore, after fitting a suitable model, statisticians may perform statistical inference on the underlying data generating process, like those in Section 15.4 of \cite{MR2172368}, or perform mean or probabilistic forecasting, like those in \cite{ZHANG2025112611}.

Despite wide applications and advantages of the existing models, an issue with these models is that they are mainly designed for time series with real value support---making them unsuitable for time series data that exhibit special characteristics. For example, if the time series dataset has binary support, fitting an autoregressive model does not guarantee that maintain such characteristics. However, constraints on the support of practical time series datasets often exist, which may arise from the physical meaning of the data.  For example, \cite{keenan1982time} mentioned that binary-value time series can arise when statisticians consider the dynamic occurrence of events or changes in states; \cite{MR3773399} introduced a count time series model for hurricane data; other research like \cite{MR2957867}, \cite{fokianos2009poisson}, \cite{MR2981620}, and the reference therein, also raised similar issues.

This manuscript leverages the work of \cite{jentsch2022generalized} to introduce a high-dimensional generalized binary vector autoregressive (gbVAR) process for high-dimensional binary-valued vector time series data. After incorporating the sparsity assumption of the parameter matrices, it introduces a post-selection estimator to estimate the parameter matrices of the model and derives the associated Gaussian approximation for the proposed estimator. The complex spatial and temporal dependence of the vector time series data causes the proposed estimator to exhibit complex and hard-to-estimate variances and covariances, making it unrealistic to perform statistical inference through manual calculations.  Concerning this, we propose a bootstrap algorithm, named the second-order wild bootstrap,  to assist statisticians in performing statistical inference on the parameter matrices. This manuscript also theoretically justifies the validity of the proposed bootstrap algorithm.

It is common for modern time series to simultaneously exhibit both high dimensionality and constraints on their support. Given the prevalence of non-real-valued-supported data---such as binary vector time series---in various application domains, including \cite{kharin2018statistical,darolles2019bivariate,franchi2024theory}, our work is expected to be of interest to practitioners in these domains.

This paper is organized as follows: Section \ref{section.related_literature} makes a thorough review of the analysis of non-real-value time series data.  Section \ref{sec 2} introduces gbVAR Models and their stochastic probability for low-dimensional structures. Section \ref{sec 3} proposes the transition matrix estimation procedure and bootstrap inference for binary time series. Section \ref{sec 4} presents asymptotic analysis based on the functional dependence measure and Gaussian approximation. Section \ref{sec 5} provides numerical experiments, and Section \ref{sec 6} summarizes discussions and conclusions. Technical proofs are deferred to the online supplement \cite{Supplement}.

\textbf{Frequently used notations: }
In this part, we introduce some notation. For a random variable $X$ and $q\geq 1$, define $\|X\|_q = (\mathbb{E}(|X|^q))^{1/q}$. For a vector $u = (u_1,\ldots, u_d)^\top \in \mathbb{R}^d$, define $|u|_0 = \sum_{j=1}^d \1\{u_j \neq 0\}$, $|u|_1 = \sum_{j=1}^d |u_j|$, $|u|_2 = \sqrt{\sum_{j=1}^d u_j^2}$ and $|u|_\infty = \max_{j}|u_j|$. For a matrix $A = (a_{ij})_{i,j=1}^d \in \mathbb{R}^{d \times d}$, define $|A|_1 = \max_{1\leq j\leq d}\sum_{i=1}^d |a_{ij}|$, $|A|_\infty = \max_{1\leq i\leq d}\sum_{j=1}^d |a_{ij}|$, element-wise maximum norm$|A|_{\max} = \max_{i,j}|a_{ij}|$, and the spectral norm $|A|_2 = \sqrt{\lambda_{\max}(A^\top A)}$ where $\lambda_{\max}$ means the largest eigenvalue. Write the $d\times d$ identity matrix as $I_d$. For two sequences of positive numbers $\{a_n\}$ and $\{b_n\}$, define $a_n \lesssim b_n$ if there exists some constant $C>0$ such that $a_n/b_n \leq C$ as $n \rightarrow \infty$, and define $a_n \asymp b_n$ if $a_n \lesssim b_n$ and $b_n \lesssim a_n$; also define $a_n =O(b_n)$ if there exists a positive constant $C>0$ such that $|a_n|\leq C|b_n|$ for all $n$ and $a_n = o(b_n) $ if $\lim_{n\rightarrow \infty} a_n/b_n =0$. For two random variable sequences $X_n$ and $Y_n$, define $X_n = O_\mathbb{P}(Y_n)$ if for any $0<\varepsilon<1$, there exists a constant $C_\varepsilon$ such that $\sup_{n} \mathbb{P}(|X_n |\geq C_\varepsilon |Y_n|)\leq \varepsilon $; and $X_n = o_\mathbb{P}(Y_n)$ if $X_n/Y_n \stackrel{P}{\rightarrow} 0$. We also use $c, C, \ldots$ to denote positive constants whose values may vary from context to context.

\section{Related literature}
\label{section.related_literature}

\textbf{Analysis of non-real-value-supported time series.} There have been numerous attempts in the literature to model time series whose support is not the real axis. To mention a few, the work of \cite{mckenzie1985some, al1987first} introduced the integer-value autoregressive processes, which replaced the conventional scalar multiplication with the so-called ``thinning operators,''  to accommodate integer-value time series. \cite{mckenzie1988some, al1988integer} later extended this approach to develop integer-value moving average processes.  More recent studies on integer-value autoregressive processes include \cite{nastic2016random, zhang2023new, ristic2016binomial, huh2017monitoring,chen2020two},  which introduced random coefficients, new innovation structures, and models incorporating conditional heteroskedasticity. In addition to the aforementioned work, we refer readers to \cite{bourguignon2017inar, weiss2021stationary, fokianos2022statistical, chen2022new, piancastelli2023flexible}, for further exploration of this topic.

Other attempts leveraged generalized linear models, such as the Logistic regression model, to accommodate integer-value time series. Examples include  \cite{liang1989class, MR1965687} that developed logistic regression models for multivariate binary time series, \cite{fokianos2009poisson, christou2014quasi, ahmad2016poisson} introduced the quasi-likelihood inference for generalized count autoregressive models, while \cite{fokianos2004partial} proposed a partial likelihood inference method for modeling the generalized time series.  \cite{zhu2017network} introduced a neural autoregressive model, which was later extended to count-valued time series by \cite{armillotta2024count}. Other nonlinear models for count time series have been studied in \cite{fokianos2011log, fokianos2012count, christou2014quasi, wang2014self, dunsmuir2015generalized, davis2016theory}. More recent and applied works can be found in the literature, such as \cite{hall2018learning, amillotta2022generalized, jia2023latent, pumi2024unit}.

\textbf{Analysis of high-dimensional time series.} Modern-era data often exhibit complex structures. In particular, due to the advancement in data collection techniques, the dimension of the observations can be comparable to, or exceed, the number of observations. The literature has extensively explored high-dimensional real-valued time series, as seen in works by \cite{jentsch2015covariance, zhang2017gaussian, MR4206676, krampe2023structural, MR4718536}, among others. Compared to the rich literature in high-dimensional real-valued time series, to the best of the authors' knowledge, relatively little research has been conducted on high-dimensional time series with non-real number support. Recent examples include \cite{fokianos2020multivariate, guo2023consistency, duker2023high}, and the references therein.

\textbf{Applications to network data.} One potential application of our study lies in the analysis of dynamic networks. Practitioners often assign binary random variables to the edges of network data when performing statistical inference.  Classical approaches assume independence among these random variables, thereby ignoring the potential spatial and temporal dependence across edges. Recent studies have tended to solve such a problem. For example, \cite{jiang2023autoregressive} introduced a first-order autoregressive dynamic network process to introduce temporal dependence, \cite{hanneke2010, krivitsky2014, Leifeld2018} developed temporal exponential family random-graph models, \cite{suveges2023, chang2024autoregressive} respectively proposed autoregressive network models based on logistic regressions and Markov chains. We also refer readers to  \cite{mantziou2023gnar, zhu2025autoregressive}, and references therein.

\section{gbVAR Model and its statistical properties}
\label{sec 2}
This section leverages the work of \cite{jentsch2022generalized} to introduce the gbVAR model for high-dimensional binary-valued vector time series. In addition, it explores several statistical properties of this model. Given a $d\times (p+1)d$ parameter matrix for $1\leq q\leq p$,
\begin{equation}
\label{eq: param P}
    \mathcal{P} :=[\mathcal{A}^{(1)}, \mathcal{A}^{(2)}, \ldots, \mathcal{A}^{(p)}, \mathcal{B}], \text{ where }\mathcal{A}^{(q)} = (\alpha_{kl}^{(q)})_{kl = 1, \ldots, d}\ \text{ and } \mathcal{B} = diag(\beta_1,\cdots, \beta_d)
\end{equation}
is a diagonal matrix with positive definite $\mathcal{B}$, following \cite{jentsch2022generalized}, we define the ``counterpart matrix'' $\mathcal{P}_{|\cdot|}$ as follows: 
\begin{equation*}
    \begin{aligned}
\mathcal{P}_{|\cdot|}=[\mathcal{A}_{|\cdot|}^{(1)}, \mathcal{A}_{|\cdot|}^{(2)}, \ldots, \mathcal{A}_{|\cdot|}^{(p)}, \mathcal{B}],\text{ where }\mathcal{A}_{|\cdot|}^{(q)} = (|\alpha_{kl}^{(q)}|)_{kl = 1, \ldots, d}
    \end{aligned}
\end{equation*}
where $|\alpha_{kl}^{(q)}|$ represents the absolute value of the matrix element $\alpha_{kl}^{(q)}.$ Suppose the following constraints for the  parameters $\alpha_{kl}^{(q)}$ and $\beta_k$ in \eqref{eq: param P}:
\begin{equation}
    \sum_{i =1}^p \sum_{l=1}^d |\alpha^{(i)}_{kl}|+\beta_k = 1.
    \label{eq.constraint_counterpart}
\end{equation} 
With this constraint,  all elements in $\mathcal{P}_{|\cdot|}$ are positive.

Define $P_t\in\mathbb{R}^{d \times d(p+1)}, t\in\mathbb{Z}$ as mutually independent random matrices such that each row of the matrix $P_{t, k \cdot},$   $k = 1, \ldots, d,$   follows multinomial distribution with size of trials 1 and event probabilities $\mathcal{P}_{|\cdot|,k\cdot},$ specifically 
$$
P_{t, k\cdot}= (P_{t,k1},\cdots, P_{t,kd(p+1)}) := \left[a_{t,k\cdot}^{(1)},\ldots, a_{t,k\cdot}^{(p)} , b_{t, k\cdot}\right] \sim Mult(1;\mathcal{P}_{|\cdot|, k\cdot})
$$
with the probability mass function 
\begin{equation*}
\begin{aligned}
    \mathbb{P}(P_{t,k\cdot} = (c_1,\cdots, c_{d(p+1)})) = \beta_k^{c_{kd(p+1)}}\times \prod_{s = 1}^p\prod_{q = 1}^d \vert\alpha^{(s)}_{kq}\vert^{c_{(s - 1)\times d + q}},
\end{aligned}
\end{equation*}
where $c_{j} \in \{0,1\},$ $1\leq j\leq d(p+1).$  We further assume that $P_{t, k_1 \cdot}$ is independent of $P_{t, k_2 \cdot}$ for $k_1\neq k_2.$ We define the matrices $A_t^{(i)}, B_t\in\mathbb{R}^{d\times d}$ such that
$$
A_t^{(i)} = (a^{(i)}_{t,kl})_{k,l= 1, \ldots, d}\quad\text{and}\quad B_t = (b_{t,kl})_{k,l=1, \ldots, d} = diag(b_{t, 11}, \ldots, b_{t, dd}),
$$
with these notations, we have $P_t = \left[A_t^{(1)}, \ldots, A_t^{(p)}, B_t\right].$

Innovations should have special marginal distributions to maintain the binary support of the vector time series. We define a sequence of independent and identically distributed (i.i.d.) innovations $e_t\in\mathbb{R}^d, t \in \mathbb{Z}$ such that each element $e_{t,i}\in\{0,1\}, i = 1,\cdots, d,$ and its marginal distribution is a Bernoulli distribution with parameter $\mu_{e,i} \in (0,1),$ which is of the form 
\begin{equation}
        \mathbb{P}\left(e_{t,i} = 1\right) = \mu_{e,i}\quad \text{and} \quad\mathbb{P}\left(e_{t,i} = 0\right) = 1 - \mu_{e,i},
        \label{eq.binary_e}
\end{equation}
where each $\mu_{e,i} \in (0,1) $ are parameters of the model. From \eqref{eq.binary_e}, we have 
$
\mathbb{E}\left[e_{t,i}\right] =\mu_{e,i}.
$
Furthermore, define the variance-covariance matrix of $e_t$ as $\mathbf{\Sigma}_e= (\sigma_{e, {ij}})_{i,j = 1, \ldots, d},$ we have 
$$
\sigma_{e, ii}=  \mathrm{Var}\left(e_{t,i}\right) = \mu_{e, i}(1-\mu_{e,i}).
$$
In the manuscript, we do not constrain the joint distribution of $e_{t,1},\cdots, e_{t,d},$ so $e_{t,i}$ may have non-zero correlations $\sigma_{e, ij}$ with $e_{t,j}$ for some $i\neq j.$ However, we require $\mu_{e,i}\in (0,1)$ so that $\mathbf{\Sigma}_e$ is positive-definite. We assume that $(e_t)_{t \in \mathbb{Z}}$ are independent of $P_t = \left[A_t^{(1)}, \ldots, A_t^{(p)}, B_t\right].$

With these notations, we are able to define the gbVAR($p$) processes as in Definition \ref{def.gbVAR}.

\begin{definition}[gbVAR($p$) processes]
The time series $(X_t)_{t \in \mathbb{Z}}$ satisfying the following iteration  
\begin{equation}
\label{eq: gbvarp}
    X_t = \sum_{i = 1}^p \left(A_t^{(+, i)} X_{t-i}+A_t^{(-,i)} \1_d\right) +B_t e_t, \quad t \in \mathbb{Z},
\end{equation}
where the parameter matrices $A_t^{(+, i)} = (a_{t,kl}^{(+,i)})_{k,l=1,\ldots, d},$ $A_t^{(-,i)} = (a_{t,kl}^{(-,i)})_{k,l=1,\ldots, d}$ satisfy 
\begin{equation*}
    \begin{aligned}
        a_{t,kl}^{(+,i)}&:= a_{t,kl}^{(i)}\left(\1\{\alpha_{kl}^{(i)}\geq 0\}-\1\{\alpha_{kl}^{(i)}<0\}\right),\\
        a_{t,kl}^{(-,i)}&:= a_{t,kl}^{(i)}\1 \{\alpha_{kl}^{(i)}<0\},
    \end{aligned}
\end{equation*}
and the $X_t$ is defined as a generalized binary vector Autoregressive time series of order $p$ (gbVAR$(p)$). In \eqref{eq: gbvarp}, $\1_d\in\mathbb{R}^d$ denotes the $d$-dimensional vector with all elements 1.
\label{def.gbVAR}
\end{definition}

We take $p = 1,$ where  
$$
X_t = A_t^{(+)}X_{t-1}+A_t^{(-)}\1_d+B_t e_t =
\left[
\begin{matrix}
    A_t^{(+)} &  A_t^{(-)} & B_t
\end{matrix}
\right]
\left[
\begin{matrix}
    X_{t-1}^\top & \1_d^\top & e_t^\top
\end{matrix}
\right]^\top
$$
as an illustrative example to explain why \eqref{eq: gbvarp} leads to a binary process. Since each row of the matrix $[A_t,B_t]$ follows a  Multinomial distribution with size $1$ and probability $[\mathcal{A}_{|\cdot|},\mathcal{B}],$ each row of the composited matrix 
$\left[A_t\ B_t\right]$ only contains one ``1'', and remainders are ``0''. For each row $k,$ if $a_{t,kl}^{(+)} = 1$ and $a_{t,kl}^{(-)} =0$ with $\alpha_{kl}^{(i)}\geq 0,$ then 
$$
X_{t,k} = a_{t,kl}X_{t-1,l} + B_{t,kk}e_{t,k}, 
$$
which has binary support if $X_{t-1,l}\in\{0,1\}.$ Otherwise, if $a_{t,kl}^{(+)} = -1$ and $a_{t,kl}^{(-)} =1$  with $\alpha_{kl}^{(i)}< 0,$ then 
$$
X_{t,k} = a_{t,kl}(1 - X_{t-1,l}) + B_{t,kk}e_{t,k},
$$
which also has binary support if $X_{t-1,l}\in\{0,1\}.$

\begin{remark}
    The reason for setting up the constraint \eqref{eq.constraint_counterpart} is to ensure that each row in the counterpart matrix serving as a valid probability vector of a multinominal distribution, namely
   $$ P_{t,kj}\in\{0,1\}\ \text{and }\sum_{j = 1}^{d(p+1)}P_{t,kj} = 1.$$
\end{remark}

\begin{example}[3-variate gbVAR(1) model] 
\label{ex: 3-variate}
Let $(X_t)_{t \in \mathbb{Z}}$ be from a 3-dimensional gbVAR(1) model
\begin{equation*}
    X_t = A_t^{(+)}X_{t-1}+A_t^{(-)}\1_d +B_te_t,
\end{equation*}
with parameter matrix $\mathcal{P}=[\mathcal{A}, \mathcal{B}]$ and corresponding matrix $\mathcal{P}_{|\cdot|} = [\mathcal{A}_{|\cdot|}, \mathcal{B}]$, where
\begin{equation*}
\begin{aligned}
\mathcal{A} &= \begin{pmatrix}
      0.15    &-0.25  & 0.49\\
-0.19  &0.27  & 0.28\\
 0.17 &-0.37  &0.21
    \end{pmatrix},  \mathcal{A}_{|\cdot|} = \begin{pmatrix}
      |0.15|    &|-0.25|  & |0.49|\\
|-0.19|  &|0.27|  & |0.28|\\
 |0.17| &|-0.37|  &|0.21|
    \end{pmatrix},\\ 
    \text{ and }
    \mathcal{B} &= diag(0.11, 0.26,0.23).
\end{aligned}
\end{equation*}
Suppose that the innovation $(e_t)_{t \in \mathbb{Z}}$ consists of three independent Bernoulli random processes $(e_{t,i})_{t \in \mathbb{Z}}$ with $\mu_{e} = (0.48, 0.52,0.47)^\top$. By extending the original positive parameter matrix $\mathcal{A}$ to negative entries, the gbVAR(1) model is more flexible to describe different scenarios. 
\end{example}

\begin{example}[gbVAR(1) random graph]
In this example, we use a gbVAR(1) process $X_t, t\in\mathbf{Z},$ to model the dynamics (presence or absence) of edges in a sequence of undirected networks with $n = 10$ fixed nodes. The dimension of $X_t$ here is 45, representing $n(n-1)/2$ possible edges for these  networks. The parameter matrix $\mathcal{P} = [\mathcal{A}, \mathcal{B}]$ in the model is 
\begin{align*}
    \mathcal{A}_{i,i-1} = \mathcal{A}_{i,i+1} = 0.3\quad \text{and}\quad \mathcal{B} = diag(0.7, 0.4, \ldots, 0.4, 0.7),
\end{align*}
and the innovation $(e_t)_{t\in \mathbb{Z}}$ follows from independent Bernoulli distribution with $\mu_e = 0.5\1_{d}$.
The resulting network is displayed in Figure \ref{fig: example gbVAR}.
    \begin{figure}[htbp]
        \centering
    \includegraphics[width = \textwidth]{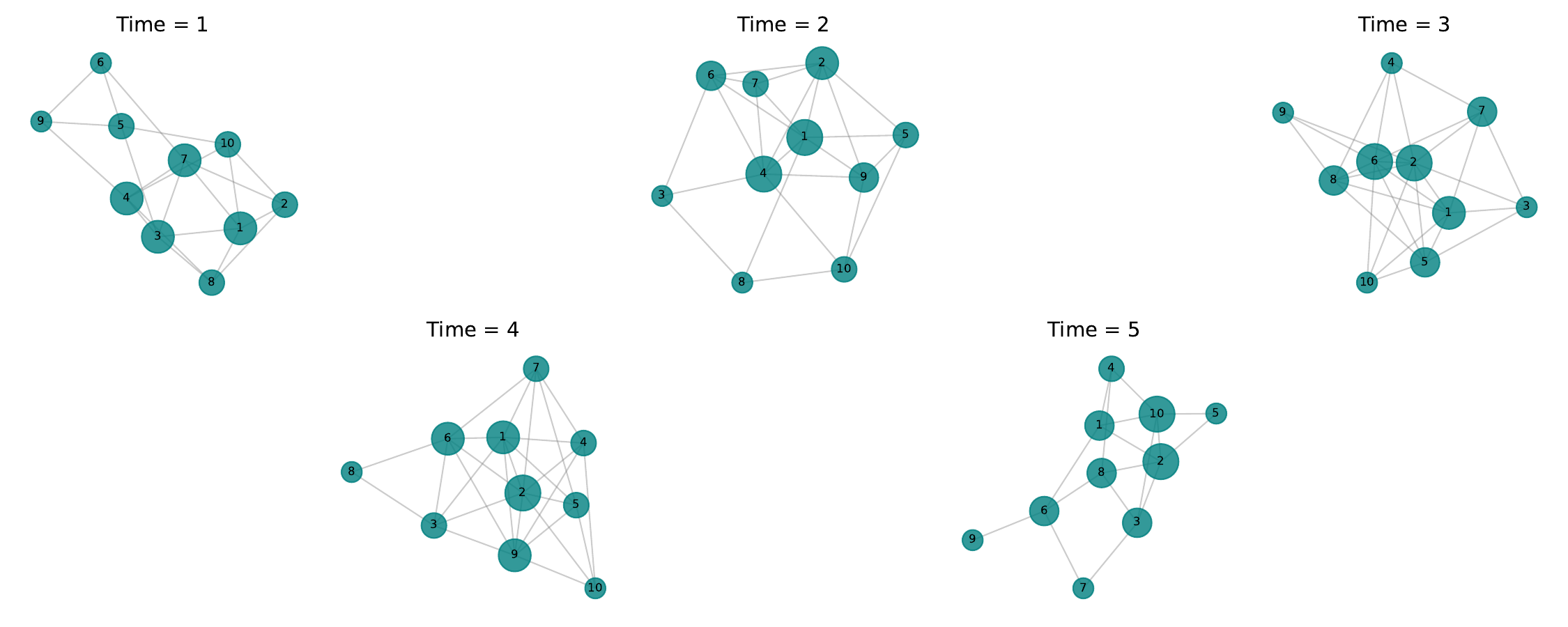}
        \caption{Realization of gbVAR(1) process $(X_t)_{t=1, \ldots, 5}$ with 10 nodes: the size of a node is proportional to its degree after discretization.}
        \label{fig: example gbVAR}
    \end{figure}
\end{example}

The remainder part of this section focuses on statistical properties of a lag-1 gbVAR model. While \cite{jentsch2019generalized, jentsch2022generalized} addressed the estimation of autoregressive coefficient matrices for binary vector autoregressive processes, they did not consider cases where the dimension $d$ is moderately or very large, and did not perform statistical inference. Our work complements their studies by extending the analysis to high-dimensional settings and incorporating sparse structures in the statistical inference. Additionally, we establish simultaneous confidence intervals for the entire autoregressive coefficient matrices.

\subsection{Stochastic properties of gbVAR(1) process}
Our discussion starts with the expectation of $A_t^{(+)}$ and $A_t^{(-)}$. By construction, $\mathbb{E}\left[A_t\right] =\mathcal{A}_{|\cdot|}$ and $\mathbb{E}\left[B_t\right] =\mathcal{B}$. Since $\mathcal{A}_{|\cdot|} = \mathcal{A} +2\mathcal{A}^{(-)} $, the expectations of $A_t^{(+)} $ and $A_t^{(-)}$ become 
$$
\mathbb{E}\left[A_t^{(+)}\right] = \Big[\alpha_{kl}\Big]_{k,l=1, \ldots, d}:=\mathcal{A} \text{ and } \mathbb{E}\left[A_t^{(-)}\right] = \Big[|\alpha_{kl}|\mathbbm{1}\{\alpha_{kl}<0\}\Big]_{k,l=1, \ldots, d}:=\mathcal{A}^{(-)}.
$$
Proposition \ref{thm: MA rep} below  presents a result similar to the moving-averaging representation commonly seen in real-valued vector autoregressive models, as introduced in \cite{MR2839251}. The MA representation of \eqref{eq: gbvarp} can be established under the following condition:
\begin{equation}
\label{eq: condition stationary}
    det(I_d - \mathcal{A}_{|\cdot|}z) \neq 0 \quad \forall z \in \{z \in \mathbb{C}: |z|\leq 1\},
\end{equation}
indicating that $\mathcal{A}_{|\cdot|}$ does not have unit root.

\begin{proposition}[Moving-average representation of gbVAR(1) process ]
\label{thm: MA rep}
Let $(X_t)_{t\in \mathbb{Z}}$ be a $d$-dimensional gbVAR$(1)$ process satisfying \eqref{eq: gbvarp} for all $t\in \mathbb{Z}$ and suppose condition \eqref{eq: condition stationary} holds true. Then the gbVAR(1) model has a gbVMA$(\infty)$-type representation
\begin{equation}
\label{eq: gbVMA}
    X_t = (X_{t1},\ldots, X_{td})^\top = \sum_{i = 0}^\infty \zeta_{t,i-1}\eta_{t-i}, \quad t \in \mathbb{Z},
\end{equation}
where
\begin{equation}
\label{eq: zeta, eta}
    \zeta_{t,i} = \prod_{j=0}^iA_{t-j}^{(+)}, \quad \eta_{t-i}:=A_{t-i}^{(-)}\1_d+B_{t-i}e_{t-i}.
\end{equation}
The convergence result holds true under $L_1$ norm.
\end{proposition}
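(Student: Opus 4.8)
The plan is to mimic the classical backward substitution that turns a VAR(1) into a VMA($\infty$), while carefully tracking the random, sign-split coefficient matrices and exploiting the binary support of the process. First I would rewrite the defining recursion \eqref{eq: gbvarp} for $p=1$ compactly as $X_t = A_t^{(+)}X_{t-1} + \eta_t$, with $\eta_t := A_t^{(-)}\1_d + B_t e_t$ as in \eqref{eq: zeta, eta}, and iterate it $N$ times to obtain the exact identity
\begin{equation*}
    X_t = \sum_{i=0}^{N} \zeta_{t,i-1}\,\eta_{t-i} + \zeta_{t,N}\,X_{t-N-1},
\end{equation*}
where $\zeta_{t,i-1} = \prod_{j=0}^{i-1}A_{t-j}^{(+)}$, with the empty product for $i=0$ understood as $I_d$. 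It then remains to show that the remainder $\zeta_{t,N}X_{t-N-1}$ vanishes in $L_1$ and that the infinite series on the right converges in $L_1$.

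The key structural observations I would establish are the following. (i) Since every $a_{t,kl}^{(i)}\in\{0,1\}$, the entrywise modulus (the matrix of absolute values of the entries, which I denote $|\cdot|$) satisfies $|A_t^{(+)}| = A_t$; combined with the entrywise submultiplicativity $|MN|\le |M|\,|N|$, this gives $|\zeta_{t,i-1}|\le \prod_{j=0}^{i-1}A_{t-j}$ entrywise. (ii) The matrices $P_t$ (hence $A_t$, $A_t^{(\pm)}$, $B_t$) are mutually independent across $t$ and independent of $(e_t)$, so taking expectations of a product of distinct-time factors factorizes, yielding $\mathbb{E}|\zeta_{t,i-1}| \le \mathbb{E}\big[\prod_{j=0}^{i-1}A_{t-j}\big] = \mathcal{A}_{|\cdot|}^{\,i}$ entrywise, using $\mathbb{E}[A_t]=\mathcal{A}_{|\cdot|}$. (iii) Both $X_{t-N-1}$ and each $\eta_{t-i}$ take values in $\{0,1\}^d$ and are thus dominated componentwise by $\1_d$; the claim $\eta_{t,k}\in\{0,1\}$ uses that each row of $[A_t,B_t]$ carries exactly one $1$, a consequence of the one-hot multinomial rows together with constraint \eqref{eq.constraint_counterpart}.

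With these in hand I would bound the remainder componentwise: by the independence of $\zeta_{t,N}$ (a function of $P_t,\ldots,P_{t-N}$) from $X_{t-N-1}$ (a function of innovations up to time $t-N-1$) and $|X_{t-N-1}|\le\1_d$,
\begin{equation*}
    \mathbb{E}\big|(\zeta_{t,N}X_{t-N-1})_k\big| \le \big(\mathbb{E}|\zeta_{t,N}|\,\1_d\big)_k \le \big(\mathcal{A}_{|\cdot|}^{\,N+1}\1_d\big)_k,
\end{equation*}
and an identical computation bounds the $i$-th summand of the series by $(\mathcal{A}_{|\cdot|}^{\,i}\1_d)_k$, so that $\sum_{i\ge 0}\mathbb{E}|(\zeta_{t,i-1}\eta_{t-i})_k| \le ((I_d-\mathcal{A}_{|\cdot|})^{-1}\1_d)_k$. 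The final ingredient is to convert the stationarity condition \eqref{eq: condition stationary} into a spectral statement: $\det(I_d - \mathcal{A}_{|\cdot|}z)\neq 0$ for all $|z|\le 1$ is equivalent to every eigenvalue of the nonnegative matrix $\mathcal{A}_{|\cdot|}$ having modulus strictly below one, i.e.\ $\rho(\mathcal{A}_{|\cdot|})<1$. Gelfand's formula then gives $\mathcal{A}_{|\cdot|}^{\,N+1}\1_d\to 0$, killing the remainder, while $(I_d-\mathcal{A}_{|\cdot|})^{-1}=\sum_{i\ge 0}\mathcal{A}_{|\cdot|}^{\,i}$ is finite, giving absolute and hence $L_1$ convergence of the series.

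I expect the main obstacle to be the bookkeeping in steps (i)–(ii): one must ensure that the entrywise modulus bound interacts correctly with the factorization of expectations, which rests on the independence of the $P_t$ \emph{across} $t$ rather than on any independence of entries \emph{within} a single $P_t$ (the rows are independent, but the entries within a row are not, being one-hot). A secondary point needing care is verifying that the sign-splitting into $A_t^{(+)}$ and $A_t^{(-)}$ preserves both the identity $|A_t^{(+)}|=A_t$ and the binary range of $\eta_t$; both follow from the one-hot structure of the multinomial rows together with constraint \eqref{eq.constraint_counterpart}, but they are the places where the argument genuinely uses the discrete, sign-aware construction of the gbVAR model rather than generic VAR algebra.
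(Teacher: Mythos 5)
Your proof is correct and follows the canonical backward-substitution route: iterate $X_t = A_t^{(+)}X_{t-1}+\eta_t$, dominate entrywise by $\prod_j A_{t-j}$, factorize expectations using independence of the $P_t$ across time to get $\mathcal{A}_{|\cdot|}^{\,i}$, and kill the remainder via $\rho(\mathcal{A}_{|\cdot|})<1$, which is exactly the content of condition \eqref{eq: condition stationary}. This matches the argument the paper relies on (its proof of this proposition is deferred to the online supplement, but the same ingredients—entrywise bounds through $\mathcal{A}_{|\cdot|}$, independence across $t$, and the binary/one-hot structure giving $|\eta_{t}|\leq \1_d$—reappear verbatim in the paper's proof of Proposition \ref{lem: gmc(q)}), and your handling of the two delicate points (expectation factorization across, not within, time indices; binary range of $\eta_t$ from the one-hot rows) is sound.
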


\begin{remark}
The case of $p > 1$ in equation \eqref{eq: gbvarp} can be addressed by  stacking the data and forming a $pd\times 1$ VAR(1) process $Z_t = 
[
\begin{matrix}
    X_t^{\top} & X_{t-1}^\top & \cdots & X_{t -p + 1}
\end{matrix}
]^\top$. By leveraging Proposition \ref{thm: MA rep}, the gbVMA$(\infty)$-type representation can be established, provided that all roots of the characteristic matrix polynomial lie outside the unit circle:
\begin{equation*}
    det(I_d - \mathcal{A}_{|\cdot|}^{(1)}z - \cdots - \mathcal{A}_{|\cdot|}^{(p)}z^p) \neq 0 \quad \forall z \in \{z \in \mathbb{C}: |z|\leq 1\},
\end{equation*}
and the gbVMA$(\infty)$-type representation is 
$$X_t = J\tilde{X}_t = J\left(\sum_{i=0}^\infty \tilde{\zeta}_{t, i-1}\tilde{\eta}_{t-i}\right), \quad t \in \mathbb{Z},$$
where $J:= [\1_d, 0_{d\times dp}]$.
\end{remark}
A common way to build up model identification procedure  in vector time series analysis literature involves the Yule-Walker equation, as introduced in \cite{brockwell1991time}. Proposition \ref{thm.Yule_Walker} leverages similar idea to identify $\mathcal{A}$.

\begin{proposition}[Yule-Walker equations for gbVAR($1$) models] 
    \label{thm.Yule_Walker}
    Let $(X_t)_{t \in \mathbb{Z}}$ be a $d$-dimensional gbVAR$(1)$ process that satisfies \eqref{eq: condition stationary}. Then for all $k \in \mathbb{N}^+$, we have $\bm{\Sigma}^{(k)} = \mathcal{A}\bm{\Sigma}^{(k-1)}$ leading to Yule-Walker equation
\begin{equation}
\label{eq: Yule-Walker}
\begin{aligned}
    \bm{\Sigma}^{(1)} =\mathcal{A}\bm{\Sigma}^{(0)}, \text{ where } \bm{\Sigma}^{(k)} = \mathbb{E}\left[(X_{t+k} - \mu)(X_{t}-\mu)^\top\right], k = 0,1 
\end{aligned} 
\end{equation}

\end{proposition}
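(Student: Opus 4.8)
The plan is to pin down the stationary mean, center the recursion, and then exploit the independence structure inherited from the moving-average representation of Proposition~\ref{thm: MA rep}. First I would determine $\mu := \mathbb{E}[X_t]$. Taking expectations in \eqref{eq: gbvarp} with $p=1$ and using that the ``time-$t$'' randomness $(A_t^{(+)}, A_t^{(-)}, B_t, e_t)$ is independent of $X_{t-1}$, together with $\mathbb{E}[A_t^{(+)}]=\mathcal{A}$, $\mathbb{E}[A_t^{(-)}]=\mathcal{A}^{(-)}$ and $\mathbb{E}[B_te_t]=\mathbb{E}[B_t]\mathbb{E}[e_t]=\mathcal{B}\mu_e$ (the last equality by independence of $B_t$ and $e_t$), I obtain $\mu = \mathcal{A}\mu + \mathcal{A}^{(-)}\1_d + \mathcal{B}\mu_e$. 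Since the spectral radius of $\mathcal{A}$ is dominated by that of $\mathcal{A}_{|\cdot|}$, which is strictly less than $1$ under \eqref{eq: condition stationary}, the matrix $I_d-\mathcal{A}$ is invertible and $\mu$ is well defined. Writing $\tilde X_t := X_t - \mu$ and subtracting this mean identity from \eqref{eq: gbvarp} yields the centered random-coefficient recursion
\begin{equation*}
  \tilde X_t = A_t^{(+)}\tilde X_{t-1} + \xi_t, \qquad \xi_t := (A_t^{(+)}-\mathcal{A})\mu + (A_t^{(-)}-\mathcal{A}^{(-)})\1_d + (B_te_t - \mathcal{B}\mu_e),
\end{equation*}
where $\mathbb{E}[\xi_t]=0$ by the same moment identities.

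Next, the representation \eqref{eq: gbVMA} shows that $\tilde X_s$ is a measurable function of $\{A_u^{(+)}, A_u^{(-)}, B_u, e_u : u \le s\}$, so for any fixed $k\ge 1$ the factor $A_{t+k}^{(+)}$ and the innovation $\xi_{t+k}$---both built only from the ``time-$(t+k)$'' variables carried by $P_{t+k}$ and $e_{t+k}$---are independent of the whole pair $(\tilde X_{t+k-1}, \tilde X_t)$, whose coordinates depend only on variables at times $\le t+k-1$. Multiplying the centered recursion at time $t+k$ on the right by $\tilde X_t^\top$ and taking expectations gives
\begin{equation*}
  \bm{\Sigma}^{(k)} = \mathbb{E}\!\left[A_{t+k}^{(+)}\tilde X_{t+k-1}\tilde X_t^\top\right] + \mathbb{E}\!\left[\xi_{t+k}\tilde X_t^\top\right].
\end{equation*}
I would factor the first term entrywise as $\mathbb{E}[A_{t+k}^{(+)}]\,\mathbb{E}[\tilde X_{t+k-1}\tilde X_t^\top] = \mathcal{A}\,\bm{\Sigma}^{(k-1)}$ by the independence just described, and observe that the second term equals $\mathbb{E}[\xi_{t+k}]\,\mathbb{E}[\tilde X_t^\top]=0$. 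This establishes $\bm{\Sigma}^{(k)}=\mathcal{A}\,\bm{\Sigma}^{(k-1)}$ for every $k\ge 1$, of which the case $k=1$ is the displayed Yule--Walker equation \eqref{eq: Yule-Walker}. No integrability issue arises, since $X_t\in\{0,1\}^d$ is bounded and every matrix above has finite entries.

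The main obstacle, and the only place where the argument departs from the deterministic-VAR template, is justifying the factorization $\mathbb{E}[A_{t+k}^{(+)}\tilde X_{t+k-1}\tilde X_t^\top]=\mathcal{A}\,\bm{\Sigma}^{(k-1)}$: one must check that $A_{t+k}^{(+)}$ is independent of the \emph{joint} distribution of $(\tilde X_{t+k-1},\tilde X_t)$, not merely of each of them marginally, and that the entrywise expectation of the triple matrix product genuinely separates into $\sum_{\ell}\mathbb{E}[(A_{t+k}^{(+)})_{i\ell}]\,\mathbb{E}[(\tilde X_{t+k-1}\tilde X_t^\top)_{\ell j}]$. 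Both facts follow from the explicit time-indexing in \eqref{eq: gbVMA} together with the postulated mutual independence of the matrices $(P_t)_{t\in\mathbb{Z}}$ and their joint independence from $(e_t)_{t\in\mathbb{Z}}$; carrying out this measurability and independence bookkeeping cleanly is where the care is needed.
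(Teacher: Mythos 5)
Your proof is correct: pinning down the stationary mean via the invertibility of $I_d-\mathcal{A}$ (which indeed follows from $\rho(\mathcal{A})\le\rho(\mathcal{A}_{|\cdot|})<1$ under \eqref{eq: condition stationary}), centering the recursion, and then factoring $\mathbb{E}\bigl[A_{t+k}^{(+)}\tilde X_{t+k-1}\tilde X_t^\top\bigr]=\mathcal{A}\,\bm{\Sigma}^{(k-1)}$ and $\mathbb{E}\bigl[\xi_{t+k}\tilde X_t^\top\bigr]=0$ entrywise, using that $(P_{t+k},e_{t+k})$ is independent of the past sigma-field in which $(\tilde X_{t+k-1},\tilde X_t)$ lives (guaranteed by the causal representation of Proposition \ref{thm: MA rep} and the postulated independence structure), is exactly the standard Yule--Walker argument for random-coefficient autoregressions. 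This matches the approach the paper takes (its proof of Proposition \ref{thm.Yule_Walker} is deferred to the online supplement), and the one delicate point you flagged — joint, not merely marginal, independence needed for the triple-product factorization — is handled correctly.
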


\begin{remark}
Notice that the stationary mean of the process is 
$$
\mu := \mathbb{E}\left[X\right] = \left(I-\mathcal{A}\right)^{-1}\left(\mathcal{A}^{(-)}\mathbbm{1}_d +\mathcal{B} \mu_e\right)\neq 0,
$$
which can be estimated by the sample mean $\widehat{\mu}:=\Bar{X} = 1/n \sum_{t=1}^n X_t$. Furthermore, the parameter estimation procedure can be derived by solving \eqref{eq: Yule-Walker} after replacing the population autocovariance matrix $\bm{\Sigma}^{(k)}$ with the sample autocovariance matrix 
\begin{equation}
\label{eq: sample-cov}
\widehat{\bm{\Sigma}}^{(k)} = \frac{1}{n} \sum_{t=1}^{n-k} (X_{t+k} - \Bar{X})(X_{t} - \Bar{X})^\top, \text{ for } k = 0,1.
\end{equation}
\end{remark}

\section{Estimation and inference of gbVAR process}
\label{sec 3}
This section focuses on estimation and inference of the coefficient matrix $\mathcal{A}$  of a gbVAR(1) model that satisfies 
\eqref{eq: gbvarp} with the time lag $p = 1.$ In addition, it establishes a bootstrap algorithm to assist statistical inference through computer simulations.

This paper mainly considers the time lag $p = 1$ situation to decrease the notational complexity and makes the presentation clear. For more general $p \geq 1$ setups, we demonstrate in the online supplement that statisticians may convert a $d$ dimensional gbVAR($p$) model to a new $pd$ dimensional  gbVAR(1) model by rearranging the parameter matrices. This rearrangement allows our estimation and inference methods to be applied in the general cases.

\subsection{Estimating Transition Matrix in VAR Models }

The key difficulty of the estimation lies in the high-dimensionality of data, where the dimension $d$ may exceed the number of observations $n$. In setups where $d$ is substantially smaller than $n$, according to the Yule--Walker equation in Proposition \ref{thm.Yule_Walker}, practitioners may replace the population covariance matrices $\bm{\Sigma}^{(k)}, k = 0,1$ with the sample covariance matrices $\widehat{\bm{\Sigma}}^{(k)},$ and obtain the estimator by inverting $\widehat{\bm{\Sigma}}^{(0)}.$  However, when $d > n,$ the sample covariance matrix $\widehat{\bm{\Sigma}}^{(0)}$ is not invertible, and the solution of the Yule-Walker equations is not well-defined. The literature has offered methods to estimate precision matrices for high-dimensional data, including the works of \cite{cai2011constrained, 10.1111/ectj.12061, 29a36494-561a-3a70-b30a-b00da391c084}, among others. However, such methods often impose sparse structures on the precision matrices of data, which are not compatible with our settings, where the parameter matrix $\mathcal{A}$ is already assumed to be sparse.

This work adopts a two-stage approach for estimation summarized in Algorithm \ref{algo: post-selection} as follows.
\begin{breakablealgorithm}
\caption{Post-selection estimation on $\mathcal{A}$}\label{algo: post-selection}
\begin{algorithmic}[1]
\Require{Observations $X_t \in \mathbb{R}^d$, $t =1, \ldots, n$ from \eqref{eq: gbvarp}. Lasso parameter $\lambda$, the threshold $b_d$.}
\State Calculate the sample covariance matrix $\widehat{\bm{\Sigma}}^{(k)}$ for $k = 0, 1$ by \eqref{eq: sample-cov}.
\State Leverage Lasso to obtain a consistent estimator for $\bm{\alpha}_{i,.}$, each row of the parameter matrix $\mathcal{A},$ as follows: 
\begin{equation*}
\label{eq: lasso}
    \begin{aligned}
        \widehat{\bm{\alpha}}_{i.}(\lambda) = \argmin_{\bm{\omega}\in \mathbb{R}^d}\left\{\frac{1}{2d}|\widehat{\bm{\Sigma}}_{i.}^{(1)\top}- \widehat{\bm{\Sigma}}^{(0)} \bm{\omega}|_2^2 +\lambda |\bm{\omega}|_1\right\},
    \end{aligned}
\end{equation*}
where $\lambda>0$ is a tuning hyper-parameter.
\State Choose the elements in $\widehat{\mathcal{A}}$ with relatively large absolute values greater than $b_d$, formally defined as the following index set:
\begin{equation*}
\label{eq: set S}
    \widehat{S}_i = \{j \in \{1,2,\ldots, d\}: |\widehat{\bm{\alpha}}_{ij}|>b_d\}.
\end{equation*}
\State Fit a least-squares estimation on the chosen elements:
\begin{equation}
\label{eq: least-square}
\tilde{\bm{\alpha}}_{i.} = \argmin_{\bm{\omega}\in\{ \bm{\omega}_j , j\in \widehat{S}_i\}} |\widehat{\bm{\Sigma}}_{i.}^{(1)\top}- \widehat{\bm{\Sigma}}^{(0)} \bm{\omega}|_2^2,\quad\text{where}\quad i = 1,\cdots,d.
\end{equation}
\Statex \textbf{Output:} The post-selection estimator $\tilde{\mathcal{A}} = [\tilde{\bm{\alpha}}_{1.}^\top, \ldots, \tilde{\bm{\alpha}}_{d.}^\top]^\top$. 
\end{algorithmic}
\end{breakablealgorithm}

Compared to more sophisticated linear regression algorithms, least-squares estimation has the advantage of close-form formulas, making  theoretical analysis easier. We demonstrate below that, despite least-squares is applied to the indice sets $\widehat{S}_i$ for different $i,$  after introducing a partial inverse operator $\mathcal{F}.(\cdot),$ the least-squares estimators $\tilde{\bm{\alpha}}_{i.}$ still admit simple close-form formulas \eqref{eq: post estimator}.

\begin{definition}[Partial inverse operator $\mathcal{F}.(\cdot)$]
    Suppose a matrix $\mathbf{A} \in \mathbb{R}^{d \times d}$ and an index set $S = \{1 \leq k_1 < k_2<\cdots < k_v \leq d\}$, do the following operations:
    \begin{enumerate}
        \item Choose the $k_s$, $s = 1, \cdots, v$-th row and column of $\mathbf{A}$ and form a new matrix $\mathbf{A}_S$.
        \item Calculate the Moore-Penrose pseudo-inverse matrix of $\mathbf{A}_S$, and denote the matrix by $\mathbf{Z}$.
        \item Define the matrix $\mathcal{F}_S(\mathbf{A})$ as follows:
        \begin{equation}
            \mathcal{F}_S(\mathbf{A}) = \mathbf{C} \in \mathbb{R}^{d \times d}, \text{ where } \mathbf{C}_{ij} = \begin{cases}
            \mathbf{Z}_{st} & \text{ if } i = k_s, \text{ and } j = k_t,\\
            0 & \text{ otherwise. }
            \end{cases}
        \end{equation}
    \end{enumerate}
\end{definition}
With the help of the partial inverse operator, the post-selection estimator in \eqref{eq: least-square} has the following close-form expression 
\begin{equation}
\label{eq: post estimator}
\tilde{\bm{\alpha}}_{i.} = \widehat{\bm{\Sigma}}_{i.}^{(1)}\widehat{\bm{\Sigma}}^{(0)} \mathcal{F}_{\widehat{S}_{i}}\left(\widehat{\bm{\Sigma}}^{(0)\top} \widehat{\bm{\Sigma}}^{(0)} \right), i = 1,\ldots, d.
\end{equation}
The validity of the post-selection estimator relies on the sparsity of $\mathcal{A}.$ Even though the dimension of the observations can be large, for each $i,$ the size of the index sets $\widehat{S}_{i}$ is not large compared to the sample size due to the sparse assumption, which makes the least-squares estimators well-defined. Notably, by replacing $\mathcal{F}_{\widehat{S}_{i}}(\widehat{\bm{\Sigma}}^{(0)\top} \widehat{\bm{\Sigma}}^{(0)})$ with the inverse matrix $(\widehat{\bm{\Sigma}}^{(0)\top} \widehat{\bm{\Sigma}}^{(0)}  )^{-1}$, the estimator becomes the usual least-squares estimator. However, in the high-dimensional setup, the number of parameters may be larger than the sample size, and $\widehat{\bm{\Sigma}}^{(0)\top} \widehat{\bm{\Sigma}}^{(0)}$ is not invertible. Our work therefore can be recognized as a generalization of least-squares estimator to high-dimensional data setups.

\begin{remark}
\label{rem: partial operator}
If $\mathbf{A}$ is symmetric and there exists a positive constant $c>0$ such that $\lambda_{\min}(\mathbf{A}) > c,$ then $\mathcal{F}_S(\mathbf{A})$ is also symmetric and it satisfies $|\mathcal{F}_S(\mathbf{A})|_1 \leq \frac{|S|}{c}.$ Besides, if a vector $\bm{\alpha} \in \mathbb{R}^{1\times d}$ satisfies $S = \{j : \bm{\alpha}_j \neq 0\},$ then we have 
\begin{equation}
    \bm{\alpha}\mathbf{A}\mathcal{F}_S(\mathbf{A}) = \bm{\alpha}.
    \label{eq.recover_consistency}
\end{equation}
This property motivates the name ``partial inverse operator'' as $\mathcal{F}_S(\mathbf{A})$ behaves like an inverse matrix of $\mathbf{A}$ when applied to the sparse row vector $\bm{\alpha}.$ It will be proved in the Lemma \ref{lem: partial}.
\end{remark}

\subsection{Second-order wild bootstrap}
Bootstrap method has emerged as a powerful and versatile tool for estimating distribution information of sample statistics, especially when identifying a pivotal quantity is difficult. \cite{efron1979bootstrap, wu1986jackknife, stine1985bootstrap}, among others, introduced various bootstrap algorithms for independent data, while later works, including \cite{politis1990circular, politis1994stationary}, claimed the effectiveness of bootstrap algorithms in time series analysis. Examples of recent studies on bootstrap algorithms and their extensions include \citep{buhlmann197Sieve, jentsch2015covariance} for linear processes, \citep{meyer2020extending} for spectral statistics, \citep{kreiss2015bootstrapping} for locally stationary time series. Among them, \cite{shao2010dependent} introduced the dependent wild bootstrap for stationary time series, which can be extended to  different setups, even non-stationary time series, as demonstrated in \citep{zhang2022ridge, zhang2023simultaneous, zhang2023statistical}.

gbVAR processes demonstrate complicated spatial and temporal correlations. Compared to the vector autoregressive process, the parameter matrices in a gbVAR process are random variables instead of numbers,  so the covariance matrix of the post-selection estimator depends not only on the innovations’ covariance matrix but also on the covariance structure induced by the random coefficients. These two sources of covariances inevitably lead to more complex and harder to calculate variance and covariance structure in the estimators---according to \citep{krampe2023structural}, even when the innovations are i.i.d, the autoregressive coefficient estimator of gbVAR processes still has complicated covariance matrices.  Our work leverages the second-order wild bootstrap introduced in \cite{zhang2023statistical} to develop a bootstrap algorithm suitable for analyzing estimators of the gbVAR time series.

The implementation of the second-order wild bootstrap requires a kernel function that satisfies Definition \ref{definition: Kernel function} below.

\begin{definition}[Kernel function]
\label{definition: Kernel function}
    Suppose function $K(\cdot) : \mathbb{R} \rightarrow [0, \infty)$ is symmetric, continuously differentiable, $K(0) = 1, \int_{\mathbb{R}} K(x)dx < \infty$, and K(x) is decreasing on $[0, \infty)$. Besides, define the Fourier transform $\mathcal{F}K(x) = \int_{\mathbb{R}} K(t) \exp(-2\pi i  tx)dt$, here $i = \sqrt{-1}$. Assume $\mathcal{F}K(x) \geq 0$ for all $x \in \mathbb{R}$ and $\int_{\mathbb{R}}\mathcal{F} K(x)dx<\infty$. We call $K$ the kernel function.
\end{definition}

\begin{breakablealgorithm}
\caption{Second-order wild bootstrap}
\label{algo: swb}
\begin{algorithmic}[1]
\Require{Observations $X_t \in \mathbb{R}^d$, $t =1, \ldots, n$ from \eqref{eq: gbvarp}. Lasso parameter $\lambda$, the threshold $b_d$, nominal coverage probability $1-\alpha$, the kernel bandwidth $h_n$, a kernel function $K(\cdot)$, number of bootstrap replicates $B$.}
\State Derive the Lasso estimator $\widehat{\mathcal{A}}$ as in \eqref{eq: lasso}, the set $\widehat{S}_i$, the post-selection estimator $\tilde{\mathcal{A}}$ in \eqref{eq: post estimator}. 

\State Calculate the lag 1 second-order residual
    $$
\widehat{\bm{\Theta}}_t = (X_{t+1} - \bar{X})(X_t -\bar{X})^\top - \tilde{\mathcal{A}}(X_t - \bar{X})(X_t -\bar{X})^\top, \quad t = 1,\ldots, n-1. 
    $$
\State Generate joint normal random variables $e_t$, $t=1, \ldots, n$ such that $\mathbb{E}e_t = 0$ and $\mathbb{E}e_{t_1} e_{t_2} = K\left(\frac{t_1 -t_2}{h_n}\right)$ given the bandwidth $h_n$.

\State Calculate the resample covariance matrix
    $$
\widehat{\bm{\Sigma}}^{(1)*} = \widehat{\bm{\Sigma}}^{(1)} + \frac{1}{n}\sum_{t = 1}^{n-1}\widehat{\bm{\Theta}}_t e_t.
    $$
\State Derive the bootstrapped estimator $\tilde{\mathcal{A}}^*$ and the bootstrapped estimation root as follows:
$$
\tilde{\bm{\alpha}}_{i\cdot}^*= \widehat{\bm{\Sigma}}_{i\cdot}^{(1)*}\widehat{\bm{\Sigma}}^{(0)} \mathcal{F}_{\widehat{S}_i}(\widehat{\bm{\Sigma}}^{(0)\top}\widehat{\bm{\Sigma}}^{(0)}), \quad i = 1,\ldots, d,
$$
and
$$
\bm{\Delta}_{ij}^* = \sqrt{n}(\tilde{\bm{\alpha}}_{ij}^*- \tilde{\bm{\alpha}}_{ij})
$$
here $j \in \widehat{S}_i$, $i = 1, \ldots , d$. Then, calculate
$$
\psi_b^*=\max_{i,j = 1, \ldots, d }|\bm{\Delta}_{ij}^*|.
$$

\State Repeat step for $b = 1, \ldots, B$, and calculate the $1 -\alpha$ sample quantile, i.e., sort $\psi_b^*$ into $\psi_{(1)}^* \leq \psi_{(2)}^* \leq \cdots \leq \psi_{(B)}^*$, then
$$
c^*_{1-\alpha} = \psi^*_{(k)} \text{  s.t.  } k = \min\left\{s = 1, \ldots, B : \frac{s}{B} \geq 1 - \alpha\right\}.
$$

\State (Constructing simultaneous confidence region) The $(1-\alpha) \times 100\%$ simultaneous confidence region for $\mathcal{A}$ is given by the inequality set 
$$
\left\{\mathcal{A}= (\bm{\alpha}_{1\cdot}^\top, \ldots, \bm{\alpha}_{d\cdot}^\top)^\top: \max_{i,j = 1, \ldots, d}\sqrt{n}|\tilde{\bm{\alpha}}_{ij} -\bm{\alpha}_{ij}|\leq c^*_{1-\alpha} \right\}
$$

\State (Performing hypothesis test) Reject the null hypothesis if 
$$
\max_{i,j = 1, \ldots, d}\sqrt{n} |\tilde{\bm{\alpha}}_{ij} -\bm{\alpha}_{ij}|>c_{1-\alpha}^*
$$
\end{algorithmic}
\end{breakablealgorithm}

\begin{remark}

This second-order wild bootstrap introduces a lag-1 ``second-order'' residual component \(\widehat\Theta_t\) in Step 2 and employs a kernel/ bandwidth smoothing in Steps 3 and 4 to capture serial dependence and heteroskedasticity. In practice, the choice of bandwidth \(h_n\) and kernel \(K(\cdot)\) represents the smoothness of the estimated dependence structure with $h_n \asymp\sqrt{n}/(\log^2(d\vee n)\log^9(n))$ discussed in theorem \ref{thm: swb} later.
\end{remark}

\section{Asymptotic analysis of the estimator}
\label{sec 4} 
Analysis of time series data typically relies on short-range dependent conditions, such as the mixing conditions in \cite{MR1312160}. However, the strong mixing conditions are not well-suited for integer-value time series---as illustrated in \citet{wu2005nonlinear}, even a simple Bernoulli shift process can fail to satisfy the mixing condition. \citet{wu2005nonlinear} introduced another condition to describe short range dependence, named the ``physical dependence measure,'' which were later used in the work of \cite{Rho03072015}, \cite{zhang2017gaussian}, \cite{zhang2018asymptotic}, \cite{li2023inference}, among others.  This section demonstrates that the  data generating process  \eqref{eq: gbvarp} obeys the physical dependence condition and establishes the asymptotic properties of the post-selection estimator. Let $\bm{\varepsilon}_t,$ $t\in\mathbb{Z}$  be i.i.d. random variables, define  $\mathcal{F}_{-\infty}^t = (\ldots,\bm{\varepsilon_{t-1}}, \bm{\varepsilon_t}).$ We suppose  $X_t\in\mathbb{R}^d, t\in\mathbb{Z}$ be a stationary process that satisfies equation \eqref{eq: gbvarp}. Furthermore, we assume that each $X_t$ is a function of $\mathcal{F}_{-\infty}^t$ obeying the following form 
\begin{equation}
\label{eq: causal}
    X_t: = \bm{h}(\mathcal{F}_{-\infty}^t) = \bm{h}(\ldots,\bm{\varepsilon}_{t-1}, \bm{\varepsilon}_t),
\end{equation}
where $\bm{h} = (h_1,\ldots, h_d)^\top$ is a $\mathbb{R}^d$-valued measurable function. 

\begin{remark}
  The process \eqref{eq: gbVMA} and equation \eqref{eq: causal} are not mutually exclusive. 
  For example, define  
  $$
    \bm{\varepsilon}_t: = (\ve(A_t^{(+)})^\top, \ve(A_t^{(-)})^\top, \ve(B_t)^\top, e_t^\top)^\top.
$$
  Suppose $\bm{\varepsilon}_t \in \mathbb{R}^{3d^2+d}$, $t\in \mathbb{Z}$ are i.i.d random vectors, according to Theorem \ref{thm: MA rep}, the process \eqref{eq: gbVMA} has the causal form \eqref{eq: causal}, which is already a linear function of $\bm{\varepsilon}_q, q =1,\cdots, t$ that satisfies \eqref{eq: causal}. Our work introduces the representation \eqref{eq: causal} to include more general situations, such as $\bm{\varepsilon}_t$ are general white noises instead of independent random vectors.
\end{remark}

Following  we define
$$    
X_{t,\{0\}}: = \bm{h}(\mathcal{F}_{-\infty}^{t,\{0\}}): = \bm{h}(\ldots,\bm{\varepsilon}_{-1},\bm{\varepsilon}_{0}',\bm{\varepsilon}_{1},\ldots, \bm{\varepsilon}_t)
$$
where $\bm{\varepsilon}_{k}'$ is an i.i.d copy of $\bm{\varepsilon}_{k}$ with copied entry ${A_k^{(+)}}', {A_k^{(-)}}', B'_k,$ and $ e'_k$. Now we can define the functional dependence measure for each component process $(X_{.s}), 1\leq s\leq d$: If $\|X_{ts}\|_q<\infty$ for some $q\geq 1$, define 
$$
    \delta_{t,q,s} = \|X_{ts} - X_{ts,\{0\}}\|_q = \|h_s(\ldots,\bm{\varepsilon}_{t-1}, \bm{\varepsilon}_t) - h_s(\ldots,\bm{\varepsilon}_{-1},\bm{\varepsilon}_{0}',\bm{\varepsilon}_{1},\ldots, \bm{\varepsilon}_t)\|_q,
$$
which measures temporal dependence at lag $t$. Since each component $X_{ts}$ is dependent on the $d$-variate vectors $X_{t-1},X_{t-2}, \ldots$, $\delta_{t,q,s}$ also concerns the cross-sectional dependence. To account for the dependence in the process $X_{.s}$ we define $q$-th dependence adjusted moment (DAM) of the process 
\begin{equation}
\label{eq: GMC_q}
    \begin{aligned}
      \|X_{.s} \|_q: = \sup_{r\geq 0}\rho^{-r}\sum_{t = r}^\infty \|X_{ts} - X_{ts}^{(0)} \|_q <\infty
    \end{aligned}
\end{equation}
and 
the uniform DAM of the process as 
$$\|X_.\|_q = \max_{1\leq s\leq d}\|X_{.s} \|_q.$$

 We provide an example of high-dimensional time series below, for which we can bound the dependence measure and uniform DAM, a key step to applying the theorems to be stated.
\begin{example}[High-dimensional linear process]
    Let $\xi_{ij}, i, j \in \mathbb{Z}$, be i.i.d. random variables with mean 0, variance 1 and having finite $\tau$-th moment with $\tau>2$; let $A_0,A_1,\ldots,$ be $d \times d$ matrices with real entries such that $\sum_{j=0}^\infty \tr(A_j A_j^\top)<\infty$. Write $\varepsilon_i = ( \xi_{i1},\ldots,  \xi_{ip})^\top.$ Then by Kolmogorov’s three series theorem, the $d$-dimensional linear process
$$
\bm{x}_i  =  \sum_{l =0}^\infty A_l \varepsilon_{i-l}
$$
is well-defined. The above process is a special case with a linear functional $\bm{h}(\cdot)$. Let $A_{l,j}$ be the $j$th row of $A_l$ . Then by Burkholder’s inequality, $$\delta_{i,q,j}: = \|A_{i,j}\varepsilon_0\|_\tau \leq (\tau-1)^{1/2}|A_{i,j}|_2\|\xi_{00}\|_\tau.$$ If there exist $0< \rho  < 1$ and $ K > 0 $ such that $\max_{j\leq p}|A_{i,j}|_2\leq K\rho^i$ hold for all $i \geq 0$, then we have 
$$
\|\bm{x}_.\|_\tau = \max_{1\leq j\leq p} \sup_{m\geq 0}\rho^{-m}\sum_{i = m}^\infty \delta_{i,q,j} \leq \frac{K(\tau-1)^{1/2}\|\xi_{00}\|_\tau}{1-\rho},
$$where the constant $K$ only depends on $\tau$.
\end{example}

The first proposition involves \eqref{eq: gbVMA} satisfying geometric moment contraction (GMC(q)). And the framework \eqref{eq: causal} is suitable for two classical tools for dealing with dependent sequences, martingale approximation and $m$-dependence approximation. 
\begin{proposition}[GMC($q$) property of gbVAR(1) process]
\label{lem: gmc(q)}
Suppose the observations $X_t, t\in\mathbb{Z}$ stem from a gbVAR(1) process of the form \eqref{eq: gbvarp}, and there exists  a constant $0<\rho<1$ such that 
$$
    \max_{1 \leq k\leq d} \left(\sum_{l =1}^d \mathcal{A}_{|\cdot |}^{kl}\right)^{1/q}\leq \rho  <1.
$$
Then $X_t$ satisfies the GMC(q) condition.
\end{proposition}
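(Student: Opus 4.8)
The plan is to establish the GMC($q$) property by a direct coupling argument that exploits the multinomial, at-most-one-nonzero-entry-per-row structure of the random coefficient matrices. First I would introduce two copies of the process: let $(X_t)$ and $(\tilde X_t)$ both solve the gbVAR(1) recursion \eqref{eq: gbvarp} driven by the \emph{same} innovation sequence $\{A_t^{(+)}, A_t^{(-)}, B_t, e_t\}_{t\ge 1}$, but started from coupled initial vectors $X_0,\tilde X_0\in\{0,1\}^d$ (each a function of an independent copy of the past innovations $\{\bm{\varepsilon}_s\}_{s\le 0}$). Subtracting the two recursions, the additive terms $A_t^{(-)}\mathbbm{1}_d$ and $B_t e_t$ cancel, and iterating yields the telescoping identity
\[
X_n - \tilde X_n = \big(A_n^{(+)}A_{n-1}^{(+)}\cdots A_1^{(+)}\big)(X_0-\tilde X_0) =: M_n\, w, \qquad w:= X_0-\tilde X_0 .
\]
Since $w$ is built from innovations indexed $\le 0$ while $M_n$ depends only on innovations indexed $\ge 1$, the two are independent, the coupling is well posed, and $|w|_\infty\le 1$ holds deterministically because the process is binary.

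Next I would pass to the nonnegative ``absolute'' matrices $A_t:=|A_t^{(+)}|$ (entrywise) and set $N_n:=A_nA_{n-1}\cdots A_1$. The triangle inequality for matrix products gives the entrywise domination $|M_n w|\le N_n|w|$. Here the key structural fact enters: by the multinomial construction each row of $[A_t\,B_t]$ carries exactly one unit entry, so each row of $A_t$ has at most one nonzero entry (equal to $1$); this property is preserved under products, hence each row of $N_n$ is either identically zero or a single $1$. Consequently the $k$-th coordinate satisfies, almost surely and pointwise,
\[
\big(N_n|w|\big)_k \le |w|_\infty\,\mathbbm{1}\{\text{row } k \text{ of } N_n \neq 0\},
\]
and the indicator is unaffected by raising to the $q$-th power. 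Taking expectations and using $|w|_\infty\le 1$ together with $\mathbb{E}[A_t]=\mathcal{A}_{|\cdot|}$ and independence across $t$,
\[
\mathbb{E}\,|X_{n,k}-\tilde X_{n,k}|^q \le \mathbb{E}\,\mathbbm{1}\{\text{row } k \text{ of } N_n \neq 0\} = \big(\mathbb{E}[N_n]\,\mathbbm{1}_d\big)_k = \big(\mathcal{A}_{|\cdot|}^{\,n}\,\mathbbm{1}_d\big)_k \le \Big(\max_{k}\sum_{l}|\alpha_{kl}|\Big)^{n} \le (\rho^q)^n = \rho^{qn},
\]
where the penultimate step uses submultiplicativity of the max-row-sum norm and the final step invokes the hypothesis $\big(\sum_l|\alpha_{kl}|\big)^{1/q}\le\rho$. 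Thus $\|X_{n,k}-\tilde X_{n,k}\|_q\le\rho^{\,n}$ uniformly in $k$, which is exactly the desired geometric contraction in $L_q$.

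I expect the main obstacle to be conceptual rather than computational: recognizing that the at-most-one-nonzero-per-row structure is precisely what collapses the $q$-th moment of the random matrix product $\prod_t A_t^{(+)}$ into a survival indicator, thereby bypassing the delicate top-Lyapunov-exponent / $q$-th-moment-operator analysis that a general random-coefficient VAR would demand. The secondary points to handle carefully are (i) verifying that the additive innovation terms genuinely cancel in the difference recursion, so that only the product of the $A_t^{(+)}$ survives; (ii) justifying the pointwise domination $(N_n|w|)_k\le|w|_\infty\,\mathbbm{1}\{\text{row }k\neq 0\}$, which rests on each row of $N_n$ carrying at most one unit entry; and (iii) confirming that the exponent $1/q$ in the hypothesis is calibrated exactly so that $\max_k\sum_l|\alpha_{kl}|\le\rho^q$ produces the clean contraction rate $\rho$ in $L_q$. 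Finally, since $\max_k\sum_l|\alpha_{kl}|\le\rho^q<1$ also forces the spectral radius of $\mathcal{A}_{|\cdot|}$ below one, condition \eqref{eq: condition stationary} holds and the stationary process from Proposition \ref{thm: MA rep} is well defined, so the coupling is legitimate.
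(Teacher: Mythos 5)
Your proposal is correct in substance, and it takes a genuinely different route from the paper. The paper first invokes the gbVMA$(\infty)$ representation \eqref{eq: gbVMA}--\eqref{eq: zeta, eta} from Proposition \ref{thm: MA rep}, writes $X_{ts}-X_{ts,\{0\}}$ as an infinite sum of terms $(\zeta_{t,i-1}-\zeta_{t,i-1}')\eta_{t-i}$ plus $\zeta_{t,t-1}(\eta_0-\eta_0')$, and bounds each term by multiplying $L^q$ norms of $|\cdot|_\infty$ of the independent matrix factors, where the crucial inequality $\||A_{t-j}^{(+)}|_\infty\|_q\le\max_k(\sum_l\mathcal{A}_{|\cdot|}^{kl})^{1/q}=\rho_A$ rests on the $0$--$1$ entries of $A_t$. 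You bypass the MA representation entirely: the additive terms cancel in the difference recursion, the difference telescopes into a product of $A_t^{(+)}$'s acting on a bounded initial discrepancy, and the at-most-one-unit-entry-per-row structure (preserved under products) collapses the $q$-th moment of that product into a row-survival probability, which equals $(\mathcal{A}_{|\cdot|}^n\mathbbm{1}_d)_k\le\rho^{qn}$. Both proofs exploit the same calibration of the exponent $1/q$ (indicator-valued entries make moments power-free), but your argument is shorter, avoids infinite series and the constants $\delta_{A,q},\rho_B,\rho_e$, and yields the cleaner bound $\|X_{n,k}-\tilde X_{n,k}\|_q\le\rho^n$; the paper's MA-based route has the side benefit of reusing machinery (Proposition \ref{thm: MA rep}) already established.

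One caveat you should address explicitly: your coupling replaces the \emph{entire} past $\{\bm{\varepsilon}_s\}_{s\le 0}$ (the classical Wu--Shao GMC coupling), whereas the condition the paper actually verifies, and uses downstream, is the dependence-adjusted-moment condition \eqref{eq: GMC_q}, built on the physical dependence measure $\delta_{t,q,s}=\|X_{ts}-X_{ts,\{0\}}\|_q$ in which only $\bm{\varepsilon}_0$ is replaced by an independent copy. This is not a gap in the idea, because your telescoping applies verbatim to that coupling: for $t\ge 1$ both $X_t$ and $X_{t,\{0\}}$ obey the same recursion with the same coefficient matrices (these depend only on $\bm{\varepsilon}_s$, $s\ge 1$), so $X_t-X_{t,\{0\}}=M_t\,(X_0-X_{0,\{0\}})$ with $|X_0-X_{0,\{0\}}|_\infty\le 1$, and your moment bound gives $\delta_{t,q,s}\le\rho^{t}$ (note the bound needs no independence between $M_t$ and the initial discrepancy, since the indicator majorant does not involve $w$). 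You should then add the one-line conclusion that geometric decay of $\delta_{t,q,s}$ yields \eqref{eq: GMC_q}, since $\sup_{r\ge 0}\rho^{-r}\sum_{t\ge r}\rho^{t}=(1-\rho)^{-1}<\infty$. With that sentence added, your proof is complete and fully equivalent in strength to the paper's.
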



\begin{remark}
The dependence parameter $\rho$ in equation \eqref{eq: GMC_q} is restricted to $(0,1)$ to ensure the weak stationarity of DAM. If $\rho > 1,$ the variance of process $X_t$ may diverge to infinity and measures, such as  \cite{phillips1988testing} and \cite{dejong1992power} proposed tests for detecting the presence of a unit root in general time series models. \cite{zhang2014bootstrapping} considered the special case which imposes the stronger GMC condition $\max_{1\leq s\leq d} \sum_{t=m}^\infty \delta_{t,q,s} \leq C\rho^m$ with some constant and $\rho \in(0,1).$

    In \cite{zhang2017gaussian}, the condition can be extended to exponential-type tail, such as sub-Gaussian (exponential) innovation. We can also define an adapted Orlicz norm if there exists a positive constant $\nu>0$ such that $$\|X_{\cdot}\|_{\psi_\nu}:=\sup_{q\geq 2} q^{-\nu}\|X_\cdot\|_q < \infty.$$
\end{remark}
\subsection{Bernstein-type Inequality under Dependence}
In this section, we introduce a univariate Bernstein-type inequality for the process in \eqref{eq: causal} with $d =1$, the assumption of boundedness and finite second DAM. The well-known Bernstein inequality (\cite{bernstein1946theory}) provides an exponential concentration result for sums of uniformly bounded independent random variables. Let $X_1, \ldots,X_n$ be independent random variables with $\mathbb{E}X_i =0$ and $\sigma_i^2 = \var(X_i)<\infty$, and $|X_i|\leq M$, for all $i$. Denote $S_n = \sum_{i=1}^n X_i$. Then, for any $x >0$, we have
\begin{equation}
    \label{eq: bernstein}
    \mathbb{P}(X_n\geq x) \leq \exp\left(-\frac{x^2}{2\sum_{i=1}^n \sigma_i^2 +2Mx/3}\right).
\end{equation} 
It is well known that the Bernstein-type inequality also holds if $X_i$ has finite exponential moments.

Now, we consider the Bernstein-type inequality for both dependent data and random variables with finite geometric moments. The exponential inequality \eqref{eq: exp univ} is characterized by the DAM $\|X_.\|_2$, the uniform bound $M$, and the dependence parameter $\rho$ which quantifies the constants $C_1$ and $C_2$ in the inequality.

\begin{proposition}[Theorem 2.1 in \cite{zhang2021robust}]
\label{thm: robust}
Let $\left(X_t\right)$ be the process in one-dimensional \eqref{eq: causal} and let $S_n=\sum_{t=1}^n X_t - \mathbb{E}X_t$. Assume $\left|X_t\right| \leq M$ for all $t$, and $\|X.\|_2<\infty$ for some $\rho \in (0,1)$. Also assume $n \geq 4 \vee\left(\log \left(\rho^{-1}\right) / 2\right)$. For any $x>0$, we have the Bernstein-type inequality: 
\begin{equation}
\label{eq: exp univ}
\mathbb{P}\left(S_n \geq x\right) \leq \exp \left\{-\frac{x^2}{4 C_1(n\|X.\|_2^2+M^2)+2 C_2 M(\log n)^2 x}\right\},
\end{equation}
where $C_1=2 \max \{(e^4-5) / 4,[\rho(1-\rho) \log (\rho^{-1})]^{-1}\} \cdot(8 \vee \log (\rho^{-1}))^2, C_2=$ $\max \{(c \log 2)^{-1},[1 \vee(\log (\rho^{-1}) / 8)]\}$ with $c=[\log (\rho^{-1}) / 8] \wedge \sqrt{(\log 2) \log (\rho^{-1}) / 4}$.
\end{proposition}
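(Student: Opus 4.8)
The plan is to reduce the dependent sum to an essentially independent one through an $m$-dependence approximation, and then to iterate this reduction across dyadic scales so that the geometric decay encoded in the dependence-adjusted moment $\|X.\|_2$ is fully exploited. Of the two classical devices mentioned above (martingale and $m$-dependence approximation), I would use $m$-dependence approximation because it interacts cleanly with the uniform bound $M$. For a block length $m$, let $X_t^{(m)}$ be the coupled version of $X_t$ obtained by replacing every innovation $\bm\varepsilon_s$ with $s \le t-m-1$ by an independent copy; since $X_t = \bm{h}(\mathcal{F}_{-\infty}^t)$ with $|X_t|\le M$ almost surely and the copies share the same law, we retain $|X_t^{(m)}|\le M$ almost surely, while $\{X_t^{(m)}\}_t$ is $m$-dependent. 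The defining property of the DAM \eqref{eq: GMC_q} gives the crucial variance bound
\[
\|X_t - X_t^{(m)}\|_2 \le \sum_{k=m+1}^\infty \delta_{k,2} \le \rho^{\,m+1}\|X.\|_2 ,
\]
where $\delta_{k,2}$ denotes the lag-$k$ functional dependence measure; thus replacing one coordinate block removes all dependence beyond lag $m$ at the cost of a variance that is exponentially small in $m$.

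Next I would telescope $X_t-\mathbb{E}X_t$ across dyadic scales $m_l = 2^l$, $l=0,1,\ldots,L$ with $L \asymp \log_2 n$. Writing $Y_t^{(l)} := X_t^{(m_l)} - X_t^{(m_{l-1})}$ for $l\ge 1$ and $Y_t^{(0)} := X_t^{(m_0)}$, one obtains
\[
S_n = \sum_{l=0}^{L}\Big(\sum_{t=1}^n \big(Y_t^{(l)} - \mathbb{E}Y_t^{(l)}\big)\Big) + R_n ,
\]
where the layer-$l$ summand is $m_l$-dependent, uniformly bounded by $2M$, and has per-term variance $\lesssim \rho^{2 m_{l-1}}\|X.\|_2^2$ for $l\ge 1$, while the tail remainder $R_n$ collects the contribution beyond scale $m_L$. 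Choosing $m_L \asymp \log n /\log(\rho^{-1})$ forces $\|R_n\|_2^2 \lesssim n\rho^{2m_L}\|X.\|_2^2$ to be polynomially small in $n$, so that $R_n$ is absorbed by a crude Chebyshev bound and contributes only to the lower-order constants.

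For each fixed layer I would pass from $m_l$-dependence to genuine independence by partitioning $\{1,\ldots,n\}$ into $m_l+1$ interlaced subsequences, each spaced by more than $m_l$ and hence consisting of independent, mean-zero, $2M$-bounded variables; the classical Bernstein inequality \eqref{eq: bernstein} applies to each subsequence, and the subsequences are recombined either by Hölder's inequality on the moment generating functions or by a union bound, which inserts the factor $m_l+1$ into the effective range. A union bound over the $L\asymp\log n$ layers, with the deviation budget split as $x=\sum_l x_l$ proportionally to each layer's variance-plus-range contribution, then yields a sum of per-layer Bernstein bounds plus the negligible remainder term. Summing these, the variances $\rho^{2m_{l-1}}$ form a convergent geometric series dominated by the base layer, which collapses the total variance proxy to order $n\|X.\|_2^2 + M^2$ (the boundary and base contributions supplying the additive $M^2$), while the largest scale $m_L\asymp\log n$ combined with the union over $\asymp\log n$ layers produces the $(\log n)^2$ factor multiplying $M$ in the range term.

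The main obstacle is the bookkeeping of constants through this multi-scale union bound: one must verify that the geometric sums over $l$, whose ratio is a power of $\rho$, converge to explicit constants, which is the origin of the $[\rho(1-\rho)\log(\rho^{-1})]^{-1}$ and $(8\vee\log(\rho^{-1}))^2$ terms in $C_1$, while the dyadic counting of scales is responsible for the $\log 2$ entering $C_2$ through $c$. The delicate point is to allocate $\{x_l\}$ and choose $\{m_l,L\}$ simultaneously so that the variance proxy stays at the optimal order $n\|X.\|_2^2+M^2$ rather than inflating by $\log n$, the range term is exactly $M(\log n)^2$, and neither $R_n$ nor the exponentially small higher-scale variances degrade the leading exponent; the hypothesis $n\ge 4\vee(\log(\rho^{-1})/2)$ is precisely what guarantees $m_L\ge 1$ and keeps these geometric series in their convergent regime.
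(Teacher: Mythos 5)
First, a point of reference: the paper does not prove this proposition at all --- it is imported verbatim as Theorem 2.1 of \cite{zhang2021robust}, so there is no internal proof to compare against. Judged on its own terms, your strategy is the right one and matches the known route to inequalities of exactly this shape: a multi-scale $m$-dependence approximation with dyadic windows $m_l = 2^l$, a Bernstein bound per layer for $m$-dependent sums via interlaced subsequences, and a deviation budget $x = \sum_l x_l$. The fingerprints of such a dyadic scheme ($\log 2$ and $\log(\rho^{-1})/8$ inside $c$, the $(\log n)^2$ range factor) are visible in the stated constants, and your coupling construction and the variance bound $\|X_t - X_t^{(m)}\|_2 \le \rho^{m+1}\|X.\|_2$ are correct consequences of the DAM definition.

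There are, however, two genuine gaps. The decisive one is your treatment of the tail remainder $R_n$ ``by a crude Chebyshev bound \ldots absorbed by the lower-order constants.'' The target is a pure exponential inequality, valid for every $x>0$, with no additive slack and no multiplicative prefactor. Chebyshev yields $\mathbb{P}(|R_n| \ge x_R) \le \|R_n\|_2^2/x_R^2$, which decays only polynomially in $x_R$; once $x \asymp n$ the claimed bound is of size $e^{-cn}$ while the Chebyshev term is of size $n^{-\gamma}$ for a fixed $\gamma$, so the remainder term dominates and cannot be ``absorbed into constants'' inside an exponential. The remainder must itself receive an exponential bound: for instance, write $\mathbb{E}_0(X_t - X_t^{(m_L)})$ as the sum of its projections onto innovations at lags beyond $m_L$, reorganize $R_n$ as a sum of martingale differences that are simultaneously bounded and of geometrically small $L^2$ norm ($\lesssim \rho^{m_L}\|X.\|_2$), and apply a Freedman/Azuma-type inequality; this step is where the bounded-plus-small-variance structure is genuinely needed and is not replaceable by a second-moment bound. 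The second, more minor, gap is prefactor management: recombining the $m_l+1$ interlaced subsequences and then the $\asymp \log n$ layers by union bounds leaves a multiplicative prefactor of order $(\max_l m_l)\cdot \log n$ in front of the exponential, which the stated inequality (coefficient exactly $1$ in front of $\exp$) does not have. You mention H\"older's inequality on moment generating functions as an alternative --- that is the device that avoids per-layer prefactors and should be used throughout, together with a small-$x$/large-$x$ case analysis (small $x$ makes the claimed bound near-trivial, large $x$ lets prefactors be absorbed by degrading $C_1, C_2$). As written, the proposal produces an inequality with additive polynomial terms and multiplicative prefactors, which is strictly weaker than, and of a different form from, the stated result.
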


\begin{remark}
    If $\|X_{\cdot}\|_2 = O(1)$, compared with the classical Bernstein inequality \eqref{eq: bernstein} for independent processes, the result provides an additional $(\log n)^2$ order in the sub-exponential-type tail. Theorem 6 in \cite{adamczak2008tail} provides a slightly sharper inequality involving only an additional $\log n$ order:
    $$\mathbb{P}\left(S_n \geq x\right) \leq C \exp \left\{-\frac{1}{C} \min \left(\frac{x^2}{n \nu^2}, \frac{x}{\log n}\right)\right\},$$
    where $S_n = \sum_{i=1}^n X_i$, $X_i = \sum_{i=1}^n f (Y_i)$, $(Y_i)$ is a Markov chain satisfying some minorization condition, $f$ is a bounded function, and $\nu^2 = \lim_{n\rightarrow\infty} \var(S_n/\sqrt{n})$. The result is as sharp as Theorem 2 in \cite{merlevede2009bernstein} up to a multiplicative constant in the exponential function:
    $$
\mathbb{P}\left(S_n \geq x\right) \leq \exp \left\{-\frac{C x^2}{n \nu^2+M^2+M(\log n)^2 x}\right\},
$$
where $(X_i)$ is a strong mixing process with mean zero, bounded by $M$.
\end{remark}
Denote $\widehat{\mu} = \frac{1}{n}\sum_{i=1}^n X_i$, and we introduce the main assumptions required in studying the properties of the mean estimator $\widehat{\mu} = ( \widehat{\mu}_1, \ldots, \widehat{\mu}_d)^\top$ which are displayed in Appendix.
\begin{itemize}
    \item [A1.] $n \geq 4 \vee\left(\log \left(\rho^{-1}\right) / 2\right)$ and $d \geq 3$.
    \item [A2.] $\sigma_2 := \max_{1\leq j\leq d}\sqrt{\var (X_{ij})}< \infty$.
    \item [A3.] $n, d\to\infty$, and $\log n\sqrt{\frac{\log d}{n}}\to 0$
\end{itemize}

\subsection{Consistency and Gaussian approximation}
This section establishes theoretical justifications, including consistency, Gaussian approximation, and the validity of the bootstrap algorithm, for the proposed estimators. We define the index set representing the support of the parameters $\bm{\alpha}_{ij}$ as follows:
$$
S_i = \{j \in \{1,2, \ldots, d\}: \bm{\alpha}_{ij} \neq 0\}.
$$
Before presenting the theoretical results, we introduce the frequently used assumptions in this section.

\noindent\textbf{Assumptions: }
\begin{itemize}
    \item [B1.] $\max_{i=1,\ldots ,d} |S_i| = O(1)$ and $\max_{i=1,\ldots ,d} |S_i| > 0$.
    \item [B2.] There exits a positive constant $c>0$ such that the minimum eigenvalue of $\bm{\Sigma}^{(0)} = \text{Cov}(X_t, X_t)$ is greater than $c$.
    \item [B3.]There exits a positive constant $c>0$ such that the minimum eigenvalue of $\sum_{\ell=0}^\infty\text{Cov}(X_t, X_{t+\ell})$ is greater than $c$.
\end{itemize}

Theorem \ref{thm: alpha} establishes the model selection consistency of the Lasso estimator. Based on this result, the partial inverse operator can accurately recover the sparse structure of the parameters $\bm{\alpha}_{ij}$ according to \eqref{eq.recover_consistency}, thereby justifying the use of the post-selection least-squares estimators.

\begin{theorem}
\label{thm: alpha}
    Suppose $(X_t)$ satisfies \eqref{eq: causal} form and conditions in Proposition \ref{lem: gmc(q)}. Under conditions A1 to A3, B1 and B2, $\lambda \asymp (\log n)\sqrt{\frac{\log d}{n}}$, we have
    \begin{equation}
        \max_{i = 1,\ldots,d} |\hat{\bm{\alpha}}_{i.} - \bm{\alpha}_{i.}|_1 = O_\mathbb{P}(\lambda) \text{ and } \max_{i = 1,\ldots,d} |\hat{\bm{\alpha}}_{i.} - \bm{\alpha}_{i.}|_2 = O_\mathbb{P}(\lambda) 
    \end{equation}
where $\lambda$ is the tuning parameter defined in \eqref{eq: lasso}. In particular, we have 
     \begin{equation}
     \label{eq: index set}
      \mathbb{P}\left(\bigcup_{i = 1}^d \{\widehat{S}_{i} \neq S_i\}\right) =o(1)  
     \end{equation}
\end{theorem}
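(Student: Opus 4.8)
The plan is to read the optimization in Step~2 of Algorithm~\ref{algo: post-selection} as a high-dimensional regression whose population target is fixed by the Yule--Walker identity of Proposition~\ref{thm.Yule_Walker}: since $\bm{\Sigma}^{(1)}=\mathcal{A}\bm{\Sigma}^{(0)}$ and $\bm{\Sigma}^{(0)}$ is symmetric, the true row obeys $(\bm{\Sigma}^{(1)}_{i\cdot})^\top=\bm{\Sigma}^{(0)}\bm{\alpha}_{i\cdot}^\top$ exactly, so $\bm{\alpha}_{i\cdot}^\top$ is the noiseless solution of the ``design'' $\bm{\Sigma}^{(0)}$ against the ``response'' $(\bm{\Sigma}^{(1)}_{i\cdot})^\top$. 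The whole analysis then reduces to two ingredients evaluated at the truth: a bound on the empirical gradient and a restricted-eigenvalue (compatibility) property of the empirical design $\widehat{\bm{\Sigma}}^{(0)}$.

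First I would establish the uniform entrywise concentration
$$\max\!\Big(\big|\widehat{\bm{\Sigma}}^{(0)}-\bm{\Sigma}^{(0)}\big|_{\max},\,\big|\widehat{\bm{\Sigma}}^{(1)}-\bm{\Sigma}^{(1)}\big|_{\max}\Big)=O_\mathbb{P}\!\Big((\log n)\sqrt{\tfrac{\log d}{n}}\Big)=O_\mathbb{P}(\lambda).$$
Because $(X_t)$ is binary, every summand $X_{tj}X_{tk}$ and $X_{t+1,j}X_{tk}$ defining an entry of $\widehat{\bm{\Sigma}}^{(k)}$ is uniformly bounded, and the GMC$(q)$ property of Proposition~\ref{lem: gmc(q)} yields finite dependence-adjusted moments for these products. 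I would therefore apply the Bernstein-type inequality of Proposition~\ref{thm: robust} to each centred entry, handle the sample-mean contribution (the $\bar X_j\bar X_k-\mu_j\mu_k$ terms) by the same inequality under A2--A3, and take a union bound over the $O(d^2)$ entries and the $d$ rows; the factor $\sqrt{\log d}$ comes from the union bound and the extra $(\log n)$ is precisely the dependence penalty in \eqref{eq: exp univ}. Assumption A3 guarantees $\lambda\to0$.

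Next I would turn this into the two Lasso ingredients. The residual at the truth splits as $(\widehat{\bm{\Sigma}}^{(1)}_{i\cdot})^\top-\widehat{\bm{\Sigma}}^{(0)}\bm{\alpha}_{i\cdot}^\top=(\widehat{\bm{\Sigma}}^{(1)}_{i\cdot}-\bm{\Sigma}^{(1)}_{i\cdot})^\top-(\widehat{\bm{\Sigma}}^{(0)}-\bm{\Sigma}^{(0)})\bm{\alpha}_{i\cdot}^\top$ (the population part cancels by Yule--Walker), so with $|\bm{\alpha}_{i\cdot}|_1\le1$ from the constraint \eqref{eq.constraint_counterpart} and $|S_i|=O(1)$ from B1 its $\ell_\infty$ norm is $O_\mathbb{P}(\lambda)$; feeding this through $\widehat{\bm{\Sigma}}^{(0)}$ bounds the gradient and justifies the choice $\lambda\asymp(\log n)\sqrt{\log d/n}$. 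For the curvature, B2 gives $\lambda_{\min}(\bm{\Sigma}^{(0)})>c$, which transfers to a compatibility condition for $\widehat{\bm{\Sigma}}^{(0)}$ on the usual cone via the perturbation bound $|\bm{v}^\top(\widehat{\bm{\Sigma}}^{(0)}-\bm{\Sigma}^{(0)})\bm{v}|\le|\widehat{\bm{\Sigma}}^{(0)}-\bm{\Sigma}^{(0)}|_{\max}|\bm{v}|_1^2$; crucially, because $|S_i|=O(1)$ the active submatrix $\bm{\Sigma}^{(0)}_{S_iS_i}$ stays well conditioned, which is what collapses the generic $s\lambda$ and $\sqrt{s}\lambda$ oracle rates to $O_\mathbb{P}(\lambda)$. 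The standard Lasso oracle inequality, applied uniformly in $i$ (the union-bound cost already paid above), then delivers both $\max_i|\hat{\bm{\alpha}}_{i\cdot}-\bm{\alpha}_{i\cdot}|_1$ and $\max_i|\hat{\bm{\alpha}}_{i\cdot}-\bm{\alpha}_{i\cdot}|_2$ at rate $O_\mathbb{P}(\lambda)$.

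For the selection statement \eqref{eq: index set}, I would combine the resulting bound $\max_i|\hat{\bm{\alpha}}_{i\cdot}-\bm{\alpha}_{i\cdot}|_\infty=O_\mathbb{P}(\lambda)$ with a beta-min/threshold condition $\lambda\ll b_d\ll\min_{i,\,j\in S_i}|\bm{\alpha}_{ij}|$ that the input threshold $b_d$ must satisfy: on the event where the $\ell_\infty$ error is below $b_d$, every true zero stays under the threshold while every true signal exceeds it, so $\widehat{S}_i=S_i$ for all $i$ simultaneously, and \eqref{eq.recover_consistency} then legitimizes the post-selection step. The main obstacle I anticipate is the interplay between the $\tfrac{1}{2d}$ normalization of the loss and the restricted-eigenvalue transfer: unlike the i.i.d.\ sub-Gaussian design of textbook Lasso, here the ``design'' $\widehat{\bm{\Sigma}}^{(0)}$ is itself a dependent, randomly weighted sample covariance, so securing the sharp $(\log n)\sqrt{\log d/n}$ rate \emph{uniformly} over the $d$ rows and $d^2$ entries---while keeping track of the contribution of the random coefficient matrices $A_t^{(\pm)},B_t$ to the dependence-adjusted moments and of the subtracted sample mean---is the delicate step on which both the oracle inequality and the selection threshold ultimately rest.
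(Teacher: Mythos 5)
Your proposal is correct and follows essentially the same route as the paper's own proof: entrywise concentration of $\widehat{\bm{\Sigma}}^{(0)},\widehat{\bm{\Sigma}}^{(1)}$ at rate $(\log n)\sqrt{\log d/n}$ via the dependent Bernstein inequality plus a union bound (the paper's Lemma \ref{lem: consistency of sigma}), the basic Lasso inequality with the Yule--Walker cancellation at the truth, a cone/compatibility argument transferring the eigenvalue bound in B2 to $\widehat{\bm{\Sigma}}^{(0)}$ through the max-norm perturbation $|\bm{v}^\top(\widehat{\bm{\Sigma}}^{(0)}-\bm{\Sigma}^{(0)})\bm{v}|\le|\widehat{\bm{\Sigma}}^{(0)}-\bm{\Sigma}^{(0)}|_{\max}|\bm{v}|_1^2$, and a beta-min condition $\min_{i,\,j\in S_i}|\bm{\alpha}_{ij}|\gg\lambda$ for the selection statement \eqref{eq: index set}, which the paper likewise assumes implicitly. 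The only differences are presentational (you invoke the oracle inequality as a packaged result where the paper writes it out by hand), not substantive.
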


Lemma \ref{thm: GA} discusses  the Gaussian approximation theorem for the sample mean of dependent random variables. Gaussian approximation, as presented in \citet{chernozhukov2013gaussian, zhang2017gaussian, chernozhuokov2022improved, chang2024central} among others is a useful tool for analyzing high-dimensional dependent data. 

\begin{lemma}
    \label{thm: GA}
Let $(X_t)$ be the process in the form \eqref{eq: causal} with $\mu =\mathbb{E}X_t$ and satisfies \eqref{eq: GMC_q}. Under conditions B1 to B3 and $\|X_{\cdot}\|_2 <\infty $, then
   \begin{equation}
       \label{eq: GA}
       \sup_{x \in \mathbb{R}}|\mathbb{P}(\sqrt{n}|\bar{X} - \mu |_\infty \leq x)-\mathbb{P}(|Z|_\infty \leq x)| = o(1),
   \end{equation}
   where $Z \in \mathbb{R}^d$ has a joint normal distribution with $\mathbb{E}Z=0$ and $\cov(Z_i, Z_j) = \cov(\bar{X}_i, \bar{X}_j)$.
\end{lemma}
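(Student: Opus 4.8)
The plan is to follow the $m$-dependence-plus-blocking strategy of \cite{zhang2017gaussian}, combined with the high-dimensional central limit theorem over hyperrectangles from \cite{chernozhukov2013gaussian, chernozhuokov2022improved}. Since $(X_t)$ is a gbVAR(1) process its coordinates are binary and hence uniformly bounded, $|X_{ts}|\le 1$, a fact I will exploit repeatedly; after centering I write $\mu=0$. The target is $\sqrt{n}\,|\bar X|_\infty = |n^{-1/2}\sum_{t=1}^n X_t|_\infty$, and the goal is to match its distribution function to that of the maximum coordinate of the centered Gaussian vector $Z$ specified in the statement, whose covariance is (up to lower-order terms) the long-run covariance $\sum_{\ell}\cov(X_0,X_\ell)$ appearing in B3.

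First I would replace $X_t$ by its $m$-dependent projection $\tilde X_t = \mathbb{E}[X_t \mid \bm{\varepsilon}_{t-m}, \dots, \bm{\varepsilon}_t]$. The coordinatewise approximation error is controlled by the tail of the functional dependence measure, $\|X_{ts} - \tilde X_{ts}\|_2 \le \sum_{j>m}\delta_{j,2,s}$, which under the GMC($q$) property of Proposition \ref{lem: gmc(q)} and the uniform DAM assumption $\|X_\cdot\|_2<\infty$ decays geometrically in $m$. A union bound over the $d$ coordinates, together with the maximal Bernstein-type inequality of Proposition \ref{thm: robust} (which applies by boundedness and $\|X_\cdot\|_2<\infty$), shows $|n^{-1/2}\sum_t (X_t - \tilde X_t)|_\infty = o_\mathbb{P}(1)$ once $m$ grows like a suitable power of $\log(d\vee n)$. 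Next I would partition $\{1,\dots,n\}$ into alternating big blocks of length $L$ and small blocks of length $m$; by $m$-dependence the big-block sums are mutually independent, and a further application of Proposition \ref{thm: robust} shows the pooled small-block contribution is negligible in $|\cdot|_\infty$ provided $m/L\to 0$ at a rate compatible with A3.

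With the statistic reduced to a maximum over coordinates of a sum of independent, bounded, mean-zero block vectors, I would invoke the Gaussian approximation over the hyperrectangle $\{u : |u|_\infty \le x\}$ from \cite{chernozhuokov2022improved}; the moment and boundedness hypotheses follow from $|X_{ts}|\le 1$ and the uniform DAM, and the resulting error is $o(1)$ under the scaling implied by A3. This produces a Gaussian vector with covariance $\Sigma_{\mathrm{block}}$, the covariance of the block sums. It then remains to show $\sup_x|\mathbb{P}(|Z_{\mathrm{block}}|_\infty \le x) - \mathbb{P}(|Z|_\infty \le x)| = o(1)$, for which I would use the Gaussian comparison (Slepian/Stein-type) inequality of \cite{chernozhukov2013gaussian} bounding this distance by $|\Sigma_{\mathrm{block}} - n\,\cov(\bar X)|_{\max}$, and estimate the gap by showing both matrices converge to $\sum_\ell\cov(X_0,X_\ell)$; the small-block deletion and the $m$-truncation each perturb the autocovariance sum only by geometrically small terms.

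The main obstacle will be the anti-concentration step that glues these pieces together: each approximation above is $o_\mathbb{P}(1)$ or $o(1)$ at the level of the statistic, and converting these into the uniform Kolmogorov bound \eqref{eq: GA} requires a Nazarov-type anti-concentration inequality for the maximum of $Z$. This is exactly where B3 enters, since the assumption that the minimum eigenvalue of $\sum_\ell \cov(X_t,X_{t+\ell})$ is bounded below forces the marginal variances of $Z$ to stay away from zero, so the anti-concentration constant does not degenerate as $d\to\infty$ and the blocking and truncation errors are absorbed at rate $o(1)$. Balancing $m$, $L$, and the dimension $d$ simultaneously so that every error term vanishes under only the mild rate in A3 is the delicate bookkeeping on which the argument hinges.
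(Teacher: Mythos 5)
Your proposal is correct in outline and shares the skeleton of the paper's proof---the $m$-dependent approximation $\tilde X_t=\mathbb{E}[X_t\mid \bm{\varepsilon}_{t-m},\ldots,\bm{\varepsilon}_t]$, big/small blocking, a normal approximation for the independent big-block sums, and a final Gaussian--Gaussian covariance comparison---but it executes each step with different machinery. The paper never leaves the $L^2$/smooth-function world: it invokes Lemma 2 of \cite{zhang2023debiased} to replace the Kolmogorov distance by a difference of expectations of a smooth max-approximation $h_{\psi,\psi,x}$ (whose smoothing error $t(1+\sqrt{\log d}+\sqrt{|\log t|})$ already absorbs the anti-concentration penalty), controls the $m$-dependence and small-block errors through the martingale/DAM bounds of Lemma \ref{lem: momeng for sum}, performs the big-block normal approximation by an explicit Lindeberg swap (the telescoping $H_k$ argument), and compares covariances via $|\cdot|_{\max}$ bounds. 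You instead propose Bernstein-type exponential bounds (Proposition \ref{thm: robust}) with union bounds for the negligibility steps, the high-dimensional CLT of \cite{chernozhukov2013gaussian,chernozhuokov2022improved} for the block sums, a Slepian/Stein comparison, and a separate Nazarov anti-concentration step---which is indeed where the eigenvalue lower bound in B3 enters, exactly as you say. Your route is more modular and leans on established black boxes; the paper's is more self-contained and, importantly, avoids any boundedness requirement.

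That last point is the one substantive caveat. The lemma as stated assumes only the causal form \eqref{eq: causal}, GMC, and $\|X_\cdot\|_2<\infty$; it does not assume $|X_{ts}|\le 1$. Your proof uses boundedness twice---Proposition \ref{thm: robust} requires $|X_t|\le M$, and your verification of the CLT moment hypotheses appeals to $|X_{ts}|\le 1$---so as written it establishes a narrower statement, and your justification (``since $(X_t)$ is a gbVAR(1) process'') is not quite accurate: within the paper the lemma is applied not to the raw binary series but to the derived process $\bm{\kappa}^{(t)}$ in the proof of Theorem \ref{thm: GA post selection}. That process happens to be bounded (being built from the binary $X_t$ and fixed matrices of bounded norm), so your version does suffice for every use the paper makes of the lemma, but it would not cover an unbounded process with finite DAM, which the paper's $L^2$-based argument handles without change. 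The remaining looseness in your rate bookkeeping---making every error term simultaneously $o(1)$ under the mild condition A3---is shared by the paper's own proof, which likewise never pins down the admissible choices of $\psi$, $M$, $m$, and $w$, so I would not count that against you.
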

Similar to Lemma \ref{thm: GA}, the result in Theorem \ref{thm: GA post selection} presents the convergence rate and the asymptotic distribution of the post-selection estimator, based on the Gaussian approximation for $\tilde{\mathcal{A}}$ under the regular conditions.

\begin{theorem}
\label{thm: GA post selection}
   Let conditions B1 to B3 hold, the post-selection estimator satisfies
    \begin{equation}
    \label{eq: GA post selection}
        \begin{aligned}
            \sup_{x \in \mathbb{R}}\left|\mathbb{P}\left(\sqrt{n}|\tilde{\mathcal{A}} - \mathcal{A}|_{\max} \leq x\right) - \mathbb{P}\left(\max_{i = 1, \ldots, d, j \in S_i}|Z_{ij}|\leq x\right)\right|=o(1)
        \end{aligned}
    \end{equation}
    where $Z_{ij}$ is joint normal random variables with $\mathbb{E} Z_{ij} = 0$ and 
    \begin{equation}
    \label{eq: cov}
        \cov(Z_{i_1, j_1}, Z_{i_2, j_2}) = \bm{\Sigma}^{(i_1, i_2)}_{j_1 j_2}
    \end{equation}
    which satisfies there exists a constant $c > 0$ such that for sufficiently large $n$ , $\bm{\Sigma}_{j,j}^{(i,i)}>c$ for all $i = 1,\ldots,d$ and $j \in S_i$.
\end{theorem}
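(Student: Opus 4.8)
The plan is to pass from the data-driven support sets to the true supports, exhibit an \emph{exact} closed-form identity for $\tilde{\bm\alpha}_{i\cdot}-\bm\alpha_{i\cdot}$ that isolates a mean-zero estimating-equation residual, linearize it, and then invoke the Gaussian approximation of Lemma~\ref{thm: GA} for the resulting linear statistic. First I would restrict attention to the event $\mathcal{E}=\{\widehat S_i=S_i\ \text{for all }i\}$, which by \eqref{eq: index set} in Theorem~\ref{thm: alpha} has probability $1-o(1)$; since a Kolmogorov-distance statement is unaffected by events of vanishing probability, it suffices to work on $\mathcal{E}$. On $\mathcal{E}$, and on the (high-probability) event that the $S_i\times S_i$ block of $\widehat{\bm\Sigma}^{(0)}$ is invertible (guaranteed by $|S_i|=O(1)$ from B1, $\lambda_{\min}(\bm\Sigma^{(0)})>c$ from B2, and the concentration of $\widehat{\bm\Sigma}^{(0)}$), the recovery property \eqref{eq.recover_consistency} with $\mathbf A=\widehat{\bm\Sigma}^{(0)\top}\widehat{\bm\Sigma}^{(0)}$ gives $\bm\alpha_{i\cdot}=\bm\alpha_{i\cdot}\widehat{\bm\Sigma}^{(0)}\widehat{\bm\Sigma}^{(0)}\mathcal{F}_{S_i}(\widehat{\bm\Sigma}^{(0)\top}\widehat{\bm\Sigma}^{(0)})$, so that \eqref{eq: post estimator} yields, up to a boundary summand of order $O_{\mathbb{P}}(1/n)$, the identity
\begin{equation*}
\tilde{\bm\alpha}_{i\cdot}-\bm\alpha_{i\cdot}
=\Big(\widehat{\bm\Sigma}^{(1)}_{i\cdot}-\bm\alpha_{i\cdot}\widehat{\bm\Sigma}^{(0)}\Big)\,
\widehat{\bm\Sigma}^{(0)}\mathcal{F}_{S_i}\big(\widehat{\bm\Sigma}^{(0)\top}\widehat{\bm\Sigma}^{(0)}\big)
=\Big(\tfrac1n\textstyle\sum_{t=1}^{n-1}\bm\Theta^{\bar X}_{t,i\cdot}\Big)\,\widehat{\bm\Sigma}^{(0)}\mathcal{F}_{S_i}\big(\widehat{\bm\Sigma}^{(0)\top}\widehat{\bm\Sigma}^{(0)}\big),
\end{equation*}
where $\bm\Theta^{\bar X}_{t}=(X_{t+1}-\bar X)(X_t-\bar X)^\top-\mathcal{A}(X_t-\bar X)(X_t-\bar X)^\top$ is the Yule--Walker residual evaluated at the \emph{true} $\mathcal{A}$; the key point is that its population centering $\bm\Theta_t=(X_{t+1}-\mu)(X_t-\mu)^\top-\mathcal{A}(X_t-\mu)(X_t-\mu)^\top$ has mean zero by \eqref{eq: Yule-Walker}.

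Next I would linearize. I would replace the random right factor $\widehat{\bm\Sigma}^{(0)}\mathcal{F}_{S_i}(\widehat{\bm\Sigma}^{(0)\top}\widehat{\bm\Sigma}^{(0)})$ by its deterministic limit $M_i:=\bm\Sigma^{(0)}\mathcal{F}_{S_i}(\bm\Sigma^{(0)\top}\bm\Sigma^{(0)})$, and $\bm\Theta^{\bar X}_{t}$ by $\bm\Theta_t$, writing
\begin{equation*}
\sqrt n\,(\tilde{\bm\alpha}_{ij}-\bm\alpha_{ij})=\frac{1}{\sqrt n}\sum_{t=1}^{n-1}Y^{(i,j)}_t+R_{ij},
\qquad Y^{(i,j)}_t:=\big(\bm\Theta_{t,i\cdot}M_i\big)_j,\quad j\in S_i,
\end{equation*}
noting that, because $\mathcal{F}_{S_i}$ is supported on $S_i\times S_i$, the columns of $M_i$ indexed outside $S_i$ vanish, which is exactly why only $j\in S_i$ appear in the statement. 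The remainder $R_{ij}$ collects (i) the perturbation $\widehat{\bm\Sigma}^{(0)}\mathcal{F}_{S_i}(\cdots)-M_i$ multiplied by the $O_{\mathbb{P}}(n^{-1/2})$ residual average, and (ii) the centering correction from $\bar X-\mu$. Both are controlled uniformly over the $O(d^2)$ indices by the max-norm rates $|\widehat{\bm\Sigma}^{(k)}-\bm\Sigma^{(k)}|_{\max}=O_{\mathbb{P}}(\log n\sqrt{\log d/n})$ and $|\bar X-\mu|_\infty=O_{\mathbb{P}}(\log n\sqrt{\log d/n})$, which follow from the Bernstein-type inequality of Proposition~\ref{thm: robust} together with a union bound and the boundedness $X_t\in\{0,1\}^d$; using $|\mathcal{F}_{S_i}(\cdot)|_1\le|S_i|/c=O(1)$ from Remark~\ref{rem: partial operator} these combine to $\max_{i,j}|R_{ij}|=o_{\mathbb{P}}(1/\sqrt{\log d})$ under A1--A3.

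It then remains to establish a Gaussian approximation for the linear statistic $n^{-1/2}\sum_t Y^{(i,j)}_t$ and to glue it to the target. I would check that the derived process $\{Y^{(i,j)}_t\}$ again has the causal form \eqref{eq: causal}: $\bm\Theta_t$ is a product of centered coordinates of $X_{t+1},X_t$, which are bounded and have finite DAM by Proposition~\ref{lem: gmc(q)}, so products of them have finite uniform DAM, and contraction by the fixed bounded matrix $M_i$ preserves this, giving $\|Y^{(\cdot)}_\cdot\|_2<\infty$. Applying Lemma~\ref{thm: GA} to the $(\sum_i|S_i|)$-dimensional ($=O(d)$) vector $(Y^{(i,j)}_t)_{i,j}$ yields the Gaussian approximation with limiting covariance equal to the long-run covariance $\bm\Sigma^{(i_1,i_2)}_{j_1 j_2}=\sum_{\ell\in\mathbb Z}\cov(Y^{(i_1,j_1)}_0,Y^{(i_2,j_2)}_\ell)$, matching \eqref{eq: cov}. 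Writing $\bm\Theta_{t,i\cdot}=\big[(X_{t+1,i}-\mu_i)-\bm\alpha_{i\cdot}(X_t-\mu)\big](X_t-\mu)^\top$ exhibits $Y^{(i,j)}_t$ as the one-step residual times a fixed linear combination of $X_t-\mu$, and combining B3 (long-run covariance of $X_t$ bounded below) with the nondegeneracy of this residual gives the asserted $\bm\Sigma^{(i,i)}_{jj}>c$, which is precisely the nondegeneracy needed for Lemma~\ref{thm: GA}. Finally I would transfer the approximation from $n^{-1/2}\sum_t Y^{(i,j)}_t$ to $\sqrt n(\tilde{\bm\alpha}_{ij}-\bm\alpha_{ij})$ by absorbing $R_{ij}$ via the anti-concentration inequality for the maximum of a Gaussian vector.

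\textbf{Main obstacle.} The crux is the uniform remainder control: because $d$ grows and the statistic is a maximum over $O(d^2)$ entries, the linearization error must be shown to be $o_{\mathbb{P}}$ of order $1/\sqrt{\log d}$ \emph{uniformly}, which forces the sharp $\log n\sqrt{\log d/n}$ max-norm rates for $\widehat{\bm\Sigma}^{(k)}$ and $\bar X$ and a careful propagation of these rates through the nonlinear map $\mathcal{F}_{S_i}$ and the quadratic residual $\bm\Theta^{\bar X}_t$. A secondary difficulty is proving the long-run variance lower bound $\bm\Sigma^{(i,i)}_{jj}>c$ for the quadratic functional $Y^{(i,j)}_t$, so that the nondegeneracy hypothesis of the Gaussian approximation is genuinely met.
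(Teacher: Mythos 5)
Your proposal is correct in substance and follows the same overall skeleton as the paper's proof: restrict to the event $\{\widehat S_i=S_i\ \forall i\}$ via Theorem~\ref{thm: alpha}, verify invertibility of the $S_i\times S_i$ empirical Gram blocks, linearize $\sqrt n(\tilde{\bm\alpha}_{ij}-\bm\alpha_{ij})$ into a mean-zero stationary causal statistic plus a uniformly negligible remainder, apply Lemma~\ref{thm: GA} to that statistic, and glue things together with Gaussian anti-concentration. Where you genuinely differ is the decomposition step. The paper expands $\tilde{\bm\alpha}_{i\cdot}-\bm\alpha_{i\cdot}$ around the \emph{population} quantities, writing it as $\bm\Sigma^{(1)}_{i.}\bm\Sigma^{(0)}\bigl(\mathcal{F}_{S_i}(\widehat{\bm\Sigma}^{(0)\top}\widehat{\bm\Sigma}^{(0)})-\mathcal{F}_{S_i}(\bm\Sigma^{(0)\top}\bm\Sigma^{(0)})\bigr)$ plus $\bm\Delta$-terms, which forces an explicit first-order perturbation of the partial inverse and yields a three-term influence function $\bm\kappa^{(t)}$. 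You instead invoke the recovery identity \eqref{eq.recover_consistency} with the \emph{empirical} matrix $\mathbf A=\widehat{\bm\Sigma}^{(0)\top}\widehat{\bm\Sigma}^{(0)}$, which factors the error exactly as $\bigl(\widehat{\bm\Sigma}^{(1)}_{i\cdot}-\bm\alpha_{i\cdot}\widehat{\bm\Sigma}^{(0)}\bigr)\widehat{\bm\Sigma}^{(0)}\mathcal{F}_{S_i}(\widehat{\bm\Sigma}^{(0)\top}\widehat{\bm\Sigma}^{(0)})$, so the only linearization needed is replacing a bounded random right factor by its limit, against a leading factor that is already $O_{\mathbb{P}}(n^{-1/2})$ and mean-zero by Yule--Walker. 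These routes are reconciled by the algebra: simplifying the paper's $\bm\kappa^{(t)}$ with $\bm\Sigma^{(1)}_{i.}\bm\Sigma^{(0)}\mathcal{F}_{S_i}(\bm\Sigma^{(0)}\bm\Sigma^{(0)})=\bm\alpha_{i\cdot}$ collapses it to exactly your $Y^{(i,j)}_t=\bigl([\bm\Gamma^{(1)}_{t,i\cdot}-\bm\alpha_{i\cdot}\bm\Gamma^{(0)}_t]\bm\Sigma^{(0)}\mathcal{F}_{S_i}(\bm\Sigma^{(0)}\bm\Sigma^{(0)})\bigr)_j$, so the Gaussian limit and covariance \eqref{eq: cov} agree. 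Your factorization buys a cleaner remainder analysis (no inverse-perturbation calculus in the main term, and your $\mu$-centered $\bm\Theta_t$ is honestly stationary/causal, whereas the paper's $\bar X$-centered $\bm\Gamma^{(k)}_t$ is slightly abusive on that point); the paper's expansion buys the explicit $\bm\kappa^{(t)}$ that is reused verbatim in the bootstrap consistency proof (Theorem~\ref{thm: swb}) to define $\bm\Sigma^{(i_1,i_2)}$. One caveat applying to both: the asserted lower bound $\bm\Sigma^{(i,i)}_{jj}>c$ is only sketched in your proposal (via B3 and nondegeneracy of the one-step residual) and is not actually proved in the paper either, so completing either argument would require making that step rigorous.
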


\begin{remark}
   Theorem \ref{thm: GA post selection} explains why the Gaussian approximation for the post-selection estimator needs the condition $\max_{i = 1, \ldots, d}|S_i|>0$ in Assumption B1. Without this condition, it implies that $S_i$ can be empty and $Z_{ij}$ will be a constant rather than a random variable (or Gaussian). For the joint normal random variables $Z_{ij}$, it satisfies $$\max_{i=1,\ldots ,d,j \in S_i} |Z_{ij}| = O_\mathbb{P} (\sqrt{\log d}).$$ Therefore, Theorem \ref{thm: GA post selection} also implies that
\begin{equation}
\label{eq: convergence rate for post selection}
    \max_{i=1,\ldots ,d,j \in S_i} |\tilde{\mathcal{A}} - \mathcal{A}|_{\max} =O_{\mathbb{P}}\left(\sqrt{\frac{\log d}{n}}\right).
\end{equation}
   
\end{remark}

Finally, we establish the consistency of the proposed bootstrap algorithm (Algorithm \ref{algo: swb}). According to Theorem 1.2.1 of \cite{MR1707286}, the validity of Algorithm \ref{algo: swb} is ensured if the cumulative distribution function of the bootstrap estimator in the bootstrap world converges to that of the original estimator. That is, the following condition holds true:
\begin{equation}
    \sup_{x \in \mathbb{R}}\left|\mathbb{P}^*\left(\max_{i,j = 1, \ldots, d} |\mathbf{\Delta}_{ij}^*| \leq x \right) - \mathbb{P}\left(\max_{i = 1, \ldots, d, j\in S_i}|Z_{ij}|\leq x\right)\right| =o(1),
    \label{eq.bootstrap_consistency}
\end{equation}
where $\bm{\Delta}_{ij}^*$ is defined in Algorithm \ref{algo: swb}, $Z_{ij}$ is defined in Theorem \ref{thm: GA post selection}, and $\mathbb{P}^*(\cdot) = \mathbb{P}\left(\cdot \mid\{X_t\}_{t=1}^n\right)$ is "the
 probability in the bootstrap world," i.e., the conditional probability conditioning on the observations $X_t$.

We define the conditional expectation $\mathbb{E}^{*} \left[{\cdot}\right] = \mathbb{E} \left[{\cdot}  \mid X_t , t = 1, \ldots, n\right]$, i.e., conditional on all observed data. In the literature, such expectation is always referred to as ``the expectation in the bootstrap world.'' Suppose $\widehat{S}_i = S_i$, which is guaranteed for large sample size from Theorem \ref{thm: alpha}, with a given the bandwidth $h_n$, the conditional covariances of the bootstrapped estimators $\bm{\Delta}_{ij}^*$  in Algorithm \ref{algo: swb} is
\begin{equation}
    \begin{aligned}
        \mathbb{E}^*\left[\bm{\Delta}_{i_1,j_1}^* \bm{\Delta}_{i_2,j_2}^*\right] = \frac{1}{n}\sum_{t_1,t_2=1}^{n-1}\sum_{k_1,l_1=1}^d \sum_{k_2,l_2=1}^d &\widehat{\bm{\Theta}}_{t_1, i_1, k_1} \widehat{\bm{\Theta}}_{t_2, i_2, k_2}\widehat{\bm{\Sigma}}_{k_1,l_1}^{(0)} \widehat{\bm{\Sigma}}_{k_2,l_2}^{(0)} \\
       & \mathcal{F}_{S_i}(\widehat{\bm{\Sigma}}^{(0)}\widehat{\bm{\Sigma}}^{(0)})_{l_1,j_1}\mathcal{F}_{S_i}(\widehat{\bm{\Sigma}}^{(0)}\widehat{\bm{\Sigma}}^{(0)})_{l_2,j_2} K\left(\frac{t_1 -t_2}{h_n}\right)
    \end{aligned}
    \label{eq.cov_bootstrap}
\end{equation}
if $j_1 \in S_{i_1}$ and $j_2 \in S_{i_2}$ and 0 otherwise. Therefore, Algorithm \ref{algo: swb} actually generates joint normal random variables with specific covariance structures \eqref{eq.cov_bootstrap}. Since Theorem \ref{thm: GA post selection} shows that the estimation root converges in distribution to the maximum of joint normal random variables with specific covariances, the fact that the covariances in \eqref{eq.cov_bootstrap} approximates that in Theorem \ref{thm: GA post selection} is sufficient to verify the consistency of Algorithm \ref{algo: swb}.

Theorem \ref{thm: swb} provides the consistency of the covariance estimator to validate the bootstrap algorithm. With the same assumptions as in Theorem \ref{thm: swb}, the consistency of the bootstrap algorithm is also validated in Corollary \ref{cor: bootstrap}. 

\begin{theorem}
\label{thm: swb}
    Suppose Assumptions B1 to B3 hold. In addition, suppose that the function $K(\cdot)$ in Algorithm \ref{algo: swb} is a kernel function satisfying Definition \ref{definition: Kernel function} and $h_n$ satisfies $h_n \rightarrow \infty$, $h_n \log^2(d \vee n) \log^{9}(n)/\sqrt{n}\rightarrow 0$.
Then,
\begin{equation}
    \label{eq: swb}
    \max_{i_1, i_2, j_1, j_2 = 1,\ldots, d} \left|\mathbb{E}^*\left[\bm{\Delta}_{i_1,j_1}^*\bm{\Delta}_{i_2,j_2}^*\right]- \bm{\Sigma}_{j_1,j_2}^{(i_1,i_2)}\right\vert =O_\mathbb{P}\left(h_n^{-1}+\frac{h_n \log^2(d\vee n)}{\sqrt{n}}\right),
\end{equation}
where $\bm{\Sigma}^{(i_1, i_2)}$ is defined in \eqref{eq: kernel + cov}, $\bm{\Delta}_{ij}^*$ is defined in Algorithm \ref{algo: swb} and $h_n$ represents the bandwidth in Lemma \ref{lem: kernel + cov}.
\end{theorem}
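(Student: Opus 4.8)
The plan is to recognize the conditional covariance \eqref{eq.cov_bootstrap} as a kernel--smoothed long--run covariance estimator whose deterministic target is $\bm{\Sigma}^{(i_1,i_2)}_{j_1,j_2}$, and to split its error into a deterministic smoothing bias of order $h_n^{-1}$ and a stochastic plug--in error of order $h_n\log^2(d\vee n)/\sqrt n$. First I would restrict attention to the event $\mathcal{E}_n=\bigcap_{i=1}^d\{\widehat S_i=S_i\}$, which by Theorem~\ref{thm: alpha} and \eqref{eq: index set} satisfies $\mathbb{P}(\mathcal{E}_n)=1-o(1)$; on $\mathcal{E}_n$ each partial inverse operator in \eqref{eq.cov_bootstrap} is evaluated on the correct support, so $\mathcal{F}_{\widehat S_i}=\mathcal{F}_{S_i}$ and, by Remark~\ref{rem: partial operator} together with Assumption~B2, $|\mathcal{F}_{S_i}(\widehat{\bm{\Sigma}}^{(0)\top}\widehat{\bm{\Sigma}}^{(0)})|_1\lesssim |S_i|/c=O(1)$ uniformly in $i$.

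Next I would introduce the oracle second--order residual $\bm{\Theta}_t=(X_{t+1}-\mu)(X_t-\mu)^\top-\mathcal{A}(X_t-\mu)(X_t-\mu)^\top$ and the oracle kernel estimator $G^{i_1,i_2}_{j_1,j_2}$ obtained from \eqref{eq.cov_bootstrap} by replacing $\widehat{\bm{\Theta}}_t,\widehat{\bm{\Sigma}}^{(0)},\mathcal{F}_{\widehat S_i}$ with $\bm{\Theta}_t,\bm{\Sigma}^{(0)},\mathcal{F}_{S_i}$, and then write the telescoping decomposition
\begin{equation*}
\mathbb{E}^*[\bm{\Delta}^*_{i_1,j_1}\bm{\Delta}^*_{i_2,j_2}]-\bm{\Sigma}^{(i_1,i_2)}_{j_1,j_2}
=\underbrace{\left(\mathbb{E}^*[\bm{\Delta}^*_{i_1,j_1}\bm{\Delta}^*_{i_2,j_2}]-G^{i_1,i_2}_{j_1,j_2}\right)}_{\mathrm{(I)\ plug\text{-}in}}
+\underbrace{\left(G^{i_1,i_2}_{j_1,j_2}-\mathbb{E}G^{i_1,i_2}_{j_1,j_2}\right)}_{\mathrm{(II)\ fluctuation}}
+\underbrace{\left(\mathbb{E}G^{i_1,i_2}_{j_1,j_2}-\bm{\Sigma}^{(i_1,i_2)}_{j_1,j_2}\right)}_{\mathrm{(III)\ smoothing\ bias}},
\end{equation*}
bounding each summand uniformly over the $d^4$ tuples $(i_1,i_2,j_1,j_2)$. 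For term~(III) I would use that $\mathbb{E}G$ equals a lag--weighted autocovariance sum $\sum_\ell K(\ell/h_n)\frac{n-|\ell|}{n}\gamma(\ell)$, while $\bm{\Sigma}^{(i_1,i_2)}_{j_1,j_2}$ from \eqref{eq: kernel + cov} and Lemma~\ref{lem: kernel + cov} is the untruncated sum $\sum_\ell\gamma(\ell)$, where $\gamma(\ell)$ denotes the autocovariance of the relevant bounded functional of the $\bm{\Theta}_t$. Since $K$ is continuously differentiable with $K(0)=1$ it is Lipschitz near the origin, giving $|1-K(\ell/h_n)|\lesssim |\ell|/h_n$, whereas the GMC$(q)$ property of Proposition~\ref{lem: gmc(q)} (inherited by the bounded products since $X_t\in\{0,1\}^d$) forces geometric decay of $\gamma(\ell)$ and hence $\sum_\ell|\ell|\,|\gamma(\ell)|<\infty$; this bounds the smoothing bias by $O(h_n^{-1})$ and the edge effect $\sum_\ell(|\ell|/n)|\gamma(\ell)|$ by $O(n^{-1})=o(h_n^{-1})$. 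For term~(II) I would write $G-\mathbb{E}G$ as a kernel--weighted double sum over roughly $nh_n$ centered products and apply the Bernstein--type inequality of Proposition~\ref{thm: robust} with a union bound over the $d^4$ tuples, yielding a fluctuation of order $\sqrt{h_n/n}$ up to polylogarithmic factors, which is dominated by $h_n\log^2(d\vee n)/\sqrt n$ once $h_n\to\infty$.

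The main obstacle is term~(I), the plug--in error, where $\widehat{\bm{\Theta}}_t,\widehat{\bm{\Sigma}}^{(0)}$ and $\mathcal{F}_{S_i}(\widehat{\bm{\Sigma}}^{(0)\top}\widehat{\bm{\Sigma}}^{(0)})$ replace their population analogues inside a kernel--weighted double sum of roughly $nh_n$ terms. The quantitative inputs are the rates $|\tilde{\mathcal{A}}-\mathcal{A}|_{\max}=O_\mathbb{P}(\sqrt{\log d/n})$ from \eqref{eq: convergence rate for post selection}, $|\bar X-\mu|_\infty=O_\mathbb{P}(\log n\sqrt{\log d/n})$ from the Gaussian approximation of Lemma~\ref{thm: GA}, and $|\widehat{\bm{\Sigma}}^{(0)}-\bm{\Sigma}^{(0)}|_{\max}=O_\mathbb{P}(\log n\sqrt{\log d/n})$, each contributing an elementwise error of order $\log n\sqrt{\log d/n}$ to $\widehat{\bm{\Theta}}_t-\bm{\Theta}_t$ and to the matrix factors. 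Because the $1/n$ normalization cancels the $n$ summands but not the $\asymp h_n$ effective lag width of the kernel, the accumulated plug--in error is of order $h_n\cdot\log n\sqrt{\log d/n}\lesssim h_n\log^2(d\vee n)/\sqrt n$, provided the residual products are controlled uniformly via the boundedness of $X_t$ and the $|\mathcal{F}_{S_i}|_1=O(1)$ bound from Remark~\ref{rem: partial operator}. The delicate part is establishing these bounds uniformly over all $d^4$ tuples and verifying that cross terms between the three plug--in errors do not inflate the rate; I would handle this by repeated triangle--inequality splits, the uniform $\ell_1$ control of the partial inverse operators, and the concentration bound of Proposition~\ref{thm: robust}.
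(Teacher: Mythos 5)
Your proposal follows essentially the same skeleton as the paper's proof: restrict to the event $\bigcap_i\{\widehat S_i=S_i\}$ via Theorem~\ref{thm: alpha}, linearize around an oracle residual process, and split the error into a kernel smoothing bias of order $h_n^{-1}$, a stochastic fluctuation, and a plug-in error driven by $|\tilde{\mathcal{A}}-\mathcal{A}|_{\max}$, $|\widehat{\bm{\Sigma}}^{(0)}-\bm{\Sigma}^{(0)}|_{\max}$ and the $\ell_1$ control of the partial inverse operators. The main technical difference is in how the oracle part is handled: the paper defines an explicit process $\bm{\psi}_{t+1}$ (the oracle residual premultiplied by $\mathcal{F}_{S_i}(\bm{\Sigma}^{(0)\top}\bm{\Sigma}^{(0)})\bm{\Sigma}^{(0)}$), verifies it satisfies GMC$(q/2)$ and that its long-run covariance is exactly $n\bm{\Sigma}^{(i_1,i_2)}_{l_1,l_2}$, and then invokes Lemma~\ref{lem: kernel + cov}, which internally performs your bias-plus-fluctuation split but bounds the fluctuation by martingale moment inequalities (Lemma~\ref{lem: Lemma D3}) rather than by Bernstein; your route via Proposition~\ref{thm: robust} can be made to work, but note two caveats. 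First, that proposition applies to a sum of a single bounded stationary sequence, so you cannot apply it directly to the kernel-weighted double sum; you must first decompose by lag, $\frac1n\sum_{s}K(s/h_n)\sum_t(\cdot)$, apply Bernstein per lag, and sum over the $O(h_n)$ effective lags, which yields a fluctuation of order $h_n\,\mathrm{polylog}/\sqrt n$ rather than your claimed $\sqrt{h_n/n}$ (the lag sums are strongly correlated, so no square-root cancellation across lags is available); this is still within the target rate, so nothing breaks. Second, for the plug-in term your heuristic accounting matches the paper's rates, but the step you flag as delicate is resolved in the paper by a concrete device you should name: Cauchy--Schwarz against the Toeplitz matrix $\bigl(K((t_1-t_2)/h_n)\bigr)_{t_1,t_2}$, combined with the bound $\max_{i,j\in S_i}\sum_t\widehat{\bm{\Theta}}_{t,ij}^2=O_{\mathbb{P}}(n\sqrt{\log d})$ and the $O_{\mathbb{P}}(\log n\sqrt{\log d/n})$ perturbation bound for $\mathcal{F}_{S_i}(\widehat{\bm{\Sigma}}^{(0)\top}\widehat{\bm{\Sigma}}^{(0)})\widehat{\bm{\Sigma}}^{(0)}$, which together give the cross terms the rate $h_n\log^2(d\vee n)/\sqrt n$ uniformly over all tuples. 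With these two repairs your outline reproduces the theorem.
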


  \begin{corollary}
    \label{cor: bootstrap}
 Suppose Assumptions B1 to B3 and assumptions in Theorem \ref{thm: swb} hold. Then equation \eqref{eq.bootstrap_consistency} holds true.
    \end{corollary}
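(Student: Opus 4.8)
The plan is to read \eqref{eq.bootstrap_consistency} as a Gaussian-to-Gaussian comparison. Conditionally on the sample, the bootstrap root vector $(\bm{\Delta}_{ij}^*)$ is centered Gaussian, and so is the limiting vector $(Z_{ij})$ from Theorem \ref{thm: GA post selection}; hence their coordinatewise maxima have close laws as soon as the two covariance matrices agree up to a small error in the entrywise sup norm, which is exactly what Theorem \ref{thm: swb} supplies. First I would confirm that, conditionally on $\{X_t\}_{t=1}^n$, the family $\{\bm{\Delta}_{ij}^*: i=1,\ldots,d,\ j\in\widehat S_i\}$ is jointly Gaussian with mean zero: the multipliers $e_t$ in Step 3 of Algorithm \ref{algo: swb} are drawn jointly Gaussian, and by Steps 4--5 each $\bm{\Delta}_{ij}^*$ is a data-dependent linear functional of $(e_t)_{t=1}^{n-1}$, with conditional covariance precisely \eqref{eq.cov_bootstrap}; for $j\notin\widehat S_i$ we have $\bm{\Delta}_{ij}^*=0$, so $\max_{i,j}|\bm{\Delta}_{ij}^*|$ is effectively a maximum over $\{(i,j): j\in\widehat S_i\}$. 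I would then work on the event $E_n=\bigcap_{i=1}^d\{\widehat S_i=S_i\}$, which by Theorem \ref{thm: alpha} and \eqref{eq: index set} satisfies $\mathbb{P}(E_n)\to1$, so that on $E_n$ both maxima in \eqref{eq.bootstrap_consistency} range over the common index set $\{(i,j):j\in S_i\}$, of size $\sum_{i=1}^d|S_i|=O(d)$ by Assumption B1.

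Writing $\delta_n$ for the conditional covariance error,
\[
\delta_n:=\max_{\substack{i_1,i_2=1,\ldots,d\\ j_1\in S_{i_1},\,j_2\in S_{i_2}}}\left|\mathbb{E}^*\!\left[\bm{\Delta}_{i_1 j_1}^*\bm{\Delta}_{i_2 j_2}^*\right]-\bm{\Sigma}_{j_1 j_2}^{(i_1,i_2)}\right|,
\]
Theorem \ref{thm: swb} gives $\delta_n=O_\mathbb{P}\bigl(h_n^{-1}+h_n\log^2(d\vee n)/\sqrt n\bigr)$, where $\bm{\Sigma}^{(i_1,i_2)}_{j_1 j_2}=\cov(Z_{i_1 j_1},Z_{i_2 j_2})$ by \eqref{eq: cov}, and the diagonal entries $\bm{\Sigma}^{(i,i)}_{jj}$ are bounded below by a constant $c>0$.

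Next I would apply, conditionally on the data, a Gaussian comparison inequality for maxima of Gaussian vectors of the Chernozhukov--Chetverikov--Kato type used in the Gaussian approximation literature cited above: for two centered Gaussian vectors whose covariances agree up to $\delta_n$ in sup norm, the Kolmogorov distance between the laws of their maxima is bounded by a fractional power of $\delta_n$ times a power of the logarithm of the dimension. The passage to the absolute values $|\bm{\Delta}_{ij}^*|$ and $|Z_{ij}|$ is handled by doubling the index set with signs, which leaves the covariance discrepancy unchanged. This yields, on $E_n$,
\[
\sup_{x\in\mathbb{R}}\left|\mathbb{P}^*\!\left(\max_{i,j}|\bm{\Delta}_{ij}^*|\le x\right)-\mathbb{P}\!\left(\max_{i=1,\ldots,d,\,j\in S_i}|Z_{ij}|\le x\right)\right|\lesssim\delta_n^{1/3}\bigl(\log(d\vee n)\bigr)^{2/3}.
\]
Finally I would verify that the right-hand side is $o_\mathbb{P}(1)$: since $h_n\to\infty$ we have $h_n^{-1}\to0$, while the bandwidth condition $h_n\log^2(d\vee n)\log^9(n)/\sqrt n\to0$ forces the second piece of $\delta_n$ to be $o(\log^{-9}n)$, and the logarithmic slack built into the prescribed rate $h_n\asymp\sqrt n/(\log^2(d\vee n)\log^9 n)$ is what absorbs the extra $(\log(d\vee n))^{2/3}$ factor. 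Combining $\delta_n^{1/3}(\log(d\vee n))^{2/3}\to0$ in probability with $\mathbb{P}(E_n)\to1$ then delivers \eqref{eq.bootstrap_consistency}.

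The step I expect to be the main obstacle is this final calibration, together with promoting the $O_\mathbb{P}$ covariance bound to a conditional comparison statement. The comparison inequality inflates $\delta_n$ by a power of $\log(d\vee n)$ arising from the $O(d)$ active coordinates, whereas Theorem \ref{thm: swb} already carries $(\log n)^2$-type losses from the temporal dependence; one must check that the bandwidth rate is tuned so that these competing polylogarithmic factors balance and the product still vanishes, and that the resulting bound, being a function of the random covariance error, can be taken to hold with probability tending to one so that the unconditional $o(1)$ conclusion is legitimate.
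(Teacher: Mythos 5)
Your proposal follows essentially the same route as the paper's proof: conditional on the data the bootstrap roots $\bm{\Delta}_{ij}^*$ are jointly Gaussian and vanish off $\widehat{S}_i$, the event $\bigcap_{i=1}^d\{\widehat{S}_i=S_i\}$ from Theorem \ref{thm: alpha} reduces both maxima to the common support set, and a Gaussian comparison inequality for maxima converts the covariance consistency of Theorem \ref{thm: swb} into the Kolmogorov-distance bound \eqref{eq.bootstrap_consistency}. The only cosmetic difference is the comparison lemma invoked (you cite a Chernozhukov--Chetverikov--Kato-type bound $\delta^{1/3}\log^{2/3}(d\vee n)$, while the paper uses Lemma B.1 of \cite{zhang2023debiased}, giving $C\bigl(\delta^{1/3}(1+\log^3 n)+\delta^{1/6}/(1+\log^{1/4} n)\bigr)$), and both versions require exactly the final polylogarithmic calibration against the bandwidth assumptions that you correctly flag.
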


    \begin{remark}
Assumption B3 provides the existence of finite long-run covariances. Under this Assumption, suppose $\ell > 0$, then
 \begin{equation}
 \label{eq: cov bound}
        |\cov ({X}_{\ell,i}, {X}_{0,j} )| = |\sum_{k=0}^\infty \mathbb{E}[\mathcal{P}_{-k}(X_{\ell, i})\mathcal{P}_{-k}(X_{0,j}) ] |\leq \sum_{k=0}^\infty \delta_{k,2,i} \delta_{k+\ell,2,j}
 \end{equation}
 so the covariances of random variables in Definition \ref{definition: Kernel function} also satisfy GMC condition in Proposition \ref{lem: gmc(q)}. 
\end{remark}

\section{Numerical Studies}
\label{sec 5}
This section provides numerical experiments and real data studies to evaluate the finite-sample performance of the proposed algorithms in various scenarios. 
\subsection{Simulation Study}
 \textbf{Data Generating Process. }We take the numerical setup of the sample size $n = 1500,$ and the dimension $d = 80$. For the true transition
matrix $\mathcal{A} = (\bm{\alpha}_{ij} )$, we consider the following designs.
\begin{itemize}
    \item (DGP1) Define the Tridiagonal matrix without diagonal elements
$$
\bm{\alpha}_{ij} = \begin{cases}
0.3  & \text{ if } j = i+1 \text{ or }i -1, \\
 0 & \text{ otherwise },
\end{cases}
\text{ and }
\mathcal{B} = diag(0.7, 0.4, \ldots, 0.4, 0.7).
$$

\item (DGP2) Define the X-shaped matrix
\begin{align*}
\bm{\alpha}_{ij} = \bm{\alpha}_{i,d-j} = 0.3 \text{ and }
\mathcal{B}=diag(0.4,\ldots,0.4,0.7, 0.4, \ldots, 0.4).
\end{align*}
\item (DGP3) Define the anti-tridiagonal matrix
\begin{align*}
\bm{\alpha}_{ij} = \begin{cases}
0.3  & \text{ if } j = d - i+1 \text{ or }d - i -1, \\
 0 & \text{ otherwise },
\end{cases}
\text{ and }
\mathcal{B} = diag(0.7, 0.4, \ldots, 0.4, 0.7).
\end{align*}
\end{itemize}
We define the innovation $(e_t)_{t \in \mathbb{Z}}$ in DGP1, DGP2 and DGP3 as $d$-dimensional Bernoulli random processes with $\mu_{e, i} = 1/2$ for $i = 1, \ldots, d$.  Since the model needs to estimate $d^2$ parameters, the chosen value $80$ of $d$ already reflect high-dimensional setups.

 {\noindent \textbf{Hyper-parameter selection. }} Several hyper-parameters are required to be selected in our approach, including the Lasso-type tuning parameter $\lambda$, the threshold $b_d,$ and the bandwidth $h_n$ of Algorithm \ref{algo: swb}. While cross-validation is a common approach for hyper-parameter selection,  randomly partitioning the data into training and testing sets destroys the data dependence structures in our setup, which, according to \cite{burman1994cross}, resists the cross-validation algorithms from selecting suitable hyper-parameters.  Therefore, we propose to use a train-test split instead of commonly used cross-validations to tune parameters $\lambda$ and $b_d,$ which are demonstrated as follows: We fix an split integer $n_\mathtt{train} < n$, and use $X_1, \ldots, X_{n_\mathtt{train}}$ to fit the parameter $\widehat{\mathcal{A}}$. Based on the $n_\mathtt{test}$ data $X_{n_\mathtt{train}+1} , \ldots, X_n$, we calculate the testing $L_1$ or $L_2$ norm error
$$
\tau_1 = \left\vert\widehat{\bm{\Sigma}}^{(1)}_{\mathtt{test}} - \widehat{\mathcal{A}} \widehat{\bm{\Sigma}}^{(0)}_{\mathtt{test}}  \right\vert_1, \qquad \text{ and } \qquad \tau_2 = \left\vert\widehat{\bm{\Sigma}}^{(1)}_{\mathtt{test}} - \widehat{\mathcal{A}} \widehat{\bm{\Sigma}}^{(0)}_{\mathtt{test}}  \right\vert_2,
$$
where these two sample autocovariance matrices $\widehat{\bm{\Sigma}}^{(0)}_{\mathtt{test}}$ and $\widehat{\bm{\Sigma}}^{(1)}_{\mathtt{test}}$ are based on the test data. We choose $\lambda$ and $b_d$ which minimize $\tau$. In our work, we choose $n_{\mathtt{train}} = \lfloor \frac{3}{4}n\rfloor$, where $\lfloor x \rfloor$ denotes the largest integer smaller than or equal to $x$.

{\noindent \textbf{Results.}} The numerical results are presented in Table \ref{table: experiment_param} and \ref{table: performance}. We evaluate the performance of the estimator using the row-wise one (R1) norm $|\cdot|_{R1}$, the row-wise two (R2) norm $|\cdot|_{R2}$, and the model misspecification $\kappa$, defined as follows:
\begin{align*}
    \left\vert\widehat{\mathcal{A}} - \mathcal{A}\right\vert_{R1} = \max_{i = 1, \ldots,d}\sum_{j =1}^d\left\vert\widehat{\bm{\alpha}}_{ij} - \bm{\alpha}_{ij}\right\vert, \quad \left\vert\widehat{\mathcal{A}} - \mathcal{A}\right\vert_{R2} = \max_{i = 1, \ldots,d} \sqrt{\sum_{j = 1}^ d\left(\widehat{\bm{\alpha}}_{ij} - \bm{\alpha}_{ij}\right)^2},
\end{align*}
and $\kappa(\widehat{\mathcal{A}} , \mathcal{A})$ is the number of elements in the set $\mathcal{K}$ based on two estimators $\widehat{\mathcal{A}}$ and $\mathcal{A}$, where
\begin{align*}
    \mathcal{K}(\widehat{\mathcal{A}} , \mathcal{A}) =\{(i,j): \widehat{\bm{\alpha}}_{ij} = 0 \cap \bm{\alpha}_{ij} \neq 0\} \cup \{(i,j): \widehat{\bm{\alpha}}_{ij} \neq 0 \cap \bm{\alpha}_{ij} =0\},
\end{align*}
here $\widehat{\mathcal{A}}$ is an estimator of $\mathcal{A}$. The reason for using the row-wise norm is that our approach in estimation procedures for the vector autoregressive model is established by the Yule-Walker equation and row-wise linear regression. The row-wise norms reflect the efficiency of these models and display the different estimation performances when fitting the sparse model. The model misspecification evaluates the difference between the estimated and true position of non-zero entries in matrix parameter $\mathcal{A}$. If the difference is small enough, the estimator can cover the pattern of sparse matrix $\mathcal{A}$.

\begin{table}[!h]
\centering
\caption{Experiment parameters selected based on methods presented in Section 5. The sample size is chosen to be 1500.}
\begin{tabular}{llllll}
\hline
\multicolumn{1}{c}{Data} & \multicolumn{1}{c}{Method} & \multicolumn{1}{c}{$d$} & \multicolumn{1}{c}{$\lambda$} & \multicolumn{1}{c}{$b_d$} & \multicolumn{1}{c}{$h_n$} \\ 
\hline

DGP1 & Post-selection  & 80 & $6.14\times 10^{-6}$ & 0.131 &  2.333\\
     & Lasso           &    &  $3.17\times 10^{-5}$&  &  \\
     & Han \& Liu      &    & 0.121 &  &  \\
     & Threshold Lasso &    & $1.06\times 10^{-5}$ & 0.145 &  \\
     & Banded Lasso    &    & 0.135 &  &  \\ 
     \hline
DGP2 & Post-selection  & 80 & $6.14\times 10^{-6}$ & 0.105 & 3.368 \\
     & Lasso           &    & $4.09\times 10^{-5}$ &  &  \\
     & Han \& Liu      &    &  0.126 &  &  \\
     & Threshold Lasso &    & $4.09\times 10^{-5}$ & 0.131&  \\
     & Banded Lasso    &    & 0.196 &  &  \\ 
     \hline
    DGP3 & Post-selection  & 80 & $6.14\times 10^{-6}$ & 0.095 & 2.421 \\
     & Lasso           &    & $3.31\times 10^{-5}$ &  &  \\
     & Han \& Liu      &    &  0.103 &  &  \\
     & Threshold Lasso &    & $7.58\times 10^{-6}$ & 0.174&  \\
     & Banded Lasso    &    & 0.288 &  &  \\ 
     \hline
\end{tabular}
\label{table: experiment_param}
\end{table}

\begin{table}[htbp]
\centering
\caption{The performance of different estimation procedures and bootstrap algorithms. $|\hat{\mathcal{A}} - \mathcal{A}|_{R1}$, $|\hat{\mathcal{A}} - \mathcal{A}|_{R2}$, $\kappa$ respectively represent the row-wise one norm, the row-wise two norm, and the model misspecification. These three metrics are derived from the average of 300 simulations.}
\begin{tabular}{lllllcll}
\hline
\multicolumn{1}{c}{\multirow{2}{*}{Data}} &
  \multicolumn{1}{c}{\multirow{2}{*}{$d$}} &
  \multicolumn{1}{c}{\multirow{2}{*}{Method}} &
  \multicolumn{1}{c}{\multirow{2}{*}{$|\hat{\mathcal{A}} - \mathcal{A}|_{R1}$}} &
  \multicolumn{1}{c}{\multirow{2}{*}{$|\hat{\mathcal{A}} - \mathcal{A}|_{R2}$}} &
  \multirow{2}{*}{$\kappa$} &
  \multicolumn{2}{c}{Algorithm 1} \\
\multicolumn{1}{c}{} &
  \multicolumn{1}{c}{} &
  \multicolumn{1}{c}{} &
  \multicolumn{1}{c}{} &
  \multicolumn{1}{c}{} &
   &
  \multicolumn{1}{c}{Coverage} &
  \multicolumn{1}{c}{Length} \\  
  \hline
DGP1 & 80 & Post-selection  & 0.107 & 0.080 & 0.0 &  92\% &  0.227\\
     &    & Lasso           & 0.280 & 0.113 & 163.71 &  &  \\
     &    & Han \& Liu      & 0.490 & 0.147 & 6055.76 &  &  \\
     &    & Threshold Lasso & 0.109 & 0.084 & 0.0 &  &  \\
     &    & Banded Lasso    & 0.149 & 0.116 & 6.8 &  &  \\
     \hline
DGP2 & 80 & Post-selection  & 0.128 & 0.094 & 0.0 &  93\% &  0.181\\
     &    & Lasso           & 0.211 & 0.110 & 162.87 &  &  \\
     &    & Han \& Liu      & 0.517 & 0.149 & 6134.07 &  &  \\
     &    & Threshold Lasso & 0.135 & 0.149 & 0.0 &  &  \\
     &    & Banded Lasso    & 0.402 & 0.317 & 83.53 &  &  \\
     \hline
  DGP3 & 80 & Post-selection  & 0.111 & 0.083 & 0.0 &  93\% &  0.176\\
     &    & Lasso           & 0.267 & 0.115 & 162.92 &  &  \\
     &    & Han \& Liu      & 0.714 & 0.176 & 5799.79 &  &  \\
     &    & Threshold Lasso & 0.108 & 0.084 & 0.08 &  &  \\
     &    & Banded Lasso    & 0.612 & 0.425 & 157.66 &  &  \\
     \hline   
\end{tabular}
\label{table: performance}
\end{table}

\begin{figure}[htbp]
  \centering
     \subfigure[Real parameter]{
         \centering
         \includegraphics[width = 0.8\textwidth]{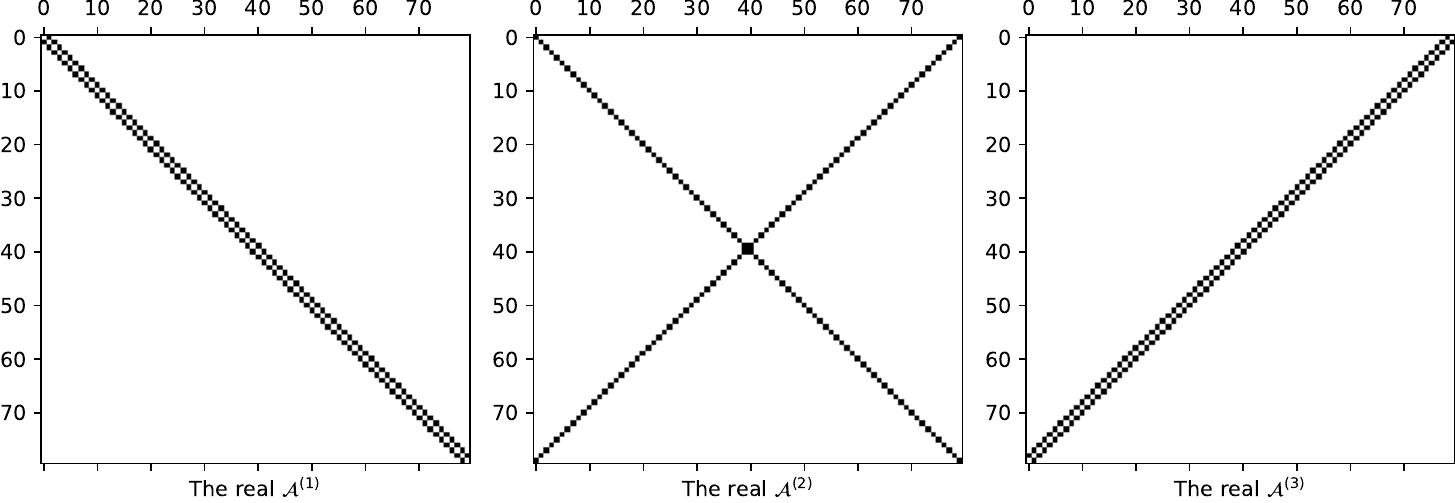}
         \label{figure.heat_1}
     }
     \subfigure[Row-wise Lasso estimator]{
         \centering
         \includegraphics[width = 0.8\textwidth]{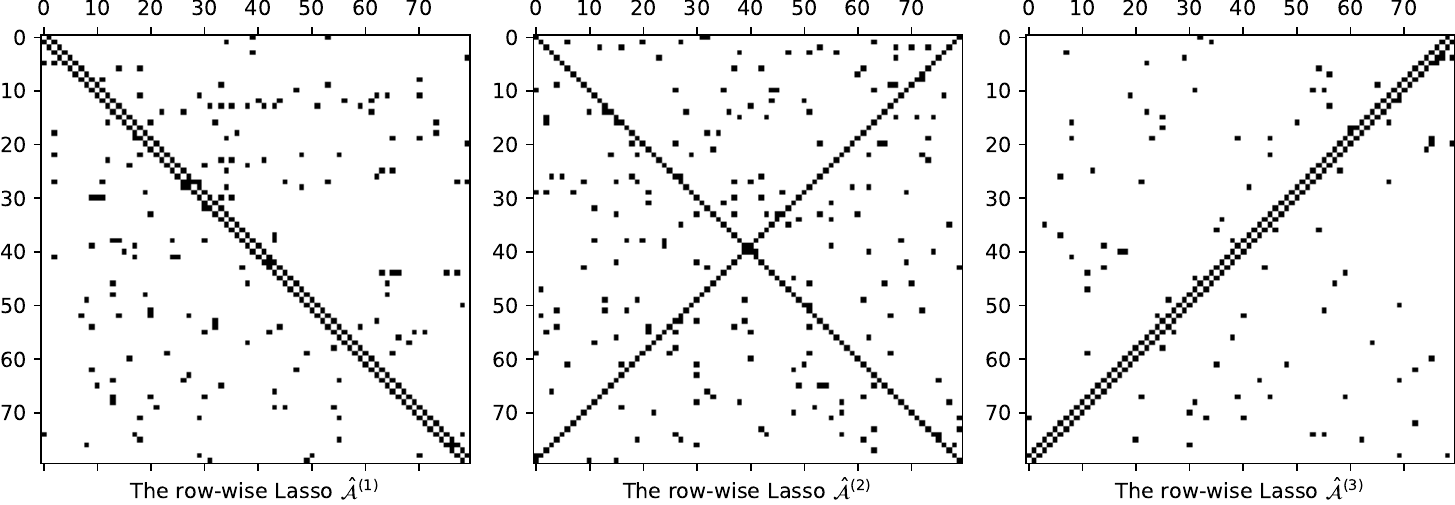}
         \label{figure.heat_2}
     }
     \subfigure[Post-selection estimator]{
         \centering
         \includegraphics[width = 0.8\textwidth]{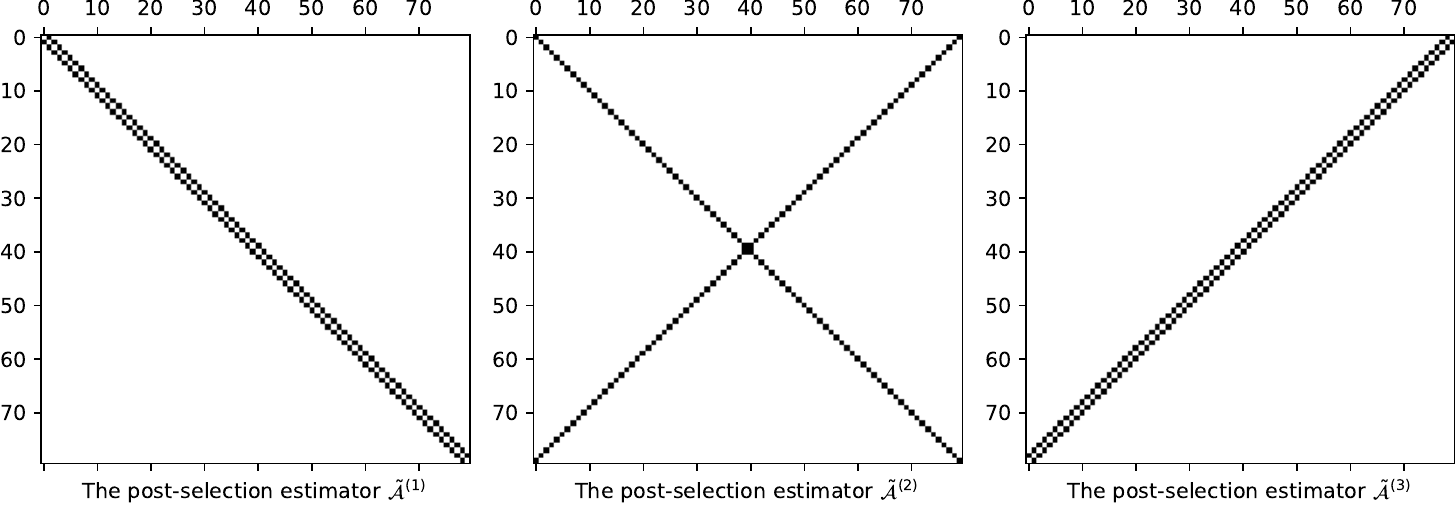}
         \label{figure.heat_3}
     }
     \caption{Figure \ref{figure.heat_1}, \ref{figure.heat_2}, and \ref{figure.heat_3} respectively depict
     the positions of the actual non-zero parameters and the corresponding estimated positions based on the row-wise algorithm and the post-selection estimator. In Figure \ref{figure.heat_1}, $\mathcal{A}^{(1)}$, $\mathcal{A}^{(2)}$ and $\mathcal{A}^{(3)}$ represent the real parameter matrix in DGP1, DGP2, and DGP3, respectively. These observations are generated using $n = 1500$, $d = 80$.}
     \label{figure.heat}
\end{figure}

Figure \ref{figure.heat} plots the actual locations of non-zero parameters in $\mathcal{A}$ along with those estimated by the proposed post-selection and the Lasso method. For the Lasso method, we consider the estimated parameter to be $0$ if its absolute value is smaller than $b_d$ in Table \ref{table: experiment_param} depending on the DGPs. Figure \ref{figure.heat} illustrates that adding a threshold in Lasso precisely recovers the underlying sparse pattern of the parameter matrix $\mathcal{A}.$ On the other hand,  the conventional row-wise Lasso method tends to identify many zero parameters as non-zero due to the fluctuations in the optimization algorithm. Similar conclusions can be drawn from Table \ref{table: performance}, where the model misspecifications $\kappa$ between Lasso estimators and the real matrix is larger than those based on the post-selection estimator. 

\begin{figure}[htbp]

  \centering
     \subfigure[DGP1 model under $n = 1500$, $d = 80$]{
         \centering
         \includegraphics[width = 0.8\textwidth]{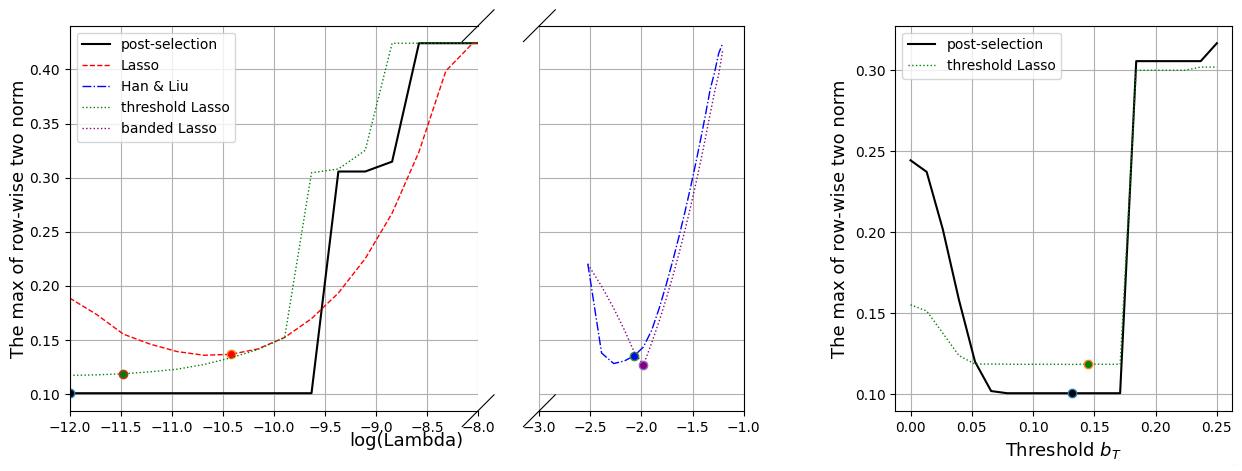}        
     }
     \subfigure[DGP2 model under $n = 1500$, $d = 80$]{
         \centering
         \includegraphics[width = 0.8\textwidth]{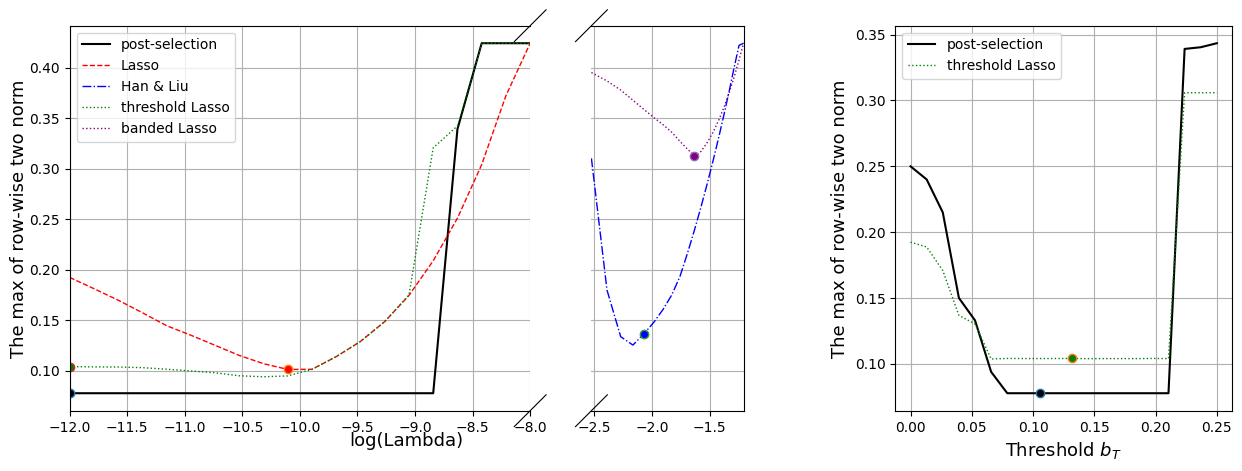} 
         }
      \subfigure[DGP3 model under $n = 1500$, $d = 80$]{
         \centering
         \includegraphics[width = 0.8\textwidth]{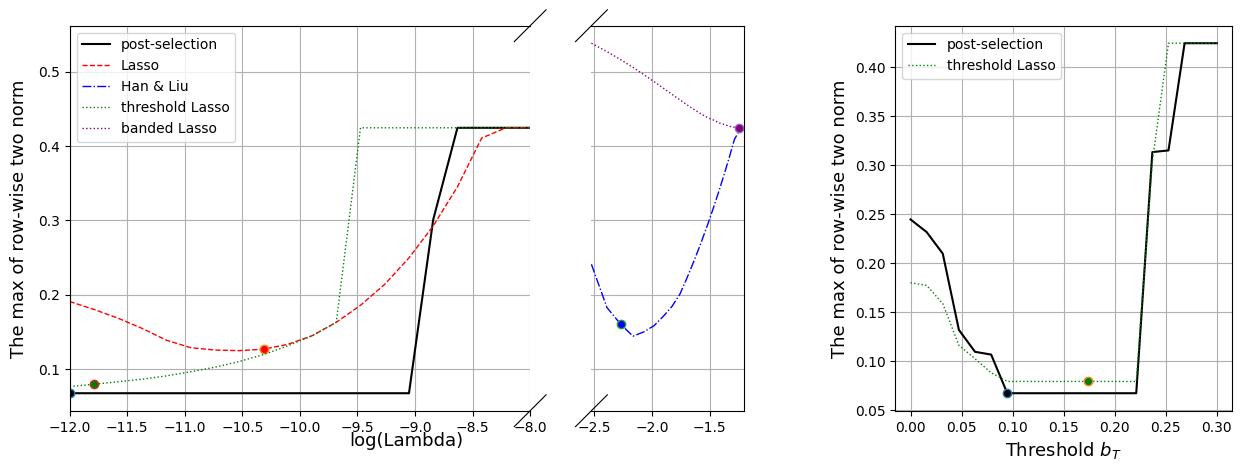} 
          
     }
     \caption{The performance of Lasso and post-selection algorithm concerning different choices of hyperparameters. Figure \ref{fig: performance}(a), (b), and (c) represent the row-wise two norms of estimation error with different $\ln$-scaled tuning parameters $\lambda$ and threshold $b_d$ in DGP1, DGP2, and DGP3 under $n = 1500$, $d = 80$. In the case of the post-selection algorithm, the left figure plots their performance with $b_d$ as the optimal parameter, while the right figure plots the performance with $\lambda$ as the optimal parameter. }
          \label{fig: performance}

\end{figure}

Figure \ref{fig: performance} evaluates the robustness of the proposed post-selection estimator with respect to different choices of
$\lambda$ and the threshold $b_d.$ The results indicate that the estimator is insensitive to the specific selection of these tuning parameters, achieving near-optimal performance over a relatively wide range of $\lambda$ and $b_d$ values.

\subsection{Real Data Study}

\subsubsection{Portfolio Management}
The aftermath of the 2008 financial crisis highlighted the importance of robust economic analysis and effective risk management. In recent years,  growing attention has been devoted to analyzing the stock prices of leading technology companies. Due to their dominant market positions, fluctuations in these stocks can exert a substantial influence on overall market dynamics and investor behaviors.  In our study, we collect data on the seven most valuable NASDAQ-listed stocks from \url{https://www.kaggle.com/datasets/kalilurrahman/nasdaq100-stock-price-data?rvi=1}. However, since the returns of these stocks remain nonstationary, we focus on examining the mutual relationships between advances (days when the closing price exceeds that of the previous day) and declines in their prices. Recording these states yields a vector-valued binary time series.

\begin{figure}[htbp]
    \centering
    \includegraphics[width=\linewidth]{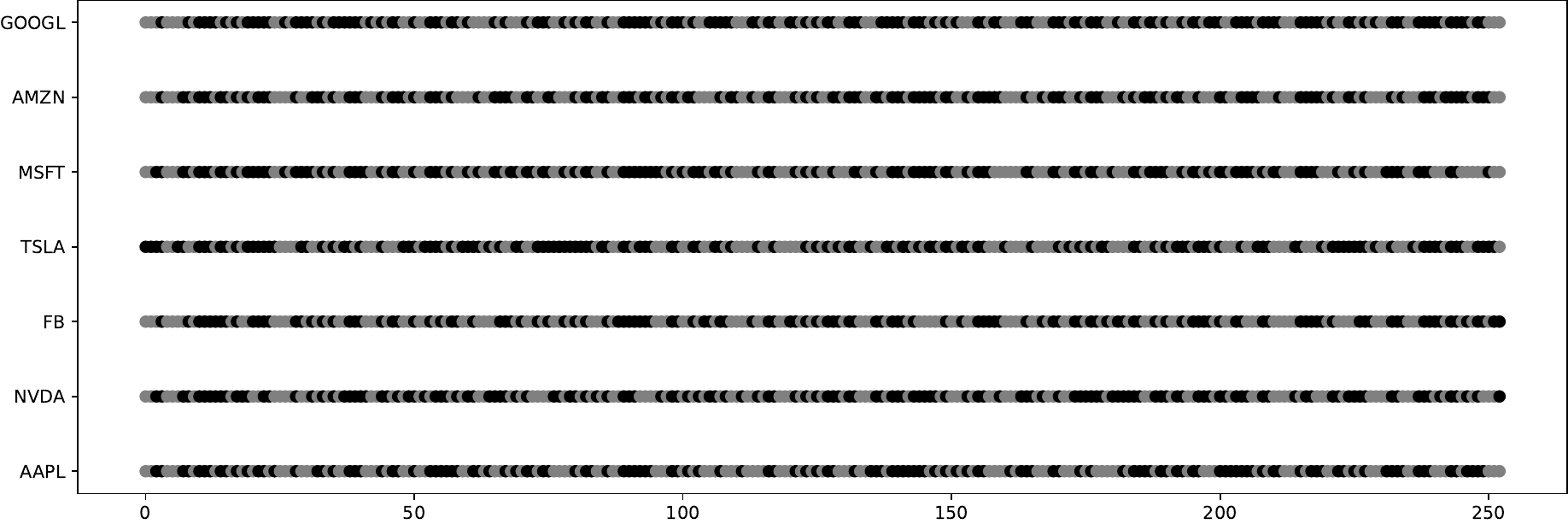}
    \caption{Weekday time series $X_{t,k}$, $k = 1, \ldots, 7$ indicate Advance (black dots) and Decline (no dots) of the closing price between the day $t$ and $t-1$ at seven stocks in NASDAQ: Apple (AAPL), NVIDIA (NVDA), Facebook (FB, currently META), Tesla (TSLA), Microsoft (MSFT), Amazon.com (AMZN), Alphabet (GOOGL) from September 10, 2020 to September 10, 2021. }
\label{fig: weekday time series}
\end{figure}

We observe an important feature: advances and declines in these stocks tend to occur in clusters, indicating both temporal and cross-sectional dependence. Moreover, all sequences exhibit frequent alternations between advances and declines rather than prolonged periods of decline, reflecting the stationarity of the observations. We also calculate the sample mean vector of observations, 
$$
\widehat{\mu}_X = (0.5134, 0.5494, 0.5020, 0.5375, 0.5296, 0.5257, 0.5771)',
$$
demonstrating that all stocks share similar advance rates. It is not surprising, as these seven stocks highlight almost 50\% of the NASDAQ. Even when suffering from the coronavirus pandemic and high unemployment rates, all seven stocks have surged in 2020-2021, with gains ranging from 16\% (Alphabet) to 279\% (Tesla). 

After examining the mean performance, we fit a gbVAR(1) model to study both the temporal and cross-sectional dependence between data. Following the work of \cite{jentsch2022generalized}, we fit a preliminary Yule–Walker estimation to assess the regularity of data. It yields the largest row-sum in the parameter matrix $\max_{k}\sum_{l=1}^K|\widehat{\bm{\alpha}}_{kl}| = 0.6712<1$. In addition, the largest absolute eigenvalue of $\widehat{\mathcal{A}}^\mathtt{OLS}$ is 0.1666, indicating that the underlying data-generating process of the stock prices is stationary and satisfies the assumptions required by our analysis.

We fit the data using the proposed post-selection estimator. In addition, we fit an ordinary least squares (OLS) estimator $\widehat{\mathcal{A}}^\mathtt{OLS}$ as a baseline for comparison, with the results demonstrated in Figure \ref{fig: global_econ}. In addition, Table \ref{tab: wild bootstrap} plots the hyper parameters and the widths of the 95\% confidence intervals generated by Algorithm \ref{algo: swb}. The selection of the hyper parameters is based on the aforementioned train-test split.

Compared to the baseline OLS estimator, our method shrinks small coefficients to zero. Furthermore, for a large proportion of the selected coefficients, the corresponding confidence intervals exclude zero, indicating their statistical significance. As shown in Figure \ref{fig: heatmap}(b), the state of Apple depends negatively on previous states of NVIDIA, and the state of Tesla also depends on both the states of Microsoft and Amazon the day before. This phenomenon can be explained by the ripple effect in the supply chain introduced in \cite{kravchenko2024responding}: the COVID pandemic in 2020-2021 triggered a global semiconductor shortage and a surge in demand for Tesla forced other companies such as Apple and Amazon to temper their expansion plans after the initial pandemic peak, which, in turn, affected their stock prices.

\begin{figure}[htbp]
    \centering
    \subfigure[OLS Estimator $\widehat{\mathcal{A}}^\mathtt{OLS}$]{
    \includegraphics[width=0.45\linewidth]{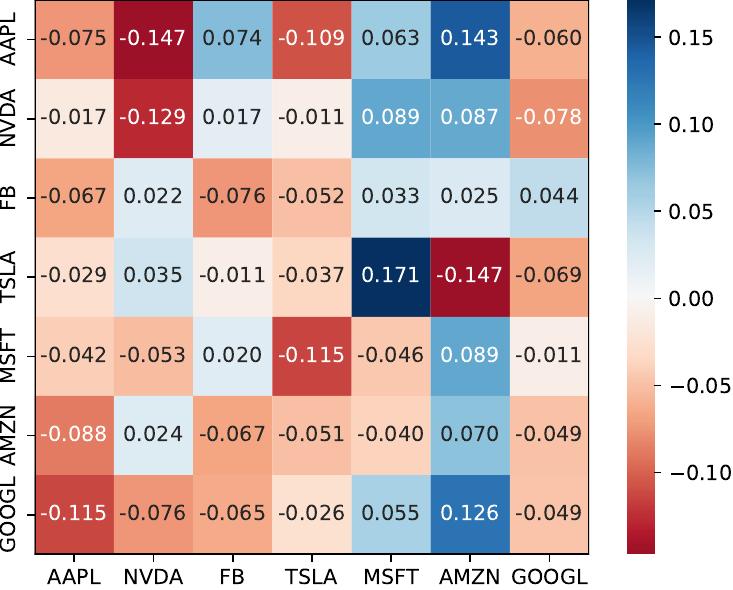}
}
\subfigure[Post-selection Estimator $\widehat{\mathcal{A}}^\mathtt{Post}$]{
    \includegraphics[width=0.45\linewidth]{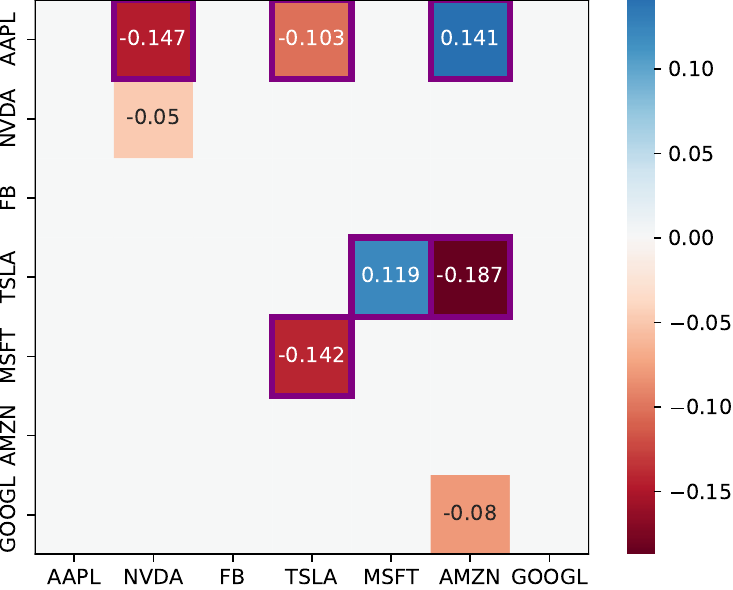}
}
\caption{Heatmaps of estimated coefficient matrices for fitted gbVAR(1) processes on the seven stocks data sample based on (a) OLS $\widehat{\mathcal{A}}^\mathtt{OLS}$ (b) Post-selection algorithm $\widehat{\mathcal{A}}^\mathtt{Post}$. Purple boxes indicate statistically significant coefficients, i.e., that -0.08 and -0.05 are not contained within their confidence intervals.}
\label{fig: heatmap}
\end{figure}

\begin{table}[!ht]
  \caption{Description and results of the daily stock data in the portfolio}
\label{tab: wild bootstrap}
    \centering
    \begin{tabular}{cccccc}
  \hline
 $d$ &
Sample size &
$\lambda$ &
$b_d$ &
$h_n$&
Length in Algorithm \ref{algo: swb}\\ 
  \hline 7 & 253 & $9.4855\times 10^{-5}$ & 0.07158 &2.74 & 0.1986\\
 \hline
    \end{tabular}
\end{table}

\subsubsection{Global trade data}
Another example concerns the annual international trade among $d = 7$ countries (China, the United States, India, Japan, Germany, France, and the United Kingdom) from the year of 1950 to 2014 ($n = 65$), with data obtained from \cite{barbieri2009trading, keshk2017correlates}. We define an edge between two countries as 1 if the imports of country $i$ from country $j$ either begins or increases by more than 10\%, and 0 otherwise. After collecting data, we fit a gbVAR(1) model with our proposed post-selection method, and demonstrate the result in Figure \ref{fig: global_econ}. 

Figure \ref{fig: global_econ} illustrates that the imports of China from Germany, France from China, India from France, and the United States from Germany retain the largest positive autoregressive coefficients on their own lags (ranging from 0.2 to 0.6) with strong persistence. The only positive cross‐regional spillover between Europe and the United States is along the import of the United States from Germany, with an estimated coefficient of about 0.3541. In Asia, the main bilateral trade flows between neighboring countries exhibit mutual positive spillovers, consistent with geographic proximity. The gbVAR model in this case features a highly sparse, modular network and represents the tightest internal connections and relatively point-to-point intercontinental connections. This modular network structure can be explained by the economic policy uncertainty within and through countries introduced by \cite{tam2018global}.

\begin{figure}[htbp]
    \centering
    \includegraphics[width=\textwidth]{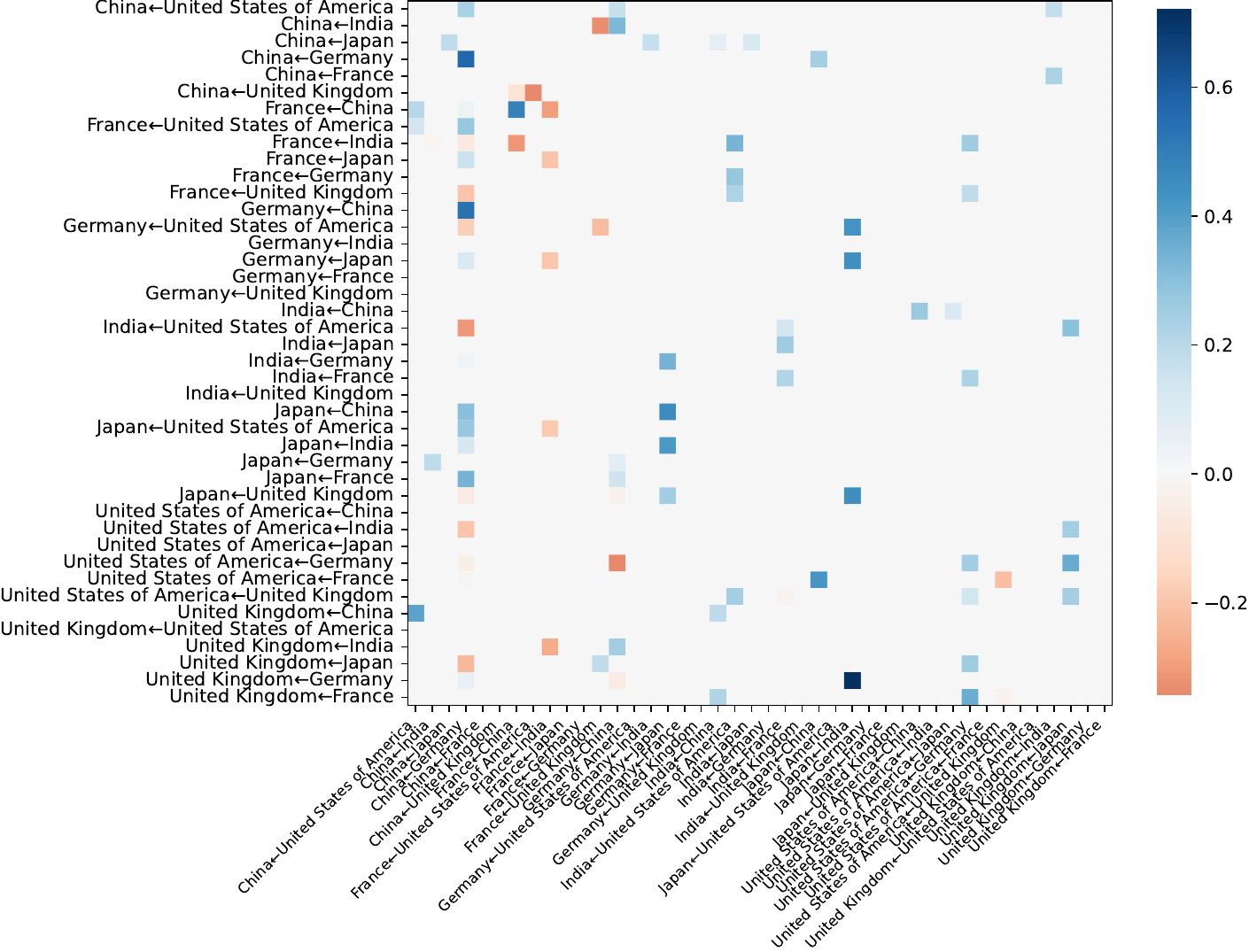}
    \caption{Estimated transition matrix of adjacency matrix for the trades among the 7 countries from 1950 to 2014. The values from -0.4 to 1 are color-coded from light red, light blue to blue.}
    \label{fig: global_econ}
\end{figure}

\section{Conclusion}
\label{sec 6}
This manuscript extends the generalized binary vector autoregressive (gbVAR) model of \cite{jentsch2022generalized} to a high-dimensional sparse binary time series setting, and  adapts the post-selection estimator of \cite{zhang2023statistical} for fitting the gbVAR model. To facilitate statistical inference, we propose a second-order wild bootstrap algorithm. Theoretically, we establish the asymptotic consistency and distributional properties of the post-selection estimator and prove the validity of the proposed bootstrap algorithm. Simulation and real-life data studies demonstrate that our method achieves good model selection and estimation performance relative to existing approaches, and has wide applications in modeling economic and trade data. 

\bibliography{main}
\newpage
\appendix

\section{Proofs in Section \ref{sec 2}}
We first recall the algebraic properties in Remark \ref{rem: partial operator} of the partial inverse operator in the following lemma.
\begin{lemma}
\label{lem: partial}
    The partial inverse operator $\mathcal{F}(\cdot)$ has the following properties:
    \begin{itemize}
        \item [(i)] If $\mathbf{A}\in \mathbb{R}^{d \times d}$ is symmetric and there exists a positive constant $c>0$ such that $\lambda_{\min}(\mathbf{A}) > c,$ then $\mathcal{F}_S(\mathbf{A})$ is also symmetric and it satisfies $|\mathcal{F}_S(\mathbf{A})|_1 \leq \frac{|S|}{c}.$ 
        \item [(ii)] Suppose $\mathbf{A}_{S}$ is invertible. If a vector $\bm{\alpha} \in \mathbb{R}^{1\times d}$ satisfies $S = \{j : \bm{\alpha}_j \neq 0\},$ then we have 
$$\bm{\alpha} \mathbf{A}\mathcal{F}_S(\mathbf{A}) = \bm{\alpha}.$$
    \end{itemize}
\end{lemma}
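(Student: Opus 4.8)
The plan is to treat the two claims separately, in each case reducing everything to the $v \times v$ principal submatrix $\mathbf{A}_S$ (with $v = |S|$) and its (pseudo-)inverse $\mathbf{Z}$, which by construction is the only nonzero block of $\mathcal{F}_S(\mathbf{A})$. The first thing I would record is that under the hypothesis of part (i) the pseudo-inverse is in fact an ordinary inverse, so in both parts I may write $\mathbf{Z} = \mathbf{A}_S^{-1}$.

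For part (i) I would first transfer the hypotheses from $\mathbf{A}$ to $\mathbf{A}_S$. Symmetry of $\mathbf{A}$ makes the principal submatrix $\mathbf{A}_S$ symmetric, and a variational argument controls its smallest eigenvalue: for any $\tilde{x} \in \mathbb{R}^v$, embedding it as $x \in \mathbb{R}^d$ supported on $S$ gives $\tilde{x}^\top \mathbf{A}_S \tilde{x} = x^\top \mathbf{A} x \geq \lambda_{\min}(\mathbf{A}) |x|_2^2 > c |\tilde{x}|_2^2$, so $\lambda_{\min}(\mathbf{A}_S) > c > 0$ (equivalently, Cauchy interlacing). Hence $\mathbf{A}_S$ is invertible, the Moore--Penrose pseudo-inverse equals $\mathbf{Z} = \mathbf{A}_S^{-1}$, which is symmetric with $|\mathbf{Z}|_2 = \lambda_{\min}(\mathbf{A}_S)^{-1} < 1/c$. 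Symmetry of $\mathcal{F}_S(\mathbf{A})$ is then immediate, since it merely places the symmetric entries $\mathbf{Z}_{st}$ at the symmetric positions $(k_s,k_t)$ and zero elsewhere. For the norm bound I would use the elementary estimate $|\mathbf{Z}_{st}| = |e_s^\top \mathbf{Z} e_t| \leq |\mathbf{Z}|_2 < 1/c$ on each entry; since columns of $\mathcal{F}_S(\mathbf{A})$ not indexed by $S$ vanish and each remaining column has at most $|S|$ nonzero entries, $|\mathcal{F}_S(\mathbf{A})|_1 = |\mathbf{Z}|_1 = \max_t \sum_s |\mathbf{Z}_{st}| \leq |S|/c$.

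For part (ii) the plan is a direct index computation after restricting to $S$. Let $\tilde{\bm{\alpha}} \in \mathbb{R}^{1 \times v}$ denote the restriction of $\bm{\alpha}$ to its support $S$. Because $\alpha_k = 0$ for $k \notin S$, the $k_s$-th entry of $\bm{\alpha}\mathbf{A}$ equals $\sum_r \alpha_{k_r} A_{k_r k_s} = (\tilde{\bm{\alpha}} \mathbf{A}_S)_s$, so the $S$-restriction of the row $\bm{\alpha}\mathbf{A}$ is exactly $\tilde{\bm{\alpha}} \mathbf{A}_S$. Multiplying on the right by $\mathcal{F}_S(\mathbf{A})$ couples only the columns indexed by $S$, through $\mathbf{Z} = \mathbf{A}_S^{-1}$, so the $k_t$-th entry of $\bm{\alpha}\mathbf{A}\mathcal{F}_S(\mathbf{A})$ equals $(\tilde{\bm{\alpha}} \mathbf{A}_S \mathbf{A}_S^{-1})_t = \tilde{\alpha}_t = \alpha_{k_t}$, while every entry outside $S$ is zero by the same support argument—matching $\bm{\alpha}$, which also vanishes off $S$. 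Concatenating these two cases gives $\bm{\alpha}\mathbf{A}\mathcal{F}_S(\mathbf{A}) = \bm{\alpha}$.

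The only genuine subtlety is the passage from $\lambda_{\min}(\mathbf{A}) > c$ to $\lambda_{\min}(\mathbf{A}_S) > c$ in part (i), which I would argue explicitly via the variational characterization above rather than treat as self-evident. Everything else is bookkeeping over the index set $S$ together with the elementary facts that a symmetric positive-definite matrix has invertible principal submatrices and that the largest absolute entry of a matrix is dominated by its spectral norm.
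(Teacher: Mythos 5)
Your proof is correct and takes essentially the same approach as the paper's: both parts reduce everything to the principal submatrix $\mathbf{A}_S$ and its inverse, with (i) obtaining $|\mathcal{F}_S(\mathbf{A})|_1 \leq |S|/c$ from the spectral bound $|\mathbf{A}_S^{-1}|_2 \leq 1/c$ and (ii) proved by the same index computation split on whether $j \in S$. Your explicit variational argument for $\lambda_{\min}(\mathbf{A}_S) > c$ fills in a step the paper simply asserts, but the substance is identical.
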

\begin{proof}
(i) For $\mathbf{A}$ is symmetric and positive definite, $\mathbf{A}_S$ is also symmetric and positive definite, and its inverse $\mathbf{A}_S^{-1}$ is symmetric. Therefore, $\mathcal{F}_S(\mathbf{A})$ is symmetric. For $|\mathbf{A}_S^{-1}|_2 \leq 1/c$,
$$
|\mathcal{F}_S(\mathbf{A})|_1 = |\mathbf{A}_S^{-1}|_1 \leq |S||\mathbf{A}_S^{-1}|_2 \leq \frac{|S|}{c}.
$$

(ii) For any $j = 1, \ldots, d$, if $j \not\in S$, then $\mathcal{F}_S(\mathbf{A})_{\cdot j}=0$, so the $j$th element of $\bm{\alpha} \mathbf{A}\mathcal{F}_S(\mathbf{A})$ is 0, which equals $\bm{\alpha}_j$. If $j = k_v$ for some $v= 1, \ldots, d$, then
$$
(\bm{\alpha} \mathbf{A}\mathcal{F}_S(\mathbf{A}))_j = \sum_{s=1}^{d}\bm{\alpha}_{k_s} \sum_{t=1}^d\mathbf{A}_{k_sk_t} \mathcal{F}_S(\mathbf{A})_{k_t j}= \bm{\alpha}_j.
$$
\end{proof}
\section{Proofs of main theorems in Section \ref{sec 4}}
In this section, we provide proofs of aforementioned propositions, theorems and useful lemmas.
\subsection{Proof of Proposition \ref{lem: gmc(q)}}
\begin{proof}
    By \eqref{eq: gbVMA} and \eqref{eq: zeta, eta}, we have
\begin{equation*}
   \begin{aligned}
       X_{ts} - X_{ts,\{0\}} &= \sum_{i=0}^\infty u_s^\top (\zeta_{t,i-1} \eta_{t-i} - \zeta_{t,i-1}' \eta_{t-i}')\\
       & = \sum_{i={t+1}}^\infty u_s^\top(\zeta_{t,i-1} - \zeta_{t,i-1}') \eta_{t-i} + u_s^\top\zeta_{t, t-1}(\eta_{0}-\eta_{0}'),
   \end{aligned} 
\end{equation*}
where $u_s$ is an $s$-th unit vector, 
\begin{equation*}
    \begin{aligned}
        \zeta_{t,i-1}- \zeta_{t,i-1}' &=  \prod_{j=0}^{t-1}A_{t-j}^{(+)}\left(A_{0}^{(+)}-{A_0^{(+)}}'\right) \prod_{j=t+1}^{i}A_{t-j}^{(+)},
    \end{aligned}
\end{equation*}
and
\begin{equation*}
    \begin{aligned}
        \eta_0 - \eta_0' &= A_0^{(-)}\1_d +B_0e_0 - ({A_0^{(-)}}'\1_d +B_0'e_0') \\
        & = (A_0^{(-)} - {A_0^{(-)}}')\1_d +(B_0 - B_0') e_0 + B_0'(e_0 - e_0')
    \end{aligned}
\end{equation*} 
Now, we derive the upper bound of $q$-moment of $X_{ts} - X_{ts,\{0\}}$. First, since
\begin{equation*}
    \begin{aligned}
        \||A_{t-j}^{(+)}|_\infty\|_q& = \left\|  \max_{1\leq k \leq d}\sum_{l=1}^d |A_{t-j}^{(+,kl)}|\right\|_q \leq \max_{1 \leq k \leq d} \left\| \sum_{l=1}^d |A_{t-j}^{(+,kl)}|\right\|_q \\
        & \leq \max_{1 \leq k \leq d} \left(\sum_{l=1}^d \mathbb{E}|A_{t-j}^{(+,kl)}|^m\right)^{1/q} \\
        & = \max_{1 \leq k \leq d} \left(\sum_{l=1}^d \mathcal{A}_{|\cdot|}^{kl}\right)^{1/q}:=\rho_A,
    \end{aligned}
\end{equation*}
we have
\begin{equation*}
    \begin{aligned}
        \||\zeta_{t,i-1}- \zeta_{t,i-1}'|_\infty\|_q &\leq \prod_{j =0, j \neq t-1}^i \||A_{t-j}^{(+)}|_\infty \|_q \||A_{0}^{(+)}-{A_0^{(+)}}'|_\infty \|_q\\
        & \leq  \rho_A^i \||A_{0}^{(+)}-{A_0^{(+)}}'|_\infty \|_q\\
        & := \rho_A^i \delta_{A,q} ,
    \end{aligned}
\end{equation*}
and
\begin{equation*}
    \begin{aligned}
        \||\zeta_{t,t-1}|_\infty \|_q & = \left\|\prod_{j=0}^{t}\left|A_{t-j}^{(+)} \right|_\infty\right\|_q  \leq  \left(\max_{1\leq k\leq d} \left(\sum_{l=1}^d \mathcal{A}_{|\cdot|}^{kl}\right)^{1/q}\right)^{t+1} = \rho_A^{t+1}.
    \end{aligned}
\end{equation*}
Similarly, we have
\begin{equation*}
    \begin{aligned}
        \||\eta_{t-i}|_\infty \|_q & = \||A_{t-i}^{(-)}\1_d +B_{t-i}e_{t-i} |_\infty\|_q \\
        &\leq \||A_{t-i}^{(-)}|_\infty \|_q+ \||B_{t-i}|_\infty\|_q + \||e_{t-i}|_\infty\|_q\\
        & := \rho_A+\rho_B+\rho_e.
    \end{aligned}
\end{equation*}
and 
\begin{equation*}
\begin{aligned}
     \||\eta_0 - \eta_0'|_\infty\|_q &\leq \||A_0^{(-)} - {A_0^{(-)}}'|_\infty\|_q+\||B_0 -B_0'|_\infty\|_q + \||e_0-e_0'|_\infty\|_q\\
     &:= \delta_{A,q} + \delta_{B,q}+\delta_{e,q}.
\end{aligned}
\end{equation*}
Therefore,
\begin{equation*}
\begin{aligned}
     \|X_{ts} - X_{ts,\{(0)\}}\|_q  \leq  &\sum_{i=t-1}^\infty  \||\zeta_{t,i-1}- \zeta_{t,i-1}'|_\infty\|_q \||\eta_{t-i}|_\infty\|_q +\||\zeta_{t,t-1}|_\infty\|_q \||\eta_{0} -\eta_0'|_\infty\|_q \\
      \leq &\sum_{i=t+1}^\infty \rho_A^{i}\delta_{A,q}(\rho_A+\rho_B+\rho_e) + 2\rho_A^{t+1} (\delta_{A,q} + \delta_{B,q}+\delta_{e,q})\\
       = & \frac{\rho_A^{t+1}}{1-\rho_A}\delta_{A,q}(\rho_A+\rho_B+\rho_e) + 2\rho_A^{t+1} (\delta_{A,q} + \delta_{B,q}+\delta_{e,q}).
\end{aligned}
\end{equation*}
As a result, if $\rho_A \in (0,1)$ and there exists a constant $$C = \frac{\rho_A \delta_{A,q}}{(1-\rho_A)^2} (\rho_A+\rho_B+\rho_e) + \frac{2\rho_A}{1-\rho_A}\rho_A(\delta_{A,q} + \delta_{B,q}+\delta_{e,q}), $$ we have
\begin{equation*}
    \begin{aligned}
      \|X_{.s} \|_q: = \sup_{r\geq 0}\rho_A^{-r}\sum_{t = r}^\infty \|X_{ts} - X_{ts}^{(0)} \|_q \leq C<\infty.
    \end{aligned}
\end{equation*}
\end{proof}

\subsection{Proof of Theorem \ref{thm: alpha}}
\begin{proof}
    By definition
    \begin{equation*}
        \begin{aligned}
            &\frac{1}{2d}|\hat{\bm{\Sigma}}_{i.}^{(1)} - \hat{\bm{\alpha}}_{i.}\hat{\bm{\Sigma}}^{(0)}|_2^2+\lambda |\hat{\bm{\alpha}}_{i.}|_1 \leq \frac{1}{2d}|\hat{\bm{\Sigma}}_{i.}^{(1)} - \bm{\alpha}_{i.}\hat{\bm{\Sigma}}^{(0)}|_2^2+\lambda |\bm{\alpha}_{i.}|_1\\
            &\frac{1}{2d}(\hat{\bm{\alpha}}_{i.} - \bm{\alpha}_{i.})\hat{\bm{\Sigma}}^{(0)} \hat{\bm{\Sigma}}^{(0)\top} (\hat{\bm{\alpha}}_{i.} - \bm{\alpha}_{i.})^\top \\
            \leq  & \lambda (|\bm{\alpha}_{i.}|_1  - |\hat{\bm{\alpha}}_{i.}|_1) + \frac{1}{d}(\hat{\bm{\Sigma}}_{i.}^{(1)} - \bm{\alpha}_{i.}\hat{\bm{\Sigma}}^{(0)})\hat{\bm{\Sigma}}^{(0)}(\hat{\bm{\alpha}}_{i.} -\bm{\alpha}_{i.}) \\
             \leq &\lambda \sum_{j \in \hat{S} _i}|\hat{\bm{\alpha}}_{ij} - \bm{\alpha}_{ij}| - \lambda \sum_{j \not\in \hat{S}_i}|\hat{\bm{\alpha}}_{ij}| + \frac{1}{d}|\hat{\bm{\Sigma}}_{i.}^{(1)} - \bm{\alpha}_{i.}\hat{\bm{\Sigma}}^{(0)}|_{\infty}|\hat{\bm{\Sigma}}^{(0)}|_{\max} |\hat{\bm{\alpha}}_{i.} -\bm{\alpha}_{i.}|_1
        \end{aligned}
    \end{equation*}
    From Assumption B1, we have 
    \begin{equation*}
    \begin{aligned}
        0\leq \frac{3\lambda}{2}\sum_{j \in \hat{S} }|\hat{\bm{\alpha}}_{ij} - \bm{\alpha}_{ij}| - \frac{\lambda}{2}\sum_{j \not\in \hat{S}_i}|\hat{\bm{\alpha}}_{ij}|\\
        \sum_{j \not\in \hat{S}_i}|\hat{\bm{\alpha}}_{ij}|\leq 3\sum_{j \in \hat{S}_i}|\hat{\bm{\alpha}}_{ij} - \bm{\alpha}_{ij}| \text{ for all } i = 1,2,\ldots, d
        \end{aligned}
    \end{equation*}
    Therefore, with probability tending to 1
    \begin{equation}
    \label{eq: wp1 }
        \begin{aligned}
            |\hat{\bm{\alpha}}_{i.} - \bm{\alpha}_{i.}|_1 \leq 4\sum_{j \in \hat{S}_i }|\hat{\bm{\alpha}}_{ij} - \bm{\alpha}_{ij}| \leq 4 \sqrt{| \hat{S}_i|} |\hat{\bm{\alpha}}_{ij} - \bm{\alpha}_{ij}|_2 \leq C|\hat{\bm{\alpha}}_{ij} - \bm{\alpha}_{ij}|_2 
        \end{aligned}
    \end{equation}
and from Assumption B2, for sufficiently large $n$
    \begin{equation*}
    \begin{aligned}
       & (\hat{\bm{\alpha}}_{i.} - \bm{\alpha}_{i.})\hat{\bm{\Sigma}}^{(0)} \hat{\bm{\Sigma}}^{(0)\top} (\hat{\bm{\alpha}}_{i.} - \bm{\alpha}_{i.})^\top \\
       &\geq (\hat{\bm{\alpha}}_{i.} - \bm{\alpha}_{i.}){\bm{\Sigma}}^{(0)} {\bm{\Sigma}}^{(0)\top} (\hat{\bm{\alpha}}_{i.} - \bm{\alpha}_{i.})^\top - |\hat{\bm{\Sigma}}^{(0)\top} - {\bm{\Sigma}}^{(0)}|_{\max}^2 |\hat{\bm{\alpha}}_{i.} - \bm{\alpha}_{i.}|_1^2\\
        &\geq \frac{c}{2}|\hat{\bm{\alpha}}_{i.} - \bm{\alpha}_{i.}|_2^2.
        \end{aligned}
    \end{equation*}

    Thus, we have
    \begin{equation*}
        \frac{c}{2}|\hat{\bm{\alpha}}_{i.} - \bm{\alpha}_{i.}|_2^2 \leq \frac{3\lambda}{2}\sum_{j \in \hat{S}_i} |\hat{\bm{\alpha}}_{ij} - \bm{\alpha}_{ij}|\leq C\lambda |\hat{\bm{\alpha}}_{i.} - \bm{\alpha}_{i.}|_2\Rightarrow |\hat{\bm{\alpha}}_{i.} - \bm{\alpha}_{i.}|_2 \leq C \lambda
    \end{equation*}
    for any $i$ with probability tending to 1. \eqref{eq: wp1 } implies that
    $$
|\hat{\bm{\alpha}}_{i.} - \bm{\alpha}_{i.}|_1 \leq |\hat{\bm{\alpha}}_{i.} - \bm{\alpha}_{i.}|_2 \leq C' \lambda
    $$
    with probability tending to 1. 

    Assume that $\min_{i = 1,\cdots, n, j\in S_i}\vert\bm{\alpha}_{ij}\vert > 2\lambda$. Since $\vert a\vert \geq \vert b\vert - \vert a  - b\vert$, 
we have
    \begin{align*}
    &\mathbb{P}\left(\bigcup_{i = 1}^d\hat{S}_i\neq S_i\right)
    \leq \mathbb{P}\left(\bigcup_{i = 1}^d \exists j\in S_i, j\not \in\hat{S}_i\right) + \mathbb{P}\left(\bigcup_{i = 1}^d \exists j\not\in S_i, j\in \hat{S}_i\right)\\
    \leq& \mathbb{P}\left(\min_{i = 1,\cdots, p,j\in S_i}\vert\hat{\bm{\alpha}}_{ij}\vert\leq \lambda\right) + \mathbb{P}\left(\max_{i = 1,\cdots, p,j\not\in S_i}\vert\hat{\bm{\alpha}}_{ij}\vert > \lambda\right)\\
    \leq &\mathbb{P}\left(\min_{i = 1,\cdots, p,j\in S_i}2\lambda  -\vert\hat{\bm{\alpha}}_{ij} - \bm{\alpha}_{ij}\vert\leq \lambda\right) + \mathbb{P}\left(\max_{i = 1,\cdots, p,j\not\in S_i}\vert\hat{\bm{\alpha}}_{ij} - \bm{\alpha}_{ij}\vert > \lambda\right)\\
    \leq &2\mathbb{P}\left(\max_{i,j}\vert \hat{\bm{\alpha}}_{ij} - \bm{\alpha}_{ij}\vert > \lambda\right) =o(1),
\end{align*}
and we prove \eqref{eq: index set}.
\end{proof}

\subsection{Proof of Lemma \ref{thm: GA}}
\begin{proof}
  Denote $\mathbb{E}_0( \cdot) = \cdot - \mathbb{E}(\cdot)$ and sum of $X_i$ and $\tilde{X}_i$ defined in Lemma \ref{lem: momeng for sum},
  $$
    T_X = \sum_{i=1}^n X_i \quad \text{ and } \quad T_{\tilde{X}}= \sum_{i=1}^n \tilde{X}_i.
    $$
    By Lemma 2 in \cite{zhang2023debiased}, we have a decomposition
    \begin{equation*}
        \begin{aligned}
            &|\mathbb{P}(\sqrt{n}|\bar{X} -\mu|_\infty \leq x)-\mathbb{P}(|Z|_\infty \leq x)| \\
            \leq &\sup_{x\in \mathbb{R}}\left|\mathbb{E}h_{\psi,\psi, x}\left(\frac{1}{\sqrt{n}}\mathbb{E}_0 T_{X,1},\ldots, \frac{1}{\sqrt{n}}\mathbb{E}_0 T_{X,d}\right) -\mathbb{E}h_{\psi,\psi, x}(Z_1, \ldots, Z_d)\right|\\
           & \qquad+ Ct(1+\sqrt{\log d} + \sqrt{|\log t|}),
        \end{aligned}
    \end{equation*}
    where $t= (1+\log (2d))/\psi$ and $h_{\psi,\psi,x}(\cdot)$ is well-defined in \cite{zhang2023debiased}.
    By Lemma \ref{lem: momeng for sum}, we have
    $$
\| T_{X,j} - T_{\tilde{X},j}\|_2 = \|\sum_{i=1}^n \mathbb{E}_0 (X_{ij} - \tilde{X}_{ij})\|_2 \leq  \sqrt{n}\rho^m \|X.\|_2.
    $$
    Let $\ell = 1\vee \log d$. We have
    \begin{equation*}
        \begin{aligned}
            &\left|\mathbb{E} h_{\psi,\psi, x}\left(\frac{1}{\sqrt{n}}\mathbb{E}_0 T_{X,1}, \ldots, \frac{1}{\sqrt{n}}\mathbb{E}_0 T_{X,d}\right)-\mathbb{E}h_{\psi,\psi, x}\left(\frac{1}{\sqrt{n}} \mathbb{E}_0 T_{\tilde{X},1}\ldots, \frac{1}{\sqrt{n}} \mathbb{E}_0T_{\tilde{X},d}\right)\right|\\
            \leq & \frac{C\psi}{\sqrt{n}}\|\max_{j=1,\ldots, d}| T_{X,j} - T_{\tilde{X},j}|\|_2 \\
            \leq& C\psi \sqrt{\ell}\rho^{m} \|X_{\cdot}\|_2.
        \end{aligned}
    \end{equation*}
Suppose $n = (M+m)w$, where $M \gg m$ and $M, m, w \to\infty$ as $n \to \infty.$ We apply the block technique and split the interval $[1,n]$ into alternating large blocks $L_b = [(b-1)(M+m)+1, bM+(b-1)M]$ and small blocks $S_b = [bM+(b-1)m+1, b(M+m)]$, $1\le b \le w$. Let 
$$
Y_b = \sum_{i \in L_b}X_i, \qquad \tilde{Y}_b =\sum_{i\in L_b}\tilde{X}_i, \qquad T_Y = \sum_{b=1}^w Y_b, \qquad T_{\tilde{Y}} = \sum_{b=1}^w \tilde{Y}_b.
$$
By inequality (ii) in Lemma \ref{lem: momeng for sum}, we have
$$
\|\frac{1}{\sqrt{n}}(\mathbb{E}_0T_{\tilde{X}} - \mathbb{E}_0T_{\tilde{Y}})\|_2  = \|\frac{1}{\sqrt{n}}\sum_{b=1}^w \sum_{i \in S_b}\mathbb{E}_0\tilde{X}_i\|_2\leq \frac{\sqrt{wm}}{\sqrt{n}}\|X_.\|_2
$$
which implies
\begin{equation}
    \label{eq: decomp 1}
    \begin{aligned}
   & \sup_{x\in \mathbb{R}}\left|\mathbb{E}h_{\psi,\psi, x}\left(\frac{1}{\sqrt{n}}\mathbb{E}_0 T_{\tilde{X},1},\ldots, \frac{1}{\sqrt{n}}\mathbb{E}_0T_{\tilde{X},d}  \right)- \mathbb{E}h_{\psi,\psi, x}\left(\frac{1}{\sqrt{n}}\mathbb{E}_0 T_{\tilde{Y},1},\ldots,\frac{1}{\sqrt{n}}\mathbb{E}_0 T_{\tilde{Y},d}  \right)\right|\\
        \leq &C_2 \psi \|\max_{j= 1,\ldots, d} |\frac{1}{\sqrt{n}}(\mathbb{E}_0T_{\tilde{X}} - \mathbb{E}_0T_{\tilde{Y}})|\|_2 \leq C_2 \psi\frac{ \sqrt{\ell wm}}{\sqrt{n}} \|X_.\|_2 
        \end{aligned}
\end{equation}
Let $Z_b, 1\leq b\leq w$, be i.i.d $N(0, MB)$ and $\tilde{Z}_b$ be i.i.d $N(0, M\tilde{B})$, where the covariance matrices $B$ and $\tilde{B}$ are respectively given by
$$
B=(b_{ij})_{i,j=1}^d = \mathrm{Cov}(Y_b/\sqrt{M}) \quad \text{and}\quad \tilde{B}=(\tilde{b}_{ij})_{i,j=1}^d = \mathrm{Cov}(\tilde{Y}_b/\sqrt{M}).
$$
Write $T_{\tilde{Z}}=\sum_{b=1}^w \tilde{Z}_b$ and let $Z \sim N(0, \bm{\Sigma})$. Define
$$
H_k = \sum_{b=1}^{k-1}\tilde{Y}_b+\sum_{b=k+1}^w \tilde{Z}_b,
$$
satisfying $H_k + \tilde{Y}_k = H_{k+1}+\tilde{Y}_{k+1}$. Thus,

\begin{equation}
\label{eq: decomp 2}
    \begin{aligned}
  &  \left|\mathbb{E}h_{\psi,\psi,x}\left(\frac{1}{\sqrt{n}}\mathbb{E}_0T_{\tilde{Y},1}, \ldots, \frac{1}{\sqrt{n}}\mathbb{E}_0T_{\tilde{Y},d}\right) - \mathbb{E}h_{\psi,\psi,x}\left(\frac{1}{\sqrt{n}}\mathbb{E}_0T_{\tilde{Z},1}, \ldots, \frac{1}{\sqrt{n}}\mathbb{E}_0T_{\tilde{Z},d}\right)\right|\\
   =& \Bigg|\mathbb{E}h_{\psi,\psi,x}\left(\frac{1}{\sqrt{n}}\mathbb{E}_0 (H_{w,1}+\tilde{Y}_{w,1}), \ldots, \frac{1}{\sqrt{n}}\mathbb{E}_0(H_{w,d}+\tilde{Y}_{w,d})\right)  \\
   &\qquad -\mathbb{E}h_{\psi,\psi,x}\left(\frac{1}{\sqrt{n}}\mathbb{E}_0 (H_{1,1}+\tilde{Z}_{1,1}), \ldots, \frac{1}{\sqrt{n}}\mathbb{E}_0(H_{1,d}+\tilde{Z}_{1,d})\right) \Bigg|\\
   \leq & \sum_{k=1}^w\Bigg|\mathbb{E}h_{\psi,\psi,x}\left(\frac{1}{\sqrt{n}}\mathbb{E}_0 (H_{k,1}+\tilde{Y}_{k,1}), \ldots, \frac{1}{\sqrt{n}}\mathbb{E}_0(H_{k,d}+\tilde{Y}_{k,d})\right)  \\
   &\qquad -\mathbb{E}h_{\psi,\psi,x}\left(\frac{1}{\sqrt{n}}\mathbb{E}_0 (H_{k,1}+\tilde{Z}_{k,1}), \ldots, \frac{1}{\sqrt{n}}\mathbb{E}_0(H_{k,d}+\tilde{Z}_{k,d})\right) \Bigg|\\
   \leq &C w\psi^3 \ell^{3/2}\max_{j} \|\frac{1}{\sqrt{n}}\mathbb{E}_0T_{\tilde{Y},j}\|_2^3\\
   \stackrel{(a)}{\leq} &C  w\psi^3 \ell^{3/2} \left( \frac{\sqrt{M}}{\sqrt{n}}\|X_.\|_2\right)^3 \leq C\psi^3 \ell^{3/2} \frac{\sqrt{M}}{\sqrt{n}}\|X_.\|_2^3
    \end{aligned}
\end{equation}
where the inequality $(a)$ is from
$$
\|\frac{1}{\sqrt{n}}\mathbb{E}_0T_{\tilde{Y},j}\|_2 = \frac{1}{\sqrt{n}}\|\mathbb{E}_0T_{\tilde{Y},j}\|_2 \leq \frac{\sqrt{M}}{\sqrt{n}}\|X_.\|_2
$$
for any $j\in[d]$. By the definition of $T_{\tilde{Z}}$ and its covariance matrices $\tilde{B}$, 
\begin{align*}
\bm{\Sigma}^{\tilde{Z}}:=\mathrm{Cov}(T_{\tilde{Z}}/\sqrt{n}) = \frac{Mw}{n}\tilde{B}.
\end{align*}
Similar to Lemma 7.3 from \cite{zhang2017gaussian} and (B.30) in \cite{zhang2023debiased},
\begin{equation}
\label{eq: decomp 3}
    \begin{aligned}
   & \left|\mathbb{E}h_{\psi,\psi,x}\left(\frac{1}{\sqrt{n}}\mathbb{E}_0T_{\tilde{Z},1}, \ldots, \frac{1}{\sqrt{n}}\mathbb{E}_0T_{\tilde{Z},d}\right)-\mathbb{E}h_{\psi,\psi,x}\left(\frac{1}{\sqrt{n}}\mathbb{E}_0T_{{Z},1}, \ldots, \frac{1}{\sqrt{n}}\mathbb{E}_0T_{{Z},d}\right)\right|\\
    \leq & C\psi \sqrt{\ell}\|\frac{1}{\sqrt{n}}\sum_{b=1}^w \sum_{i \in S_b}\mathbb{E}_0\tilde{Z}_i\|_2 +   C\psi^2  |\bm{\Sigma}^{\tilde{Z}}-\bm{\Sigma}|_{\max}\\
    \leq &  C\psi \sqrt{\ell}\frac{\sqrt{wm}}{\sqrt{n}} +  \frac{C\psi^2Mw}{n} (|\tilde{B}-B|_{\max} +|B-\bm{\Sigma}|_{\max}) +C\psi^2 \left(1-\frac{Mw}{n}\right)|\bm{\Sigma}|_{\max}\\
   \leq & C\psi \sqrt{\ell}\frac{\sqrt{wm}}{\sqrt{n}} + \frac{CMw\psi^2}{n} \|X_.\|_2^2\left(\rho^{2m}+\frac{\log M}{M}\right) + \frac{C_2wm}{n}
        \end{aligned}
\end{equation}

From \eqref{eq: decomp 1}, \eqref{eq: decomp 2} and \eqref{eq: decomp 3}, we prove the Lemma.
\end{proof}

\subsection{Proof of Theorem \ref{thm: GA post selection}}
\begin{proof}
    From Theorem \ref{thm: alpha}, $\mathbb{P}(\cap_{i=1,\ldots, d}\{\hat{S}_i =S_i\})\rightarrow 1$ as the sample size $n \rightarrow \infty$. For $i =1, \ldots, d$, $$\tilde{\bm{\alpha}}_{i.} = \hat{\bm{\Sigma}}_{i.}^{(1)}\hat{\bm{\Sigma}}^{(0)}\mathcal{F}_{S_{i}}(\hat{\bm{\Sigma}}^{(0)\top}\hat{\bm{\Sigma}}^{(0)})$$
From Lemma \ref{lem: Sigma hat}, we can denote $\bm{\Delta}^{(k)} = \hat{\bm{\Sigma}}^{(k)} - \bm{\Sigma}^{(k)}$ with
$$
\bm{\Delta}^{(k) } = \frac{1}{n} \sum_{t = 1}^{n-k} (X_{t+k} - \Bar{X})(X_{t} - \Bar{X})^\top - \bm{\Sigma}^{(k)} := \frac{1}{n}\sum_{t = 1}^{n-k} \bm{\Gamma}_t^{(k)}, \text{ and } \bm{\Gamma}_t^{(k)} = \mathbb{E}_0z_{t+k}z_{t}^\top ,
$$
here $z_t = X_t - \Bar{X}$ and $t\in \mathbb{Z}$ and $\mathbb{E}_0 \cdot = \cdot - \mathbb{E}\cdot$. Therefore, for $i = 1,\ldots, d$,
\begin{equation}
\label{eq: formula of CLT}
    \begin{aligned}
    \tilde{\bm{\alpha}}_{i.} - \bm{\alpha}_{i.} = &\hat{\bm{\Sigma}}_{i.}^{(1)}\hat{\bm{\Sigma}}^{(0)}\mathcal{F}_{S_{i}}(\hat{\bm{\Sigma}}^{(0)\top}\hat{\bm{\Sigma}}^{(0)})-
 \bm{\Sigma}_{i.}^{(1)}\bm{\Sigma}^{(0)}\mathcal{F}_{S_{i}}(\bm{\Sigma}^{(0)\top}\bm{\Sigma}^{(0)})\\
=&(\bm{\Sigma}_{i.}^{(1)} + \bm{\Delta}_{i.}^{(1)})(\bm{\Sigma}^{(0)} + \bm{\Delta}^{(0)})\mathcal{F}_{S_i}(\hat{\bm{\Sigma}}^{(0)\top }\hat{\bm{\Sigma}}^{(0)} )-  \bm{\Sigma}_{i.}^{(1)}\bm{\Sigma}^{(0)}\mathcal{F}_{S_i}(\bm{\Sigma}^{(0)\top } \bm{\Sigma}^{(0)} ) \\
=&\bm{\Sigma}_{i.}^{(1)} \bm{\Sigma}^{(0)} (\mathcal{F}_{S_i}(\hat{\bm{\Sigma}}^{(0)\top }\hat{\bm{\Sigma}}^{(0)} )  - \mathcal{F}_{S_i}(\bm{\Sigma}^{(0)\top } \bm{\Sigma}^{(0)} )) \\
&+ ( \bm{\Delta}_{i.}^{(1)} \bm{\Sigma}^{(0)} + \bm{\Sigma}_{i.}^{(1)} \bm{\Delta}^{(0)} +  \bm{\Delta}_{i.}^{(1)} \bm{\Delta}^{(0)} )\mathcal{F}_{S_i}(\hat{\bm{\Sigma}}^{(0)\top }\hat{\bm{\Sigma}}^{(0)} ) 
    \end{aligned}
\end{equation}
Now, for any $j = 1, \ldots, d$
\begin{align*}
   \bm{\kappa}^{(t)} =
 \begin{bmatrix}
\ve\left( \left[\bm{\Gamma}_{t,1.}^{(1)}\bm{\Sigma}^{(0)} + \bm{\Sigma}_{1.}^{(1)}\bm{\Gamma}_{t}^{(0)} -\bm{\Sigma}_{1.}^{(1)}\bm{\Sigma}^{(0)} \mathcal{F}_{S_1}(\bm{\Sigma}^{(0)} \bm{\Sigma}^{(0)}) (\bm{\Sigma}^{(0)}\bm{\Gamma}_{t}^{(0)} + \bm{\Gamma}_{t}^{(0)}\bm{\Sigma}^{(0)})\right]\mathcal{F}_{S_1}(\bm{\Sigma}^{(0)}\bm{\Sigma}^{(0)})\right)
\\
\vdots  \\
 \ve\left( \left[\bm{\Gamma}_{t,d.}^{(1)}\bm{\Sigma}^{(0)} + \bm{\Sigma}_{d.}^{(1)}\bm{\Gamma}_{t}^{(0)} -\bm{\Sigma}_{d.}^{(1)}\bm{\Sigma}^{(0)} \mathcal{F}_{S_d}(\bm{\Sigma}^{(0)} \bm{\Sigma}^{(0)}) (\bm{\Sigma}^{(0)}\bm{\Gamma}_{t}^{(0)} + \bm{\Gamma}_{t}^{(0)}\bm{\Sigma}^{(0)})\right]\mathcal{F}_{S_d}(\bm{\Sigma}^{(0)}\bm{\Sigma}^{(0)})\right)
\end{bmatrix}
.
\end{align*}
Then,
$$
\begin{aligned}
    &\mathbb{E}\left( \bm{\Gamma}^{(1)}_{t,i.} \bm{\Sigma}^{(0)} + \bm{\Sigma}^{(1)}_{i.} \bm{\Gamma}^{(0)}_{t}- \bm{\Sigma}^{(1)}_{i.} \bm{\Sigma}^{(0)}\mathcal{F}_{S_i}(\bm{\Sigma}^{(0)\top } \bm{\Sigma}^{(0)} )  (\bm{\Gamma}^{(0)}_{t}\bm{\Sigma}^{(0)} + \bm{\Sigma}^{(0)}\bm{\Gamma}^{(0)}_{t}) \right) \mathcal{F}_{S_i}(\bm{\Sigma}^{(0)\top } \bm{\Sigma}^{(0)} ) \\
= &\left(\mathbb{E} \bm{\Gamma}^{(1)}_{t,i.} \bm{\Sigma}^{(0)} + \bm{\Sigma}^{(1)} \mathbb{E}\bm{\Gamma}^{(0)}_{t} - \bm{\Sigma}^{(1)}_{i.} \bm{\Sigma}^{(0)} \mathcal{F}_{S_i}(\bm{\Sigma}^{(0)\top } \bm{\Sigma}^{(0)} ) (\mathbb{E}\bm{\Gamma}^{(0)}_{t}\bm{\Sigma}^{(0)} + \bm{\Sigma}^{(0)} \mathbb{E}\bm{\Gamma}^{(0)}_{t}) \right) \mathcal{F}_{S_i}(\bm{\Sigma}^{(0)\top } \bm{\Sigma}^{(0)} )
=\bm{0}.
\end{aligned}
$$
Besides, for any $i, j = 1, \ldots, d$,
\begin{align*}
   & \left\|\left( \left[\bm{\Gamma}_{t,i.}^{(1)}\bm{\Sigma}^{(0)} + \bm{\Sigma}_{i.}^{(1)}\bm{\Gamma}_{t}^{(0)} -\bm{\Sigma}_{i.}^{(1)}\bm{\Sigma}^{(0)} \mathcal{F}_{S_i}(\bm{\Sigma}^{(0)} \bm{\Sigma}^{(0)}) (\bm{\Sigma}^{(0)}\bm{\Gamma}_{t}^{(0)} + \bm{\Gamma}_{t}^{(0)}\bm{\Sigma}^{(0)})\right]\mathcal{F}_{S_i}(\bm{\Sigma}^{(0)}\bm{\Sigma}^{(0)})\right)_j\right\|_{q/2}\\
\leq &|\mathcal{F}_{S_i}(\bm{\Sigma}^{(0)}\bm{\Sigma}^{(0)})|_1 \left(\max_{p=1,\ldots,d} \|\bm{\Gamma}_{t,ip}^{(1)}\|_{q/2}|\bm{\Sigma}^{(0)}|_1+|\bm{\Sigma}_{i.}^{(1)}|_1 \|\bm{\Gamma}_{t}^{(0)}\|_{q/2} \right.\\
&+\left.2|\bm{\Sigma}_{i.}^{(1)}\bm{\Sigma}^{(0)} \mathcal{F}_{S_i}(\bm{\Sigma}^{(0)} \bm{\Sigma}^{(0)})|_1 |\bm{\Sigma}^{(0)}|_1\|\bm{\Gamma}_{t}^{(0)} \|_{q/2}\right)\leq C_0
\end{align*}
for constant $C_0$ since $\|\bm{\Gamma}_{t}^{(0)} \|_{q/2} \leq C \|z_{t}\|_q^2 \leq C_1$.
Therefore, by Proposition \ref{lem: gmc(q)}, $\bm{\kappa}^{(t)}$ satisfies GMC($q/2$). For
$$
\max_{i,j = 1,\ldots, d}\frac{1}{\sqrt{n}}\left|\sum_{t=1}^n \bm{\kappa}^{(t)}_{(i-1)\times d +j}\right| = \max_{i = 1,\ldots, d,j \in \mathcal{S}_i}\frac{1}{\sqrt{n}}\left|\sum_{t=1}^n \bm{\kappa}^{(t)}_{(i-1)\times d +j}\right|, 
$$
by Lemma \ref{thm: GA}, we have
\begin{align*}
    \sup_{x\in \mathbb{R}}\left|\mathbb{P}\left(\max_{i,j = 1,\ldots, d}\frac{1}{\sqrt{n}}|\sum_{t=1}^n \bm{\kappa}^{(t)}_{(i-1)\times d +j}| \leq x\right)-\mathbb{P}\left(\max_{i=1,\ldots, d, j\in S_i}|{Z}_{ij}^*|\leq x\right)\right|=o(1)
\end{align*}
where ${Z}_{ij}^*$ is joint normal random variable with mean 0 and 
\begin{align*}
    \cov({Z}^*_{i_1, j_1}, {Z}^*_{i_2, j_2})= \frac{1}{n}\sum_{i_1 = 1}^n \sum_{i_2=1}^n \mathbb{E} \bm{\kappa}_{j_1}^{(i_1)} \bm{\kappa}_{j_2}^{(i_2)}.
\end{align*}

On the other hand, for any vector $\bm{c} \in \mathbb{R}^{|S_i|}$, 
\begin{align*}
        \bm{c}^\top (\hat{\bm{\Sigma}}^{(0)\top } \hat{\bm{\Sigma}}^{(0) } )_{S_i}\bm{c} \geq &\bm{c}^\top (\bm{\Sigma}^{(0)\top } \bm{\Sigma}^{(0)} )_{S_i}\bm{c} - |\bm{c}^\top ((\bm{\Sigma}^{(0)\top } \bm{\Sigma}^{(0)} )_{S_i} - (\hat{\bm{\Sigma}}^{(0)\top } \hat{\bm{\Sigma}}^{(0) } )_{S_i})\bm{c} | \\
        \geq &c|\bm{c}|_2^2 - |(\bm{\Sigma}^{(0)\top } \bm{\Sigma}^{(0)} )_{S_i} - (\hat{\bm{\Sigma}}^{(0)\top } \hat{\bm{\Sigma}}^{(0) } )_{S_i})|_{\max} |\bm{c}|_1^2\\
        \geq &c|\bm{c}|_2^2 - |\bm{\Sigma}^{(0)\top } \bm{\Sigma}^{(0)} - \hat{\bm{\Sigma}}^{(0)\top } \hat{\bm{\Sigma}}^{(0) }  |_{\max}\cdot  \max_{i = 1, 2,\ldots, d} |S_i|\cdot |\bm{c}|_2^2  \geq \frac{c}{2}|\bm{c}|_2^2
        \end{align*}
        and
        \begin{align*}
        |\bm{\Sigma}^{(0)\top } \bm{\Sigma}^{(0)} - \hat{\bm{\Sigma}}^{(0)\top } \hat{\bm{\Sigma}}^{(0) }  |_2&\leq |\bm{\Sigma}^{(0)\top } \bm{\Sigma}^{(0)} - \hat{\bm{\Sigma}}^{(0)\top } \hat{\bm{\Sigma}}^{(0) }  |_{\max} \cdot  \max_{i = 1, 2,\ldots, d} |S_i|
    \end{align*}
    for sufficiently large $n$ with high probability. Therefore, all $(\hat{\bm{\Sigma}}^{(0)\top } \hat{\bm{\Sigma}}^{(0) } )_{S_i}$ are positive definite. 

Therefore, if $\hat{S}_i = S_i$ for $i = 1, \ldots,d$, then
\begin{equation*}
    \begin{aligned}
       & \max_{i,j=1, \ldots, d}|\sqrt{n} (\tilde{\bm{\alpha}}_{ij}-\bm{\alpha}_{ij}) - \frac{1}{\sqrt{n}}\sum_{t=1}^n \bm{\kappa}^{(t)}_{(i-1)\times d +j}| \\
        \leq & \sqrt{n} \max_{i=1,\ldots, d}  |       \left[ \bm{\Delta}_{i.}^{(1)} \bm{\Delta}^{(0)}  - \bm{\Sigma}_{i.}^{(1)} \bm{\Sigma}^{(0)} \mathcal{F}_{S_i}(\bm{\Sigma}^{(0)\top } \bm{\Sigma}^{(0)}) \bm{\Delta}^{(0)}\bm{\Delta}^{(0)}\right. \\
& \left.- ( \bm{\Delta}_{i.}^{(1)} \bm{\Sigma}^{(0)} + \bm{\Sigma}_{i.}^{(1)} \bm{\Delta}^{(0)} )\mathcal{F}_{S_i}(\bm{\Sigma}^{(0)\top } \bm{\Sigma}^{(0)} ) ( \bm{\Delta}^{(0)} \bm{\Sigma}^{(0)} + \bm{\Sigma}^{(0)} \bm{\Delta}^{(0)} )   \right]\mathcal{F}_{S_i}(\bm{\Sigma}^{(0)\top }\bm{\Sigma}^{(0)})|_\infty\\
 \leq &\sqrt{n} \max_{i=1,\ldots, d}  |  \bm{\Delta}_{i.}^{(1)} \bm{\Delta}^{(0)}  - \bm{\Sigma}_{i.}^{(1)} \bm{\Sigma}^{(0)} \mathcal{F}_{S_i}(\bm{\Sigma}^{(0)\top } \bm{\Sigma}^{(0)}) \bm{\Delta}^{(0)}\bm{\Delta}^{(0)} \\
& - ( \bm{\Delta}_{i.}^{(1)} \bm{\Sigma}^{(0)} + \bm{\Sigma}_{i.}^{(1)} \bm{\Delta}^{(0)} )\mathcal{F}_{S_i}(\bm{\Sigma}^{(0)\top } \bm{\Sigma}^{(0)} ) ( \bm{\Delta}^{(0)} \bm{\Sigma}^{(0)} + \bm{\Sigma}^{(0)} \bm{\Delta}^{(0)} ) |_\infty  \max_{i=1,\ldots, d}| \mathcal{F}_{S_i}(\bm{\Sigma}^{(0)\top }\bm{\Sigma}^{(0)})|_1\\
    \end{aligned}
\end{equation*}
Besides, 
$$
\begin{aligned}
    |\bm{\Delta}^{(1)} \bm{\Delta}^{(0)}|_{\max}  \leq \max_{k, j \in S_i} \left|\sum_{l=1}^d \bm{\Delta}_{kl}^{(1)} \bm{\Delta}_{lj}^{(0)} \right| \leq |S_i|\max_{k \in S_i, l = 1, \ldots, d} |\bm{\Delta}_{kl}^{(1)}|\max_{j \in S_i, l = 1, \ldots, d} |\bm{\Delta}_{lj}^{(0)}|\leq C
\end{aligned}
$$
and by Lemma \ref{lem: Sigma hat}, 
$$
\begin{aligned}
\||\bm{\Delta}^{(1)}\bm{\Delta}^{(0)}|_{\max}\|_{q/2} = O\left(\sqrt{\frac{\log d}{n}} \right), \text{ and } \||\bm{\Delta}^{(0)}\bm{\Delta}^{(0)}|_{\max}\|_{q/2} = O\left(\sqrt{\frac{\log d}{n}}  \right).
\end{aligned}
$$
Therefore, it implies
\begin{equation*}
    \max_{i,j = 1, \ldots, d}|\sqrt{n}(\tilde{\bm{\alpha}}_{ij}-\bm{\alpha}_{ij}) - \frac{1}{\sqrt{n}}\sum_{t=1}^n \bm{\kappa}^{(t)}_{(i-1)\times d +j}| =O_{\mathbb{P}}\left(\sqrt{\frac{\log d}{n}} \right).
\end{equation*}
Thus, for any $w>0$ and sufficiently large $n$, we have
\begin{align*}
&\mathbb{P}\left(\sqrt{n}|\tilde{\mathcal{A}} - \mathcal{A} |_{\max}\leq x \right) \\
\leq &w +\mathbb{P}\left(\max_{i,j=1,\ldots, d}|\frac{1}{\sqrt{n}}\sum_{t=1}^n \bm{\kappa}^{(t)}_{(i-1)\times d +j}| \leq x + C_w \sqrt{\log d /{n}} \right)\\
\leq &\mathbb{P}(\max_{i = 1,\ldots, d, j \in S_i}|Z_{ij}|\leq x)+w+ C \log^{3/2} (d \vee n)/ \sqrt{n}
\end{align*}
and
\begin{align*}
    &\mathbb{P}\left(\sqrt{n}|\hat{\mathcal{A}} - \mathcal{A} |_{\max} \leq x \right)\\
    \geq &-w+\mathbb{P}\left(\max_{i,j=1,\ldots, d}|\frac{1}{\sqrt{n}}\sum_{t=1}^n \bm{\kappa}^{(t)}_{(i-1)\times d +j}| \leq x - C_w \sqrt{\log d /{n}} \right)\\
\geq &\mathbb{P}(\max_{i = 1,\ldots, d, j \in S_i}|Z_{ij}|\leq x)-w- C \log^{3/2} (d \vee n)/ \sqrt{n}\\
\end{align*}
which implies \eqref{eq: GA post selection}.
\end{proof}

\subsection{Proof of Theorem \ref{thm: swb}}
\begin{proof}
    From Theorem \ref{thm: alpha}, we have $\hat{S}_i = S_i$ for $i = 1, \ldots , d$ with probability tending to 1. If $\hat{S}_i = S_i$ , then
    $$
\bm{\Delta}_{ij}^* = \sqrt{n}(\tilde{\bm{\alpha}}_{ij}^* - \tilde{\bm{\alpha}}_{ij}) = \frac{1}{\sqrt{n}} \sum_{t = 1}^{n-1} \sum_{k, l = 1}^d e_t\hat{\bm{\Theta}}_{t,i k} \hat{\bm{\Sigma}}^{(0)}_{kl}\mathcal{F}_{S_i}(\hat{\bm{\Sigma}}^{(0)\top}\hat{\bm{\Sigma}}^{(0)})_{lj},
    $$
    if $j \not\in S_i$, then $\bm{\Delta}_{ij}^* = 0$. If $j_1 \in S_{i_1}$ and $j_2 \in S_{i_2}$, then for sufficiently large $n$ we have
    \begin{equation*}
    \begin{aligned}
        \mathbb{E}^*\bm{\Delta}_{i_1,j_1}^* \bm{\Delta}_{i_2,j_2}^* = \frac{1}{n}\sum_{t_1,t_2=1}^{n-1}\sum_{k_1,l_1=1}^d \sum_{k_2,l_2=1}^d &\hat{\bm{\Theta}}_{t_1, i_1, k_1} \hat{\bm{\Theta}}_{t_2, i_2, k_2}\hat{\bm{\Sigma}}_{k_1,l_1}^{(0)} \hat{\bm{\Sigma}}_{k_2,l_2}^{(0)} \\
       & \mathcal{F}_{S_i}(\hat{\bm{\Sigma}}^{(0)}\hat{\bm{\Sigma}}^{(0)})_{l_1,j_1}\mathcal{F}_{S_i}(\hat{\bm{\Sigma}}^{(0)}\hat{\bm{\Sigma}}^{(0)})_{l_2,j_2} K\left(\frac{t_1 -t_2}{h_n}\right)
    \end{aligned}
\end{equation*}
Notice that for $i = 1, \ldots, d$, denote $z_{t} = X_t - \bar{X}$ and $\mathbb{E}_0(\cdot) = \cdot - \mathbb{E}\cdot$ we have
\begin{equation*}
    \begin{aligned}
        \hat{\bm{\Theta}}_{t, ij}& = (X_{t+1,i} - \bar{X}_{i})(X_{tj} - \bar{X}_j) - \sum_{k = 1}^d \tilde{\bm{\alpha}}_{ik}(X_{tk}-\bar{X}_k)(X_{tj} -\bar{X}_j)\\
        &= z_{t+1,i}z_{t j} - \sum_{k=1}^d  \tilde{\bm{\alpha}}_{ik} z_{tk}z_{tj}\\
        &= z_{t+1,i}z_{t j} - \sum_{k=1}^d \bm{\alpha}_{ik}  z_{tk}z_{t j} - \sum_{k=1}^d (\tilde{\bm{\alpha}}_{ik} - \bm{\alpha}_{ik})  z_{tk}z_{t j}\\
        &= \mathbb{E}_0  z_{t+1,i}z_{t j} - \sum_{k=1}^d \bm{\alpha}_{ik} \mathbb{E}_0 z_{tk}z_{t j} - \sum_{k=1}^d (\tilde{\bm{\alpha}}_{ik} - \bm{\alpha}_{ik})  z_{tk}z_{t j}.
    \end{aligned}
\end{equation*}
Define 
$$
\bm{\omega}_{kl}^{(t,i)}:= 
\sum_{j = 1}^d \left(\mathcal{F}_{S_i}(\bm{\Sigma}^{(0)\top}\bm{\Sigma}^{(0)}) \bm{\Sigma}^{(0)} \right)_{jl}z_{tk}z_{tj}
$$
then
$$
\sum_{j=1}^d \left(\mathcal{F}_{S_i}(\bm{\Sigma}^{(0)\top}\bm{\Sigma}^{(0)}) \bm{\Sigma}^{(0)} \right)_{jl}\sum_{k=1}^d (\tilde{\bm{\alpha}}_{ik} - \bm{\alpha}_{ik})  z_{tk}z_{tj} = \sum_{k=1}^d \bm{\omega}_{kl}^{(t,i)}(\tilde{\bm{\alpha}}_{ik} - \bm{\alpha}_{ik}).  
$$
And there exists a positive constant $C_q >0$ for $q \geq 4$ such that

$$
\|\bm{\omega}_{kl}^{(t,i)}\|_{q/2} \leq \|z_{tk}\|_{q} \sum_{j=1}^d \left( \left|\left(\mathcal{F}_{S_i}(\bm{\Sigma}^{(0)\top}\bm{\Sigma}^{(0)}) \bm{\Sigma}^{(0)} \right)_{jl}\right| \cdot\|z_{tj}\|_q  \right) \leq C_q
$$
For $t = 2, \ldots, n$, define a $\mathbb{R}^{d^2}$ vector
\begin{align*}
\bm{\psi}_{t+1} = \begin{bmatrix}
   \mathcal{F}_{S_1}(\bm{\Sigma}^{(0)\top}\bm{\Sigma}^{(0)})\bm{\Sigma}^{(0)}(\mathbb{E}_0 z_{t+1,1}z_t - \sum_{k=1}^d \bm{\alpha}_{1k} \mathbb{E}_0 z_{tk}z_{t} )\\
    \vdots\\
   \mathcal{F}_{S_d}(\bm{\Sigma}^{(0)\top}\bm{\Sigma}^{(0)})\bm{\Sigma}^{(0)} (\mathbb{E}_0 z_{t+1,d}z_t - \sum_{k=1}^d \bm{\alpha}_{dk} \mathbb{E}_0 z_{tk}z_{t} )
\end{bmatrix}
\end{align*}
which satisfies $\mathbb{E}\bm{\psi}_{t+1} =0$. Since $\bm{\psi}_{t+1}$ is a function of $X_{t+1}$ with its causal form \eqref{eq: causal}, it is also a function of $\mathcal{F}_{-\infty}^{t+1}=(\bm{\varepsilon}_{t+1},  \bm{\varepsilon}_t, \ldots )$. Denote $\psi_{t+1,i} = g_i(\ldots, \bm{\varepsilon}_{t+1},  \bm{\varepsilon}_t)$. Then, we have for each $i,l = 1, \ldots, d$, $q \geq 4$,

\begin{equation}
\label{eq: E_0 zt+1zt}
    \begin{aligned}
      & \quad  \left \| \sum_{j = 1}^d \left( \mathcal{F}_{S_i}(\bm{\Sigma}^{(0)\top}\bm{\Sigma}^{(0)}) \bm{\Sigma}^{(0)} \right)_{jl}\mathbb{E}_0 z_{t+1,i}z_{tj}\right\|_{q/2} \\
      &\leq \sum_{j = 1}^d \left|\left( \mathcal{F}_{S_i}(\bm{\Sigma}^{(0)\top}\bm{\Sigma}^{(0)}) \bm{\Sigma}^{(0)} \right)_{jl} \right|  \cdot \| \mathbb{E}_0 z_{t+1,i}z_{tj}\|_{q/2}\\
     &   \leq C_q \sum_{j=1}^d \left|\left( \mathcal{F}_{S_i}(\bm{\Sigma}^{(0)\top}\bm{\Sigma}^{(0)}) \bm{\Sigma}^{(0)} \right)_{jl} \right| \cdot 2 \kappa_q (\delta_{q,t+1,i}+ \delta_{q, t, j})  \\
       & \leq C_q \max_{i =1,\ldots, d }|\mathcal{F}_{S_i}(\bm{\Sigma}^{(0)\top}\bm{\Sigma}^{(0)}) \bm{\Sigma}^{(0)}|_1 \\
     &\leq C_q \max_{i =1,\ldots, d }|\mathcal{F}_{S_i}(\bm{\Sigma}^{(0)\top}\bm{\Sigma}^{(0)}) |_1\cdot |\bm{\Sigma}^{(0)}|_1 \leq C
    \end{aligned}
\end{equation}
and
\begin{equation}
\label{eq: alpha E_0 ztzt}
    \begin{aligned}
  \quad &  \left \| \sum_{j = 1}^d \left( \mathcal{F}_{S_i}(\bm{\Sigma}^{(0)\top}\bm{\Sigma}^{(0)}) \bm{\Sigma}^{(0)} \right)_{jl}\sum_{k=1}^d \alpha_{ik}\mathbb{E}_0 z_{tk}z_{tj}\right\|_{q/2} \\
  \leq & \sum_{j = 1}^d  \left|\left( \mathcal{F}_{S_i}(\bm{\Sigma}^{(0)\top}\bm{\Sigma}^{(0)}) \bm{\Sigma}^{(0)} \right)_{jl} \right|\cdot \left\| \sum_{k=1}^d \bm{\alpha}_{ik} \mathbb{E}_0 z_{tk}z_{tj}\right\|_{q/2}\\
        \leq &C_q \sum_{j = 1}^d\left|\left( \mathcal{F}_{S_i}(\bm{\Sigma}^{(0)\top}\bm{\Sigma}^{(0)}) \bm{\Sigma}^{(0)} \right)_{jl} \right|\cdot 2 \kappa_q \sum_{k=1}^d \bm{\alpha}_{ik}\left(\delta_{q,t,k}+ \delta_{q, t, j}\right)\\
        \leq &C_q \rho_A^q\max_{i =1,\ldots, d }|\mathcal{F}_{S_i}(\bm{\Sigma}^{(0)\top}\bm{\Sigma}^{(0)})|_1  \cdot |\bm{\Sigma}^{(0)}|_1\leq C
    \end{aligned}
\end{equation}
for a positive constant $C>0$ since $(z_{t+1}z_{t})$ is also a stationary process of the form \eqref{eq: causal} and $\|z_{t+1}z_t - z_{t+1, \{0\}}z_{t,\{0\}}\|_{q/2} \leq 2 \kappa_q (\delta_{t,q}+\delta_{t+1,q})$ by H\"older inequality. Similarly, for $q \geq 4$, each $i, l = 1,\ldots, d$, we have
\begin{equation*}
    \begin{aligned}
        &\|\bm{\psi}_{t+1, (i-1)d +l} - \bm{\psi}_{t+1, \{0\}, (i-1)d+l}\|_{q/2} \\
        \leq& \left\| \sum_{j = 1}^d \left(\mathcal{F}_{S_i}(\bm{\Sigma}^{(0)\top}\bm{\Sigma}^{(0)}) \bm{\Sigma}^{(0)}\right)_{jl}(\mathbb{E}_0 z_{t+1,i} z_{tj} - \mathbb{E}_0z_{t+1,\{0\},i} z_{t,\{0\},j})\right\|_{q/2}\\
      &+\left\|\sum_{j = 1}^d  \left(\mathcal{F}_{S_i}(\bm{\Sigma}^{(0)\top}\bm{\Sigma}^{(0)}) \bm{\Sigma}^{(0)}\right)_{jl}\left (\sum_{k=1}^d \alpha_{ik}\mathbb{E}_0 z_{tk}z_{tj} - \sum_{k=1}^d \alpha_{ik}\mathbb{E}_0 z_{t,\{0\},k}z_{t,\{0\},j}\right)\right\|_{q/2} \\
         \leq & C_q (|\mathcal{A}|_1 + 2)\max_{i =1,\ldots, d}|\mathcal{F}_{S_i}(\bm{\Sigma}^{(0)\top}\bm{\Sigma}^{(0)})|_1 |\bm{\Sigma}^{(0)}|_1 \sup_{i \in [d],t \in \mathbb{Z}}\|z_{t, i} - z_{t, \{0\}, i}\|_q .
    \end{aligned}
\end{equation*}
For $z_{t}$ satisfies GMC$(q)$, $\bm{\psi}_{t}$ satisfies GMC$(q/2)$. Now, we figure out covariance matrix between $\sum_{t=1}^{n-1}\bm{\psi}_{t+1}$, 
\begin{equation*}
    \begin{aligned}
    \cov\left(\sum_{t=1}^{n-1}\bm{\psi}_{t+1, (i_1-1)d+l_1}, \sum_{t=1}^{n-1}\bm{\psi}_{t+1, (i_2-1)d+l_2}\right) = n \sum_{i_1=1}^{n-1}\sum_{i_2 =1}^{n-1} \mathbb{E}\bm{\kappa}_{l_1}^{(i_1)}\bm{\kappa}_{l_2}^{(i_2)}    =: n \bm{\Sigma}_{l_1, l_2}^{(i_1, i_2)}
    \end{aligned}
\end{equation*}
where $\bm{\Sigma}_{l_1, l_2}^{(i_1, i_2)} = \cov(Z_{i_1,l_1}, Z_{i_2,l_2})$ is defined in Theorem \ref{thm: GA post selection}. Moreover, $\bm{\psi}_{t, (i-1)d+l} = 0$ if $l \not\in S_i$ for $i = 1, \ldots, d$, which implies $\bm{\psi}_{t}$ only has $\sum_{i =1}^d|S_i| =O(d)$ non-zero elements, and by Lemma \ref{lem: kernel + cov}
\begin{align*}
        &\max_{i_1, i_2, l_1, l_2 \in [d]} \left|\frac{1}{n}\sum_{t_1=1}^{n-1}\sum_{t_2 =1}^{n-1}\bm{\psi}_{t_1, (i_1-1)d+l_1}\bm{\psi}_{t_2, (i_2-1)d+l_2}K\left(\frac{t_1 -t_2}{h_n}\right) - n \bm{\Sigma}_{l_1, l_2}^{(i_1, i_2)}\right|\\
        =&\max_{i_1, i_2 \in [d], l_1 \in S_{i_1},l_2 \in S_{i_2} } \left|\frac{1}{n}\sum_{t_1=1}^{n-1}\sum_{t_2 =1}^{n-1}\bm{\psi}_{t_1, (i_1-1)d+l_1}\bm{\psi}_{t_2, (i_2-1)d+l_2}K\left(\frac{t_1 -t_2}{h_n}\right) - n \bm{\Sigma}_{l_1, l_2}^{(i_1, i_2)}\right|\\
        =& O_{\mathbb{P}} \left(h_n^{-1}+h_n\sqrt{\frac{\log d}{n}}\right)
\end{align*}
Since 
\begin{align*}
        &\sum_{j = 1}^d \left(\mathcal{F}_{S_i}(\bm{\Sigma}^{(0)\top}\bm{\Sigma}^{(0)})\bm{\Sigma}^{(0)}\right)_{jl} \hat{\bm{\Theta}}_{t,ij} \\
        = &\sum_{j=1}^d \left(\mathcal{F}_{S_i}(\bm{\Sigma}^{(0)\top}\bm{\Sigma}^{(0)})\bm{\Sigma}^{(0)}\right)_{jl} (\mathbb{E}_0 z_{t+1,i}z_{tj} -  \sum_{k=1}^d \alpha_{ik}\mathbb{E}_0z_{tk}z_{tj}) -\sum_{k \in S_i}\bm{\omega}_{kl}^{(t,i)}(\tilde{\bm{\alpha}}_{ik} - \bm{\alpha}_{ik})\\
        =& \bm{\psi}_{t+1, (i-1)d+l}-\sum_{k \in S_i}\bm{\omega}_{kl}^{(t,i)}(\tilde{\bm{\alpha}}_{ik} - \bm{\alpha}_{ij}),
\end{align*}
we have
\begin{align*}
        &\Bigg|\frac{1}{n}\sum_{t_1=1}^{n-1} \sum_{t_2=1}^{n-1}K\left(\frac{t_1-t_2}{h_n}\right) \left(\sum_{j=1}^d \left(\mathcal{F}_{S_{i_1}}(\bm{\Sigma}^{(0)\top}\bm{\Sigma}^{(0)} )\bm{\Sigma}^{(0)}\right)_{j l_1} \hat{\bm{\Theta}}_{t,i_1 j}\right) \\
       &\qquad \cdot\left(\sum_{j=1}^d \left(\mathcal{F}_{S_{i_2}}(\bm{\Sigma}^{(0)\top}\bm{\Sigma}^{(0)})\bm{\Sigma}{(0)}\right)_{jl_2} \hat{\bm{\Theta}}_{t,i_2 j}\right)\\
      & \quad - \frac{1}{n}\sum_{t_1=1}^{n-1} \sum_{t_2=1}^{n-1}K\left(\frac{t_1-t_2}{h_n}\right) \bm{\psi}_{t_1+1, (i_1-1)d+l_1} \bm{\psi}_{t_2+1, (i_2-1)d+l_2}\Bigg|\\
       \leq & \left|\frac{1}{n}\sum_{k \in S_{i_1}}(\tilde{\bm{\alpha}}_{i_1 k} - \bm{\alpha}_{i_1 k})\sum_{t_1=1}^{n-1} \sum_{t_2=1}^{n-1}K\left(\frac{t_1-t_2}{h_n}\right) \bm{\psi}_{t_2+1, (i_2-1)d+l_2} \bm{\omega}_{l_1 k}^{(t_1,i_1)}\right|\\
       &+\left|\frac{1}{n}\sum_{k \in S_{i_2}}(\tilde{\bm{\alpha}}_{i_2 k} - \bm{\alpha}_{i_2 k})\sum_{t_1=1}^{n-1} \sum_{t_2=1}^{n-1}K\left(\frac{t_1-t_2}{h_n}\right) \bm{\psi}_{t_1+1, (i_1-1)d+l_1} \bm{\omega}_{l_2 k}^{(t_2,i_2)}\right|\\
       &+\left|\frac{1}{n}\sum_{k_1 \in S_{i_1}}\sum_{k_2 \in S_{i_2}}(\tilde{\bm{\alpha}}_{i_1 k_1} - \bm{\alpha}_{i_1 k_1})(\tilde{\bm{\alpha}}_{i_2 k_2} - \bm{\alpha}_{i_2 k_2})\sum_{t_1=1}^{n-1} \sum_{t_2=1}^{n-1}K\left(\frac{t_1-t_2}{h_n}\right)\bm{\omega}_{l_1 k_1}^{(t_1,i_1)} \bm{\omega}_{l_2 k_2}^{(t_2,i_2)}\right|\\
       \leq & |S_{i_1}| \max_{k \in S_{i_1}} |\tilde{\bm{\alpha}}_{i_1 k} - \bm{\alpha}_{i_1 k}| \max_{k \in S_{i_1}}\left|\frac{1}{n}K\left(\frac{t_1 -t_2}{h_n}\right)  \bm{\psi}_{t_2+1, (i_2-1)d+l_2} \bm{\omega}_{l_1 k}^{(t_1,i_1)}\right| \\
       &+|S_{i_2}| \max_{k \in S_{i_2}} |\tilde{\bm{\alpha}}_{i_2 k} - \bm{\alpha}_{i_2 k}| \max_{k \in S_{i_2}}\left|\frac{1}{n}K\left(\frac{t_1 -t_2}{h_n}\right)  \bm{\psi}_{t_1+1, (i_1-1)d+l_1} \bm{\omega}_{l_2 k}^{(t_2,i_2)}\right|\\
       &+|S_{i_1}||S_{i_2}|\max_{k \in S_{i_1}} |\tilde{\bm{\alpha}}_{i_1 k} - \bm{\alpha}_{i_1 k}|\max_{k \in S_{i_2}} |\tilde{\bm{\alpha}}_{i_2 k} - \bm{\alpha}_{i_2 k}| \\
       &\qquad \cdot\max_{k_1 \in S_{i_1},k_2 \in S_{i_1}} \left|\frac{1}{n}\sum_{t_1 =1}^{n-1}\sum_{t_2 =1}^{n-1}K\left(\frac{t_1-t_2}{h_n}\right)\bm{\omega}_{l_1 k_1}^{(t_1,i_1)} \bm{\omega}_{l_2 k_2}^{(t_2,i_2)} \right|.
\end{align*}
Since the matrix $\left(K\left(\frac{t_1-t_2}{h_n}\right)\right)_{t_1,t_2 =1}^{n-1}$ is a Toeplitz matrix, for any fixed $i_1, i_2, l_1, l_2,k$ and $q \geq 4$, we have
\begin{equation*}
    \begin{aligned}
        &\left\|\left|\frac{1}{n}\sum_{t_1 =1}^{n-1}\sum_{t_2 =1}^{n-1}K\left(\frac{t_1-t_2}{h_n}\right)\bm{\psi}_{t_2+1, (i_2-1)d+l_2}\bm{\omega}_{l_1 k}^{(t_1,i_1)}\right|\right\|_{q/4}\\
       & \leq \frac{C h_n}{n}\sqrt{\sum_{t =1}^{n-1}\left\|\bm{\psi}_{t+1, (i_2-1)d+l_2}\right\|_{q/2}^2}\sqrt{\sum_{t =1}^{n-1}\left\|\bm{\omega}_{l_1 k}^{(t,i_1)}\right\|_{q/2}^2}\leq C_1 h_n
    \end{aligned}
\end{equation*}
and 
\begin{equation*}
    \begin{aligned}
      &  \left\|\left|\frac{1}{n}\sum_{t_1 =1}^{n-1}\sum_{t_2 =1}^{n-1}K\left(\frac{t_1-t_2}{h_n}\right)\bm{\omega}_{l_1 k}^{(t_1,i_1)}\bm{\omega}_{l_2 k}^{(t_2,i_2)}\right|\right\|_{q/4}\\
      &  \leq \frac{C h_n}{n}\sqrt{\sum_{t =1}^{n-1}\left\|\bm{\omega}_{l_1 k}^{(t,i_1)}\right\|_{q/2}^2}\sqrt{\sum_{t =1}^{n-1}\left\|\bm{\omega}_{l_2 k}^{(t,i_2)}\right\|_{q/2}^2}\leq C_1 h_n
    \end{aligned}
\end{equation*}
for a positive constant $C_1$, which implies 
\begin{equation*}
   \max_{i_1,i_2 = 1,\ldots, d ,k \in S_{i_1}, l_2 \in S_{i_2}} \left|\frac{1}{n}\sum_{t_1 =1}^{n-1}\sum_{t_2 =1}^{n-1}K\left(\frac{t_1-t_2}{h_n}\right)\bm{\psi}_{t_2+1, (i_2-1)d+l_2}\bm{\omega}_{l_1 k}^{(t_1,i_1)}\right|=O_\mathbb{P}\left(h_n \sqrt{\log d}\right),
\end{equation*}
\begin{equation*}
   \max_{i_1,i_2 = 1,\ldots, d ,k \in S_{i_2}, l_1 \in S_{i_1}} \left|\frac{1}{n}\sum_{t_1 =1}^{n-1}\sum_{t_2 =1}^{n-1}K\left(\frac{t_1-t_2}{h_n}\right)\bm{\psi}_{t_1+1, (i_1-1)d+l_1}\bm{\omega}_{l_2 k}^{(t_2,i_2)}\right|=O_\mathbb{P}\left(h_n \sqrt{\log d}\right),
\end{equation*}
and
\begin{equation*}
   \max_{i_1,i_2 = 1,\ldots, d ,k \in S_{i_2}, l_1 \in S_{i_1}} \left|\frac{1}{n}\sum_{t_1 =1}^{n-1}\sum_{t_2 =1}^{n-1}K\left(\frac{t_1-t_2}{h_n}\right)\bm{\omega}_{l_1 k}^{(t_1,i_1)}\bm{\omega}_{l_2 k}^{(t_2,i_2)}\right|=O_\mathbb{P}\left(h_n \sqrt{\log d}\right),
\end{equation*}
Therefore, combining with two inequalities, we have
\begin{equation}
    \begin{aligned}
        &\Bigg|\frac{1}{n}\sum_{t_1=1}^{n-1} \sum_{t_2=1}^{n-1}K\left(\frac{t_1-t_2}{h_n}\right) \left(\sum_{j=1}^d \left(\mathcal{F}_{S_{i_1}}(\bm{\Sigma}^{(0)\top}\bm{\Sigma}^{(0)})\bm{\Sigma}^{(0)}\right)_{jl_1} \hat{\bm{\Theta}}_{t,i_1 j}\right) \\
       &\qquad \cdot\left(\sum_{j=1}^d \left(\mathcal{F}_{S_{i_2}}(\bm{\Sigma}^{(0)\top}\bm{\Sigma}^{(0)})\bm{\Sigma}^{(0)}\right)_{jl_2} \hat{\bm{\Theta}}_{t,i_2 j}\right)\\
      &\quad - \frac{1}{n}\sum_{t_1=1}^{n-1} \sum_{t_2=1}^{n-1}K\left(\frac{t_1-t_2}{h_n}\right) \bm{\psi}_{t_1+1, (i_1-1)d+l_1} \bm{\psi}_{t_2+1, (i_2-1)d+l_2}\Bigg|=O_\mathbb{P}\left(h_n \sqrt{\frac{\log^4 (d \vee n) } {n}}\right) 
      \end{aligned}
      \end{equation}
     Now, we figure out $\max_{i =1,\ldots, d}|\mathcal{F}_{S_i}(\hat{\bm{\Sigma}}^{(0)\top}\hat{\bm{\Sigma}}^{(0)}) - \mathcal{F}_{S_i}({\bm{\Sigma}}^{(0)\top}{\bm{\Sigma}^{(0)}})|_1$.
      By \cite{horn2012matrix}, we have 
      \begin{equation*}
          \begin{aligned}
              &|\mathcal{F}_{S_i}(\hat{\bm{\Sigma}}^{(0)\top}\hat{\bm{\Sigma}}^{(0)}) \hat{\bm{\Sigma}}^{(0)} - \mathcal{F}_{S_i}({\bm{\Sigma}}^{(0)\top}{\bm{\Sigma}^{(0)}})\bm{\Sigma}^{(0)}|_1 \\
              =& |(\hat{\bm{\Sigma}}^{(0)\top}\hat{\bm{\Sigma}}^{(0)})_{S_i}^{-1} \hat{\bm{\Sigma}}^{(0)}-({\bm{\Sigma}}^{(0)\top}{\bm{\Sigma}^{(0)}})_{S_i}^{-1} \bm{\Sigma}^{(0)}|_1\\
              \leq &|(\hat{\bm{\Sigma}}^{(0)\top}\hat{\bm{\Sigma}}^{(0)})_{S_i}^{-1} -({\bm{\Sigma}}^{(0)\top}{\bm{\Sigma}^{(0)}})_{S_i}^{-1} |_1 |\hat{\bm{\Sigma}}^{(0)}|_1 + |({\bm{\Sigma}}^{(0)\top}{\bm{\Sigma}^{(0)}})_{S_i}^{-1}|_1 |\hat{\bm{\Sigma}}^{(0)} - \bm{\Sigma}^{(0)}|_1\\
              \leq & C |(\hat{\bm{\Sigma}}^{(0)\top}\hat{\bm{\Sigma}}^{(0)})_{S_i}^{-1} -({\bm{\Sigma}}^{(0)\top}{\bm{\Sigma}^{(0)}})_{S_i}^{-1}|_2 |\hat{\bm{\Sigma}}^{(0)}|_{\max} + C|({\bm{\Sigma}}^{(0)\top}{\bm{\Sigma}^{(0)}})_{S_i}^{-1}|_2|\hat{\bm{\Sigma}}^{(0)} - \bm{\Sigma}^{(0)}|_{\max} \\
              \leq & C_1|\hat{\bm{\Sigma}}^{(0)\top}\hat{\bm{\Sigma}}^{(0)} -{\bm{\Sigma}}^{(0)\top}{\bm{\Sigma}^{(0)}}|_{\max}|\hat{\bm{\Sigma}}^{(0)}|_{\max}+ C_1|{\bm{\Sigma}}^{(0)}|_{\max}^2 |\hat{\bm{\Sigma}}^{(0)} - \bm{\Sigma}^{(0)}|_{\max} \\
              \leq  &C_1 |\hat{\bm{\Sigma}}^{(0)\top}(\hat{\bm{\Sigma}}^{(0)} - \bm{\Sigma}^{(0)})|_{\max} |\hat{\bm{\Sigma}}^{(0)}|_{\max}+ C_1|(\hat{\bm{\Sigma}}^{(0)\top} - \bm{\Sigma}^{(0)\top})\bm{\Sigma}^{(0)}|_{\max}|\hat{\bm{\Sigma}}^{(0)}|_{\max} \\
              &\qquad+C_1|{\bm{\Sigma}}^{(0)}|_{\max}^2 |\hat{\bm{\Sigma}}^{(0)} - \bm{\Sigma}^{(0)}|_{\max}  \\
              \leq & C_1|\hat{\bm{\Sigma}}^{(0)}|_{\max} ^2|\hat{\bm{\Sigma}}^{(0)} - \bm{\Sigma}^{(0)}|_{\max} + C_1 |\hat{\bm{\Sigma}}^{(0)} - \bm{\Sigma}^{(0)}|_{\max} |\bm{\Sigma}^{(0)}|_{\max}  ^2\leq \lambda_0,
          \end{aligned}
      \end{equation*}
which implies that $$\max_{i = 1, \ldots, d}|\mathcal{F}_{S_i}(\hat{\bm{\Sigma}}^{(0)\top}\hat{\bm{\Sigma}}^{(0)}) \hat{\bm{\Sigma}}^{(0)}- \mathcal{F}_{S_i}({\bm{\Sigma}}^{(0)\top}{\bm{\Sigma}^{(0)}})\bm{\Sigma}^{(0)}|_1 = O_\mathbb{P}(\log n \sqrt{\log d/n}).$$ 
Therefore, define the term $$\xi^{(i)}_{jl} = \left(\mathcal{F}_{S_i}(\hat{\bm{\Sigma}}^{(0)\top}\hat{\bm{\Sigma}}^{(0)})\hat{\bm{\Sigma}}^{(0)}\right)_{jl} - \left(\mathcal{F}_{S_i}({\bm{\Sigma}}^{(0)\top}{\bm{\Sigma}^{(0)}})\bm{\Sigma}^{(0)}\right)_{jl},$$
we have
$$
\sum_{j = 1}^d \left(\mathcal{F}_{S_i}(\hat{\bm{\Sigma}}^{(0)\top}\hat{\bm{\Sigma}}^{(0)} )\hat{\bm{\Sigma}}^{(0)}\right)_{jl} \hat{\bm{\Theta}}_{t, ij} = \sum_{j \in S_{i}} \left(\mathcal{F}_{S_i}({\bm{\Sigma}}^{(0)\top}{\bm{\Sigma}^{(0)}})\bm{\Sigma}^{(0)}\right)_{jl}\hat{\bm{\Theta}}_{t, ij}  + \sum_{j \in S_i}\xi_{jl}^{(i)}\hat{\bm{\Theta}}_{t, ij} 
$$
and
\begin{equation*}
    \begin{aligned}
      &\Bigg|\frac{1}{n}\sum_{t_1=1}^{n-1} \sum_{t_2=1}^{n-1}K\left(\frac{t_1-t_2}{h_n}\right) \left(\sum_{j=1}^d \left(\mathcal{F}_{S_{i_1}}(\hat{\bm{\Sigma}}^{(0)\top}\hat{\bm{\Sigma}}^{(0)})\hat{\bm{\Sigma}}^{(0)}\right)_{jl_1} \hat{\bm{\Theta}}_{t_1,i_1 j}\right) \\
     & \qquad \cdot\left(\sum_{j=1}^d \left(\mathcal{F}_{S_{i_2}}(\hat{\bm{\Sigma}}^{(0)\top}\hat{\bm{\Sigma}}^{(0)})\hat{\bm{\Sigma}}^{(0)}\right)_{jl_2} \hat{\bm{\Theta}}_{t_2,i_2 j}\right)\\
      &\quad-\frac{1}{n}\sum_{t_1=1}^{n-1} \sum_{t_2=1}^{n-1}K\left(\frac{t_1-t_2}{h_n}\right) \left(\sum_{j=1}^d \left(\mathcal{F}_{S_{i_1}}(\bm{\Sigma}^{(0)\top}{\bm{\Sigma}}^{(0)}){\bm{\Sigma}}^{(0)}\right)_{jl_1} \hat{\bm{\Theta}}_{t_1,i_1 j}\right) \\
      &\qquad \cdot\left(\sum_{j=1}^d \left(\mathcal{F}_{S_{i_2}}({\bm{\Sigma}}^{(0)\top}{\bm{\Sigma}}^{(0)}){\bm{\Sigma}}^{(0)}\right)_{jl_2} \hat{\bm{\Theta}}_{t_2,i_2 j_2}\right)\Bigg|\\
      \leq &\left|\sum_{j_1 \in S_{i_1}}\sum_{j_2 \in S_{i_2}}\xi^{(i_1)}_{j_1 l_1} \left(\mathcal{F}_{S_{i_2}}(\bm{\Sigma}^{(0)\top}\bm{\Sigma}^{(0)})\bm{\Sigma}^{(0)}\right)_{j_2 l_2} \frac{1}{n}\sum_{t_1=1}^{n-1} \sum_{t_2=1}^{n-1}K\left(\frac{t_1-t_2}{h_n}\right)\hat{\bm{\Theta}}_{t_1,i_1 j_1}\hat{\bm{\Theta}}_{t_2,i_1 j_2}\right| \\
      &+\left|\sum_{j_1 \in S_{i_1}}\sum_{j_2 \in S_{i_2}}\xi^{(i_2)}_{j_2 l_2} \left(\mathcal{F}_{S_{i_1}}(\bm{\Sigma}^{(0)\top}\bm{\Sigma}^{(0)})\bm{\Sigma}^{(0)}\right)_{j_1 l_1} \frac{1}{n}\sum_{t_1=1}^{n-1} \sum_{t_2=1}^{n-1}K\left(\frac{t_1-t_2}{h_n}\right)\hat{\bm{\Theta}}_{t_1,i_1 j_1}\hat{\bm{\Theta}}_{t_2,i_1 j_2}\right| \\
      &+\left|\sum_{j_1 \in S_{i_1}}\sum_{j_2 \in S_{i_2}}\xi^{(i_1)}_{j_1 l_1} \xi^{(i_2)}_{j_2 l_2}  \frac{1}{n}\sum_{t_1=1}^{n-1} \sum_{t_2=1}^{n-1}K\left(\frac{t_1-t_2}{h_n}\right)\hat{\bm{\Theta}}_{t_1,i_1 j_1}\hat{\bm{\Theta}}_{t_2,i_1 j_2}\right|.
    \end{aligned}
\end{equation*}
By Cauchy-Schwartz inequality, we have
\begin{equation*}
    \begin{aligned}
        \sum_{t=1}^{n-1} \hat{\bm{\Theta}}_{t, ij}^2 &= \sum_{t=1}^{n-1} \left(\mathbb{E}_0 z_{t+1,i}z_{tj} - \sum_{k=1}^d \bm{\alpha}_{ik} \mathbb{E}_0 z_{tk}z_{tj} - \sum_{k=1}^d (\tilde{\bm{\alpha}}_{ik} - \bm{\alpha}_{ik})z_{tk}z_{tj}\right)^2\\
        & \leq 2\sum_{t=1}^{n-1} \left(\mathbb{E}_0 z_{t+1,i}z_{tj} - \sum_{k=1}^d \bm{\alpha}_{ik} \mathbb{E}_0 z_{tk}z_{tj}\right)^2 + 2\sum_{t=1}^{n-1} \left(\sum_{k=1}^d (\tilde{\bm{\alpha}}_{ik} - \bm{\alpha}_{ik})z_{tk}z_{tj}\right)^2\\
        & \leq 2\sum_{t=1}^{n-1} \left(\mathbb{E}_0 z_{t+1,i}z_{tj} - \sum_{k=1}^d \bm{\alpha}_{ik} \mathbb{E}_0 z_{tk}z_{tj}\right)^2 + 2 \sum_{t=1}^{n-1} \sum_{k=1}^d (\tilde{\bm{\alpha}}_{ik} - \bm{\alpha}_{ik})^2 \sum_{k=1}^d z_{tk}^2 z_{tj}^2\\
        & \leq 4 \sum_{t=1}^{n-1}\left(\mathbb{E}_0 z_{t+1,i}z_{tj} \right)^2 + 4 \sum_{t=1}^{n-1} \left(\sum_{k \in S_i}\bm{\alpha}_{ik} \mathbb{E}_0 z_{tk}z_{tj}\right)^2 + 2 \sum_{t=1}^{n-1} \sum_{k \in S_i} (\tilde{\bm{\alpha}}_{ik} - \bm{\alpha}_{ik})^2 \sum_{k\in S_i} z_{tk}^2 z_{tj}^2
    \end{aligned}
\end{equation*}
For upper bounds \eqref{eq: E_0 zt+1zt} and \eqref{eq: alpha E_0 ztzt} of $\mathbb{E}_0 z_{t+1,i}z_{tj}$ and $\sum_{k=1}^d \bm{\alpha}_{ik} \mathbb{E}_0 z_{tk}z_{tj}$, we have
 $\left(\mathbb{E}_0 z_{t+1,i}z_{tj} \right)^2 $ and $\left(\sum_{k=1}^d \bm{\alpha}_{ik} \mathbb{E}_0 z_{tk}z_{tj}\right)^2$ satisfy GMC$(q/4)$ and since
\begin{equation*}
    \left\|\sum_{t=1}^{n-1} z_{tj}^2\sum_{k \in S_i} z_{tk}^2 \right\|_{q/4} \leq \sum_{t=1}^{n-1}\sum_{k \in S_i} \|z_{tj}\|_q^2 \|z_{tk}\|_q^2 \leq C n,
\end{equation*}
for some positive constant $C$, $z_{tj}^2\sum_{k \in S_i} z_{tk}^2$ also satisfies GMC$(q/4)$. It implies that from \eqref{eq: convergence rate for post selection}, 
\begin{equation*}
    \begin{aligned}
        \max_{i = 1, \ldots, d, j \in S_i } \sum_{t=1}^{n-1} \hat{\bm{\Theta}}_{t, ij}^2 \leq & 4\max_{i = 1, \ldots, d, j \in S_i } \sum_{t=1}^{n-1}\left(\mathbb{E}_0 z_{t+1,i}z_{tj} \right)^2 + 4 \max_{i = 1, \ldots, d, j \in S_i }\sum_{t=1}^{n-1} \left(\sum_{k \in S_i}\bm{\alpha}_{ik} \mathbb{E}_0 z_{tk}z_{tj}\right)^2 \\
        &+2 \sum_{t=1}^{n-1} \max_{i = 1, \ldots, d, j \in S_i }\sum_{k \in S_i} (\tilde{\bm{\alpha}}_{ik} - \bm{\alpha}_{ik})^2 \max_{i = 1, \ldots, d, j \in S_i }\sum_{k\in S_i} z_{tk}^2 z_{tj}^2 =O_{\mathbb{P}}\left( n\sqrt{ \log d}\right).
    \end{aligned}
\end{equation*}
Therefore, we have 
\begin{equation*}
    \begin{aligned}
        &\max_{i_1 ,i_2 =1, \ldots, d, j_1 \in S_{i_1}, j_2 \in S_{i_2}} \left|\frac{1}{n}\sum_{t_1 =1}^{n-1}\sum_{t_2 =1}^{n-1}K\left(\frac{t_1 -t_2}{h_n}\right) \hat{\bm{\Theta}}_{t_1, i_1j_1}\hat{\bm{\Theta}}_{t_2, i_2j_2}\right|\\
        \leq & \frac{Ch_n}{n}\sqrt{\sum_{t=1}^{n-1} \hat{\bm{\Theta}}_{t, i_1j_1}^2}\sqrt{\sum_{t=1}^{n-1} \hat{\bm{\Theta}}_{t, i_2j_2}^2} = O_{\mathbb{P}}\left(h_n \sqrt{\log d}\right),
    \end{aligned}
\end{equation*}
and
\begin{equation*}
    \begin{aligned}
        &\max_{i_1, i_2 \in [d], l_1 , l_2 \in [d]}\left|\sum_{j_1 \in S_{i_1}}\sum_{j_2 \in S_{i_2}}\xi^{(i_2)}_{j_2 l_2} \left(\mathcal{F}_{S_{i_1}}(\bm{\Sigma}^{(0)\top}\bm{\Sigma}^{(0)})\bm{\Sigma}^{(0)}\right)_{j_1 l_1} \frac{1}{n}\sum_{t_1=1}^{n-1} \sum_{t_2=1}^{n-1}K\left(\frac{t_1-t_2}{h_n}\right)\hat{\bm{\Theta}}_{t_1,i_1 j_1}\hat{\bm{\Theta}}_{t_2,i_1 j_2}\right|\\
        \leq &C \max_{i_2 \in [d], j_2 \in [d], l_2 \in S_{i_2}} |\xi_{j_2,l_2}^{(l_2)}|\max_{i_1 ,i_2 =1, \ldots, d, j_1 \in S_{i_1}, j_2 \in S_{i_2}} \left|\frac{1}{n}\sum_{t_1 =1}^{n-1}\sum_{t_2 =1}^{n-1}K\left(\frac{t_1 -t_2}{h_n}\right) \hat{\bm{\Theta}}_{t_1, i_1j_1}\hat{\bm{\Theta}}_{t_2, i_2j_2}\right|\\
        =&O_{\mathbb{P}}\left(h_n\sqrt{\frac{\log ^4 (d \vee n)}{n}}\right).
    \end{aligned}
\end{equation*}
Similarly, 
\begin{equation*}
    \begin{aligned}
  & \max_{i_1, i_2 \in [d], l_1 , l_2 \in [d]} \left|\sum_{j_1 \in S_{i_1}}\sum_{j_2 \in S_{i_2}}\xi^{(i_1)}_{j_1 l_1} \xi^{(i_2)}_{j_2 l_2}  \frac{1}{n}\sum_{t_1=1}^{n-1} \sum_{t_2=1}^{n-1}K\left(\frac{t_1-t_2}{h_n}\right)\hat{\bm{\Theta}}_{t_1,i_1 j_1}\hat{\bm{\Theta}}_{t_2,i_1 j_2}\right|\\
   \leq &C \max_{i \in [d], j \in [d], l \in S_{i}} |\xi_{j,l}^{(l)}|^2\max_{i_1 ,i_2 =1, \ldots, d, j_1 \in S_{i_1}, j_2 \in S_{i_2}} \left|\frac{1}{n}\sum_{t_1 =1}^{n-1}\sum_{t_2 =1}^{n-1}K\left(\frac{t_1 -t_2}{h_n}\right) \hat{\bm{\Theta}}_{t_1, i_1j_1}\hat{\bm{\Theta}}_{t_2, i_2j_2}\right|\\
   =&O_{\mathbb{P}}\left(\frac{h_n\log ^{7/2} (d \vee n)}{n}\right).
    \end{aligned}
\end{equation*}
Therefore, 
\begin{equation*}
    \begin{aligned}
      &\Bigg|\frac{1}{n}\sum_{t_1=1}^{n-1} \sum_{t_2=1}^{n-1}K\left(\frac{t_1-t_2}{h_n}\right) \left(\sum_{j=1}^d \left(\mathcal{F}_{S_{i_1}}(\hat{\bm{\Sigma}}^{(0)\top}\hat{\bm{\Sigma}}^{(0)})\hat{\bm{\Sigma}}^{(0)}\right)_{jl_1} \hat{\bm{\Theta}}_{t_1,i_1 j}\right) \\
      & \qquad\cdot\left(\sum_{j=1}^d \left(\mathcal{F}_{S_{i_2}}(\hat{\bm{\Sigma}}^{(0)\top}\hat{\bm{\Sigma}}^{(0)})\hat{\bm{\Sigma}}^{(0)}\right)_{jl_2} \hat{\bm{\Theta}}_{t_2,i_2 j}\right)\\
      &\quad-\frac{1}{n}\sum_{t_1=1}^{n-1} \sum_{t_2=1}^{n-1}K\left(\frac{t_1-t_2}{h_n}\right) \left(\sum_{j=1}^d \left(\mathcal{F}_{S_{i_1}}(\bm{\Sigma}^{(0)\top}{\bm{\Sigma}}^{(0)}){\bm{\Sigma}}^{(0)}\right)_{jl_1} \hat{\bm{\Theta}}_{t_1,i_1 j}\right) \\
     &\qquad \cdot\left(\sum_{j=1}^d \left(\mathcal{F}_{S_{i_2}}({\bm{\Sigma}}^{(0)\top}{\bm{\Sigma}}^{(0)}){\bm{\Sigma}}^{(0)}\right)_{jl_2} \hat{\bm{\Theta}}_{t_2,i_2 j_2}\right)\Bigg|=O_{\mathbb{P}}\left(\frac{h_n \log^{2} (d \vee n)}{\sqrt{n}}\right).
      \end{aligned}
      \end{equation*}
 \end{proof}
 \subsection{Proof of Corollary \ref{cor: bootstrap}}
      \begin{proof}
       Algorithm \ref{algo: swb} implies that $\bm{\Delta}^*_{ij} $ has joint normal distribution conditional on the observations $\{X_t\}$. From Lemma B.1 in  \cite{zhang2023debiased}, we have for sufficiently large $n$, 
       \begin{align*}
           &\sup_{x \in \mathbb{R}}\left|\mathbb{P}^*\left(\max_{i,j = 1, \ldots, d} |\mathbf{\Delta}_{ij}^*| \leq x \right) - \mathbb{P}\left(\max_{i = 1, \ldots, d, j\in S_i}|Z_{ij}|\leq x\right)\right|\\
           =&  \sup_{x \in \mathbb{R}}\left|\mathbb{P}^*\left(\max_{i = 1, \ldots, d, j\in S_i} |\mathbf{\Delta}_{ij}^*| \leq x \right) - \mathbb{P}\left(\max_{i = 1, \ldots, d, j\in S_i}|Z_{ij}|\leq x\right)\right|\\
           \leq& C\left(\delta^{1/3}(1+\log^3(n))+\frac{\delta^{1/6}}{1+\log^{1/4}(n)}\right),
       \end{align*}
       where $\delta =   \max_{i_1, i_2, j_1, j_2 = 1,\ldots, d} |\mathbb{E}^*\bm{\Delta}_{i_1,j_1}\bm{\Delta}_{i_2,j_2} - \bm{\Sigma}_{j_1,j_2}^{(i_1,i_2)} |$, which has order $O_\mathbb{P}\left(h_n^{-1} + \frac{h_n \log^2 (d \vee n)}{\sqrt{n}}\right)$
\end{proof}

\section{Some Useful Lemmas}
\subsection{Proof of Lemma \ref{lem: momeng for sum}}
\begin{lemma}
\label{lem: momeng for sum} 
There exists $\rho \in (0,1)$ such that $\|X_{\cdot}\|_2 < \infty$. For $m \geq 0$, define
    \begin{equation*}
        \tilde{X}_{i} = (\tilde{X}_{i1}, \ldots, \tilde{X}_{id})^\top =\mathbb{E}(X_i \mid \varepsilon_{i-m}, \varepsilon_{i-m+1}, \ldots, \varepsilon_{i}).
    \end{equation*}
    Denote $\mathbb{E}_0 (\cdot) = \cdot - \mathbb{E}(\cdot)$. Then the followings hold for any $1\leq j\leq d$.
\begin{enumerate}[label=(\roman*)]
    \item $\|\sum_{i=1}^n \mathbb{E}_0X_{ij}\|_2 \leq  \sqrt{n} \|X_{\cdot}\|_2$.
    \item $\|\sum_{i=1}^n \mathbb{E}_0\tilde{X}_{ij}\|_2 \leq    \sqrt{n} \|X_{\cdot}\|_2$.
    \item $\|\sum_{i=1}^n  \mathbb{E}_0(X_{ij} -\tilde{X}_{ij}) \|_2 \leq  \sqrt{n} \rho^m \|X_{\cdot}\|_2$.
\end{enumerate}
\end{lemma}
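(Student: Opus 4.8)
The plan is to use Wu's martingale--projection device. Define the projection operators $\mathcal{P}_k(\cdot)=\mathbb{E}[\cdot\mid\mathcal{F}_{-\infty}^{k}]-\mathbb{E}[\cdot\mid\mathcal{F}_{-\infty}^{k-1}]$, which are orthogonal martingale differences in $L_2$. The single fact linking them to the functional dependence measure is the coupling identity $\mathcal{P}_{k}(Y)=\mathbb{E}[\,Y-Y^{\{k\}}\mid\mathcal{F}_{-\infty}^{k}\,]$, valid for any $\mathcal{F}_{-\infty}^{i}$-measurable $Y\in L_2$, where $Y^{\{k\}}$ is $Y$ with the innovation $\bm{\varepsilon}_{k}$ replaced by an independent copy $\bm{\varepsilon}_k'$; indeed $\mathbb{E}[Y^{\{k\}}\mid\mathcal{F}_{-\infty}^{k}]=\mathbb{E}[Y\mid\mathcal{F}_{-\infty}^{k-1}]$ since $\bm{\varepsilon}_k'$ and the future coordinates are integrated out. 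Jensen's inequality then gives $\|\mathcal{P}_{i-l}(X_{ij})\|_2\le\|X_{ij}-X_{ij}^{\{i-l\}}\|_2=\delta_{i-l,2,j}$, the basic estimate feeding all three bounds.

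For (i), I would decompose $\sum_{i=1}^n\mathbb{E}_0X_{ij}=\sum_{l=0}^{\infty}\sum_{i=1}^{n}\mathcal{P}_{i-l}(X_{ij})$, grouping by the common lag $l$. For fixed $l$ the summands sit at distinct innovation times $i-l$, hence are orthogonal, so $\|\sum_{i=1}^n\mathcal{P}_{i-l}(X_{ij})\|_2^2=\sum_{i=1}^n\|\mathcal{P}_{i-l}(X_{ij})\|_2^2\le n\,\delta_{l,2,j}^2$, i.e. $\|\sum_i\mathcal{P}_{i-l}(X_{ij})\|_2\le\sqrt n\,\delta_{l,2,j}$. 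A triangle inequality over $l$ and the $r=0$ instance of the DAM, $\sum_{l\ge 0}\delta_{l,2,j}\le\|X_{.j}\|_2\le\|X_.\|_2$, give (i). Part (ii) is identical once one observes that $\tilde X_{ij}$ is $\sigma(\bm{\varepsilon}_{i-m},\dots,\bm{\varepsilon}_i)$--measurable, so $\mathcal{P}_{i-l}(\tilde X_{ij})=0$ for $l>m$ (the conditional expectation is then constant), while for $l\le m$ the same Jensen bound yields $\|\mathcal{P}_{i-l}(\tilde X_{ij})\|_2\le\delta_{l,2,j}$; summing the finitely many surviving lags against the same DAM bound reproduces $\sqrt n\,\|X_.\|_2$.

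For (iii) the mechanism producing the geometric factor is that the $m$-dependent approximation strips off exactly the short-range part: writing $W_{ij}=X_{ij}-\tilde X_{ij}$, I claim $\mathcal{P}_{i-l}(W_{ij})=0$ for $l\le m$ and $\mathcal{P}_{i-l}(W_{ij})=\mathcal{P}_{i-l}(X_{ij})$ for $l>m$. The second assertion is immediate from $\mathcal{P}_{i-l}(\tilde X_{ij})=0$ when $l>m$. Granting both, the lag decomposition of part (i) applied to $W$ leaves only lags $l>m$, so $\|\sum_i\mathbb{E}_0 W_{ij}\|_2\le\sqrt n\sum_{l>m}\delta_{l,2,j}$, and the DAM estimate at $r=m$, namely $\sum_{l\ge m}\delta_{l,2,j}\le\rho^{m}\|X_{.j}\|_2\le\rho^m\|X_.\|_2$, yields exactly (iii).

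The hard part is the vanishing of the short-lag projections $\mathcal{P}_{i-l}(W_{ij})$ for $l\le m$, which is precisely where the sharp constant (with no polynomial-in-$m$ loss) comes from. For linear $\bm h$ it is transparent, but in general it must be extracted from the coupling identity: with $i-l$ inside the window $[i-m,i]$, replacing $\bm{\varepsilon}_{i-l}$ affects $X_{ij}$ and $\tilde X_{ij}=\mathbb{E}[X_{ij}\mid\bm{\varepsilon}_{i-m},\dots,\bm{\varepsilon}_i]$ through the same coordinate, while the far-past contribution is common to the conditionings at $i-l$ and $i-l-1$ and is therefore annihilated by $\mathcal{P}_{i-l}$. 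I would make this rigorous via the tower property, showing that $\mathbb{E}[W_{ij}\mid\mathcal{F}_{-\infty}^{i-l}]$ is free of $\bm{\varepsilon}_{i-l}$ for $l\le m$, using $\mathbb{E}[W_{ij}\mid\sigma(\bm{\varepsilon}_{i-m},\dots,\bm{\varepsilon}_i)]=0$. Should this exact cancellation prove fragile for nonlinear $\bm h$, the fallback is to bound the cross-covariances $\mathbb{E}[W_{ij}W_{i'j}]$ directly, exploiting $\mathbb{E}[W_{i'j}\mid\sigma(\bm{\varepsilon}_{i'-m},\dots,\bm{\varepsilon}_{i'})]=0$ to obtain geometric decay in $|i-i'|$ beyond lag $m$; that route is the one I expect to demand the most care.
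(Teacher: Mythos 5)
Your parts (i) and (ii) are fine: (i) is essentially the paper's own proof (forward projections grouped by lag, orthogonality within a lag, Minkowski across lags), and (ii) is a valid variant of the paper's argument, which instead expands $\mathbb{E}_0\tilde X_{ij}$ in reverse-filtration projections of $X_{ij}$. The genuine gap is in (iii), precisely at the step you flagged as the hard part: the claim that $\mathcal{P}_{i-l}(W_{ij})=0$ for $l\le m$ is false for nonlinear $\bm h$. Take $d=1$, i.i.d.\ mean-zero unit-variance innovations, $X_i=\varepsilon_i+b\,\varepsilon_{i-1}\varepsilon_{i-2}$ with $b\neq 0$, and $m=1$. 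Then $\tilde X_i=\mathbb{E}[X_i\mid\varepsilon_{i-1},\varepsilon_i]=\varepsilon_i$, so $W_i=b\,\varepsilon_{i-1}\varepsilon_{i-2}$, and
\begin{equation*}
\mathcal{P}_{i-1}(W_i)=\mathbb{E}[W_i\mid\mathcal{F}_{-\infty}^{i-1}]-\mathbb{E}[W_i\mid\mathcal{F}_{-\infty}^{i-2}]=W_i-0=W_i\neq 0,
\end{equation*}
even though $l=1\le m$. The tower-property route cannot be repaired, because $\sigma(\varepsilon_{i-m},\dots,\varepsilon_i)$ and $\mathcal{F}_{-\infty}^{i-l}$ are not nested, so $\mathbb{E}[W_{ij}\mid\sigma(\varepsilon_{i-m},\dots,\varepsilon_i)]=0$ gives no control on the forward projections; in the example the entire mass of $W_i$ sits at a short forward lag. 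Your fallback (Cauchy--Schwarz for pairs within lag $m$, geometric decay of cross-covariances beyond lag $m$) does yield a bound, but of order $\sqrt{n}\,\sqrt{m}\,\rho^{m}\|X_{\cdot}\|_2$, i.e.\ with a polynomial-in-$m$ loss that the stated inequality does not permit.

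The fix---and it is what the paper does for both (ii) and (iii)---is to change filtration rather than to seek cancellation in the forward one. Work with the reverse filtration $\mathcal{F}_{t}^{\infty}=\sigma(\varepsilon_t,\varepsilon_{t+1},\dots)$ and projections $\mathcal{P}^{t}(\cdot)=\mathbb{E}[\cdot\mid\mathcal{F}_{t}^{\infty}]-\mathbb{E}[\cdot\mid\mathcal{F}_{t+1}^{\infty}]$. Because the innovations are i.i.d., $\tilde X_{ij}=\mathbb{E}[X_{ij}\mid\mathcal{F}_{i-m}^{\infty}]$, so the $m$-dependent approximation is an \emph{exact} truncation of the reverse telescoping series: $\mathbb{E}_0\tilde X_{ij}=\sum_{k\le m}\mathcal{P}^{i-k}(X_{ij})$ and $X_{ij}-\tilde X_{ij}=\sum_{k> m}\mathcal{P}^{i-k}(X_{ij})$, with no short-lag leakage by construction. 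The coupling bound survives in this filtration, $\|\mathcal{P}^{i-k}(X_{ij})\|_2=\|\mathbb{E}[X_{ij}-X_{ij,\{i-k\}}\mid\mathcal{F}_{i-k}^{\infty}]\|_2\le\delta_{k,2,j}$ (replacing $\varepsilon_{i-k}$ by an independent copy and integrating it out is the same as conditioning on $\mathcal{F}_{i-k+1}^{\infty}$), and for fixed $k$ the variables $\mathcal{P}^{i-k}(X_{ij})$, $i=1,\dots,n$, remain orthogonal (condition on all innovations except $\varepsilon_{i-k}$). Your lag-wise argument from (i) then gives $\|\sum_{i}\mathbb{E}_0(X_{ij}-\tilde X_{ij})\|_2\le\sqrt{n}\sum_{k\ge m}\delta_{k,2,j}\le\sqrt{n}\,\rho^{m}\|X_{\cdot}\|_2$, with no loss in $m$.
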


\begin{proof}
(i) Note that we have the decomposition $\mathbb{E}_0 X_{ij}=\sum_{k=0}^\infty \mathcal{P}_{i-k}(X_{ij})$, where $(\mathcal{P}_{k}(X_{ij}))_k$ is a martingale difference sequence with respect to the filtration $\{\mathcal{F}_{-\infty}^k
\}$. 
Then we have 
\begin{equation}
\label{eq: rosenthal}
\|\sum_{i=1}^n \mathcal{P}_{i-k}(X_{ij})\|_2^2  =\sum_{i=1}^n\|\mathcal{P}_{i-k}(X_{ij})\|_2^2.
\end{equation}
By Jensen’s inequality, the stationarity of the process $(X_{ij})_i$ and the definition of the dependence measure, it follows that
\begin{equation}
\label{eq: MG diff}
\|\mathcal{P}_{i-k}(X_{ij})\|_2 = \|\mathbb{E}(X_{ij} - X_{ij,\{i-k\}})|\mathcal{F}_{i-k})\|_2 \leq \|X_{ij} - X_{ij,\{i-k\}}\|_2 = \delta_{k,2,j}.
\end{equation}
By \eqref{eq: rosenthal} and \eqref{eq: MG diff}, $\|\sum_{i=1}^n \mathcal{P}_{i-k}(X_{ij})\|_2 \leq \sqrt{n} \delta_{k,2,j}$. 
By Minkowski’s inequality and the fact that $\sum_{k=0}^\infty 
\delta_{k,2,j} \leq \|X_{\cdot j}\|_2 $, we have 
\begin{equation*}
\|\sum_{i=1}^n \mathbb{E}_0X_{ij}\|_2 \leq \sum_{k=0}^\infty \|\sum_{i=1}^n \mathcal{P}_{i-k}(X_{ij})\|_2 \leq \sqrt{n}\sum_{k=0}^\infty 
\delta_{k,2,j} \leq   \sqrt{n} \|X_{\cdot}\|_2.
\end{equation*}

(ii) We can adopt the decomposition $\mathbb{E}_0 \tilde{X}_{ij} = \sum_{k=0}^{m-1} \mathcal{P}^{i-k}(X_{ij})$. Note that $(\mathcal{P}^{k}(\cdot))_k$ is a backward martingale difference sequence with respect to $\{\mathcal{F}_k^\infty\}$. By similar arguments of deriving (i), we can obtain
\begin{equation*}
\|\mathcal{P}^{i-k}(X_{ij})\|_2 = \|\mathbb{E}(X_{ij}-X_{ij,\{i-k\}})|\mathcal{F}^{i-k})\|_2
\leq \|X_{ij} - X_{ij,\{i-k\}}\|_2 = \delta_{k,2,j}
\end{equation*}
and
\begin{equation}
\label{eq: ii}
\|\sum_{i=1}^n \mathbb{E}_0\tilde{X}_{ij}\|_2 \leq \sum_{k=0}^{m-1} \|\sum_{i=1}^n\mathcal{P}^{i-k}(X_{ij})\|_2 
\leq   \sqrt{n} \sum_{k=0}^{m-1}\delta_{k,2,j}\leq \sqrt{n} \|X_{\cdot}\|_2.
\end{equation}

(iii) Since $\mathbb{E}_0 (X_{ij}-\tilde{X}_{ij}) = \sum_{k=m}^{\infty} \mathcal{P}^{i-k}(X_{ij})$, similarly as \eqref{eq: ii}, we can obtain
\begin{equation*}
\|\sum_{i=1}^n \mathbb{E}_0(X_{ij}-\tilde{X}_{ij})\|_2
\leq   \sqrt{n} \sum_{k=m}^{\infty}\delta_{k,2,j}
\end{equation*}
Result (iii) follows in view of $\sum_{k=m}^{\infty}\delta_{k,2,j} \leq \rho^m \|X_{\cdot j}\|_2 \leq \rho^m \|X_{\cdot}\|_2.$
\end{proof}

 \subsection{Proof of Lemma \ref{lem: consistency of sigma}}
\begin{corollary}
\label{thm: mean}
Let conditions A1 to A3 hold. Let $\widehat{\mu}_j$ be the mean estimator of $\mu_j = \mathbb{E}(X_{ij})$ for $\sigma^* \geq \sigma_2$ satisfies
\begin{equation}
    \label{eq: condition for x}
    \mathcal{C}_1\log \left(\frac{1}{x}\right) \left(\mathcal{C}_1(\log n)^2+\mathcal{C}_2 \log n \frac{\|X_.\|_2}{\sigma_2}\right)\leq 4n,
\end{equation}
where $ 0<x\leq 1/e$. Then for $1\leq j\leq d$, we have
    \begin{equation}
    \label{eq: mean}
        \mathbb{P}(|\widehat{\mu}_j - \mu_j|\geq t) \leq 2\exp \left\{-\frac{n^2t^2}{4 C_1(n\|X.\|_2^2+M^2)+2 C_2 M(\log n)^2 nt}\right\}.
    \end{equation}
Let $$
 t =  \sqrt{ \frac{C_1   \|X.\|_2^2 \log (1/x)}{n}}+\frac{(\sqrt{C_1}+C_2)M(\log n)^2  \log (1/x)}{n} , 
$$
which implies that $\mathbb{P}(\widehat{\mu}_j -\mu_j\geq t)\leq e^{-1/4}x$. In particular, letting $x = d^{-c-1}$, if \eqref{eq: condition for x} is satisfied, for some $c>0$, it follows 
\begin{equation}
\label{eq: probability inequality for mean}
    \mathbb{P}\left(|\widehat{\mu}-\mu|_\infty\geq \sqrt{c+1}(\mathcal{C}_1 \sigma^*\log n +\mathcal{C}_2\|X.\|_2)\sqrt{\frac{\log d}{n}}\right)\leq 2e^{-1/4}d^{-c}.
\end{equation}
\end{corollary}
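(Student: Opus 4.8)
The plan is to apply the dependent Bernstein-type inequality of Proposition \ref{thm: robust} coordinate by coordinate and then combine the $d$ bounds by a union bound. First I would write $\widehat{\mu}_j - \mu_j = \tfrac{1}{n} S_n^{(j)}$ with $S_n^{(j)} = \sum_{i=1}^n (X_{ij} - \mathbb{E}X_{ij})$, and verify that each scalar process $(X_{ij})_i$ meets the hypotheses of Proposition \ref{thm: robust}: it has the causal form \eqref{eq: causal} with $d=1$, it is uniformly bounded by $M=1$ because the gbVAR process is binary, its second DAM $\|X_{\cdot}\|_2$ is finite by the GMC property of Proposition \ref{lem: gmc(q)}, and Assumption A1 supplies $n \geq 4 \vee (\log(\rho^{-1})/2)$. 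Applying the inequality to the events $\{S_n^{(j)} \geq nt\}$ and $\{-S_n^{(j)} \geq nt\}$ and adding the two bounds yields the two-sided tail estimate \eqref{eq: mean} with the leading factor $2$.

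The second step is to invert this tail bound, i.e.\ for the stated choice of $t$ to show that the exponent $\frac{(nt)^2}{4C_1(n\|X_{\cdot}\|_2^2+M^2)+2C_2 M(\log n)^2 nt}$ is at least $\log(1/x)+\tfrac14$. Writing the denominator as $a + b\,(nt)$ with $a = 4C_1(n\|X_{\cdot}\|_2^2+M^2)$ and $b = 2C_2 M(\log n)^2$, the quadratic inequality $(nt)^2 \geq (\log(1/x)+\tfrac14)\,(a+b\,nt)$ holds once $nt$ exceeds the usual combination of $\sqrt{a\log(1/x)}$ and $b\log(1/x)$ arising from the positive root; the two summands in the displayed $t$ are designed to be exactly the sub-Gaussian piece $\sqrt{C_1\|X_{\cdot}\|_2^2\log(1/x)/n}$ and the sub-exponential piece $(\sqrt{C_1}+C_2)M(\log n)^2\log(1/x)/n$ dominating these two contributions. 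The hypothesis $x \leq 1/e$, equivalently $\log(1/x)\geq 1$, is what is used to absorb the additive $\tfrac14$, producing $\mathbb{P}(\widehat{\mu}_j-\mu_j\geq t)\leq e^{-1/4}x$.

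Finally I would take $x = d^{-c-1}$, so that $\log(1/x) = (c+1)\log d$, and union bound over the coordinates: $\mathbb{P}(|\widehat{\mu}-\mu|_\infty \geq t) \leq \sum_{j=1}^d \mathbb{P}(|\widehat{\mu}_j - \mu_j|\geq t) \leq 2d\cdot e^{-1/4}d^{-c-1} = 2e^{-1/4}d^{-c}$. It then remains to rewrite $t$ in the clean form of \eqref{eq: probability inequality for mean}: the first summand already equals $\sqrt{C_1(c+1)}\,\|X_{\cdot}\|_2\sqrt{\log d/n}$, matching the $\mathcal{C}_2\|X_{\cdot}\|_2$ term with $\mathcal{C}_2 = \sqrt{C_1}$, while the regularity condition \eqref{eq: condition for x} is precisely what bounds the second, sub-exponential summand by the $\mathcal{C}_1\sigma^*\log n$ term, since it forces the products $(c+1)\log d\,(\log n)^2/n$ and $(c+1)\log d\,\log n\,\|X_{\cdot}\|_2/(\sigma_2 n)$ to stay controlled. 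I expect the main obstacle to be this last bookkeeping: tracking the absolute constants through the inversion so that the $+\tfrac14$ is genuinely gained, and verifying that \eqref{eq: condition for x} lets the $M(\log n)^2$ factor be absorbed into $\sigma^*\log n$ using $\sigma^* \geq \sigma_2$. The reduction and the union bound are routine; the delicate part is purely the constant-level calibration of $t$.
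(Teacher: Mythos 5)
Your overall route --- apply the Bernstein-type inequality of Proposition \ref{thm: robust} to each coordinate process $(X_{ij})_i$ (one-dimensional of the form \eqref{eq: causal}, bounded, with finite DAM via Proposition \ref{lem: gmc(q)}), invert the tail at the displayed $t$, then union bound at $x=d^{-c-1}$ --- is exactly the intended derivation: the paper states this corollary without a separate proof, as a direct consequence of Proposition \ref{thm: robust}, so your steps 1 and 3 reproduce what the paper leaves implicit, and both are correct (the factor $2$ from the two-sided bound and the count $2d\,e^{-1/4}d^{-c-1}=2e^{-1/4}d^{-c}$ match \eqref{eq: probability inequality for mean}, and your use of \eqref{eq: condition for x} together with $\sigma^*\geq\sigma_2$ to absorb the sub-exponential summand into the $\mathcal{C}_1\sigma^*\log n$ term is the right mechanism).

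The gap is in your second step, precisely at the point you flag as delicate and then assert goes through. Write $z=nt$, $L=\log(1/x)$, $\Lambda=L+\tfrac14$, $a=4C_1(n\|X.\|_2^2+M^2)$, $b=2C_2M(\log n)^2$. The requirement $z^2\geq\Lambda(a+bz)$ forces $z$ to be at least the positive root of $z^2-\Lambda bz-\Lambda a$, hence $z\geq\sqrt{\Lambda a}\geq 2\sqrt{C_1 nL}\,\|X.\|_2$. The displayed $t$ contributes only $\sqrt{C_1 nL}\,\|X.\|_2$ from its sub-Gaussian summand --- half of what is necessary --- and its sub-exponential summand $(\sqrt{C_1}+C_2)M(\log n)^2L$ cannot cover the deficit in the regime of interest: with $L=(c+1)\log d$ and, say, $\log d\asymp\log n$, that summand is $o\bigl(\sqrt{nL}\bigr)$. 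Consequently the exponent in \eqref{eq: mean} evaluates to roughly $L/4$ rather than $L+\tfrac14$, the intermediate claim $\mathbb{P}(\widehat{\mu}_j-\mu_j\geq t)\leq e^{-1/4}x$ does not follow with the literal constants, and your statement that the two summands of $t$ ``dominate these two contributions'' is false for the sub-Gaussian piece (note the union bound then also degrades: $d\cdot x^{1/4}$ does not decay for $x=d^{-c-1}$ unless $c>3$). The repair is mechanical but must be made explicit: take $nt=\sqrt{\Lambda a}+\Lambda b$, for which $z^2-\Lambda(a+bz)=\Lambda^{3/2}\sqrt{a}\,b\geq 0$, i.e.\ double the sub-Gaussian part of $t$ (replace $C_1$ by $4C_1$ under the square root); your step 3 then goes through verbatim and yields \eqref{eq: probability inequality for mean} with recalibrated constants, e.g.\ $\mathcal{C}_2=2\sqrt{C_1}$ instead of $\sqrt{C_1}$. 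This slippage arguably originates in the statement itself, but a proof cannot ``gain the $+\tfrac14$'' with $t$ exactly as displayed, so it has to be corrected rather than asserted.
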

 
\begin{lemma}
\label{lem: consistency of sigma}
    Assume that $\|X_{\cdot}\|_2<\infty$. Let $\lambda \asymp \log n\sqrt{\log d /n}$. Then with probability at least $1- \frac{20 e^{-1/4}}{3}d^{-c}$ for some constant $c > 0$, it holds that
    \begin{equation*}
        |\hat{\bm{\Sigma}}^{(0)} - \bm{\Sigma}^{(0)}|_{\max} \leq \lambda_0 \text{  and  }|\hat{\bm{\Sigma}}^{(1)} - \bm{\Sigma}^{(1)}|_{\max} \leq \lambda_0 
    \end{equation*}
 \end{lemma}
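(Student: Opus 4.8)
The plan is to center the sample autocovariances at the true mean $\mu$ rather than at $\bar X$, which splits the error into a leading empirical-process term with a known center plus several lower-order remainders driven by $\bar X-\mu$. Writing $X_t-\bar X=(X_t-\mu)-(\bar X-\mu)$ and expanding the outer product,
\begin{equation*}
\widehat{\bm{\Sigma}}^{(k)}_{ij}-\bm{\Sigma}^{(k)}_{ij}
=\Big(\tfrac1n\textstyle\sum_{t=1}^{n-k}Y^{(ij,k)}_t-\bm{\Sigma}^{(k)}_{ij}\Big)
-(\bar X_j-\mu_j)\tfrac1n\textstyle\sum_{t=1}^{n-k}(X_{t+k,i}-\mu_i)
-(\bar X_i-\mu_i)\tfrac1n\textstyle\sum_{t=1}^{n-k}(X_{t,j}-\mu_j)
+\tfrac{n-k}{n}(\bar X_i-\mu_i)(\bar X_j-\mu_j),
\end{equation*}
where $Y^{(ij,k)}_t=(X_{t+k,i}-\mu_i)(X_{t,j}-\mu_j)$ and $\mathbb{E}Y^{(ij,k)}_t=\bm{\Sigma}^{(k)}_{ij}$. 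Thus the expectation of the first average is $\tfrac{n-k}{n}\bm{\Sigma}^{(k)}_{ij}$, which for $k\in\{0,1\}$ deviates from $\bm{\Sigma}^{(k)}_{ij}$ by $O(1/n)=o(\lambda_0)$ and is negligible.

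For the leading term I would apply the Bernstein-type inequality of Proposition \ref{thm: robust} to the one-dimensional process $(Y^{(ij,k)}_t)_t$ for each fixed pair $(i,j)$. Because $X_t$ is binary-valued we have $|Y^{(ij,k)}_t|\le 1$, so the uniform bound is $M=O(1)$; moreover, by Proposition \ref{lem: gmc(q)} and the covariance bound \eqref{eq: cov bound}, the product process inherits the GMC property and hence a uniformly bounded dependence-adjusted moment $\|Y^{(ij,k)}_\cdot\|_2\lesssim\|X_\cdot\|_2^2<\infty$. Consequently the constants $C_1,C_2$ in \eqref{eq: exp univ} can be taken uniform in $(i,j)$. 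Choosing the deviation level $t\asymp\log n\sqrt{\log d/n}$ makes the right-hand side of \eqref{eq: exp univ} of order $d^{-c-1}$, so a union bound over the $d^2$ pairs controls $\max_{i,j}\big|\tfrac1n\sum_t Y^{(ij,k)}_t-\mathbb{E}Y^{(ij,k)}_t\big|$ by $\lambda_0$ on an event of probability at least $1-Cd^{-c}$.

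For the three remainder terms I would invoke Corollary \ref{thm: mean}: inequality \eqref{eq: probability inequality for mean} gives $|\bar X-\mu|_\infty=O_{\mathbb{P}}(\log n\sqrt{\log d/n})=O_{\mathbb{P}}(\lambda_0)$ on an event of probability at least $1-2e^{-1/4}d^{-c}$, and the same inequality applied to the averages $\tfrac1n\sum_{t=1}^{n-k}(X_{t+k,i}-\mu_i)$ (which equal $\bar X_i-\mu_i$ up to a single boundary term of size $O(1/n)$) shows each of those is $O_{\mathbb{P}}(\lambda_0)$ as well. Hence the two cross terms and the outer-product term are each $O_{\mathbb{P}}(\lambda_0^2)=o(\lambda_0)$ and are absorbed into $\lambda_0$ for large $n$. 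Collecting the failure probabilities of the mean concentration and of the $d^2$-fold Bernstein union bound yields the stated overall probability $1-\tfrac{20e^{-1/4}}{3}d^{-c}$, and repeating the argument for both $k=0$ and $k=1$ finishes the proof.

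The main obstacle I anticipate is the \emph{uniformity} of the Bernstein step: one must verify that the boundedness constant $M$, the dependence parameter $\rho$, and the dependence-adjusted moment $\|Y^{(ij,k)}_\cdot\|_2$ are bounded uniformly over all $d^2$ coordinate pairs, so that a single deviation level $t\asymp\log n\sqrt{\log d/n}$ calibrates every tail to the same $d^{-c-1}$ order before the union bound. This is precisely where the uniform DAM $\|X_\cdot\|_2<\infty$, together with the inheritance of GMC by the products via \eqref{eq: cov bound}, is indispensable; without it the $\log d$ inflation from the union bound could not be offset and the target rate $\lambda_0$ would be lost.
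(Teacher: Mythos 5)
Your proposal is correct and follows essentially the same route as the paper's proof: both reduce the problem to coordinate-wise concentration of scalar product processes via the dependent Bernstein inequality (Proposition \ref{thm: robust}), verify that these products have finite dependence-adjusted moments, control the mean-related terms through Corollary \ref{thm: mean}, and finish with a union bound over the $d^2$ pairs. The only differences are cosmetic: you center the products at the true mean $\mu$, which makes your remainders second-order, $O_{\mathbb{P}}(\lambda_0^2)$, whereas the paper's decomposition $\widehat{\bm{\Sigma}}^{(0)}_{jk}=\widehat{\mu}_{jk}-\widehat{\mu}_j\widehat{\mu}_k$ leaves a first-order term $2|\mu|_\infty|\widehat{\mu}-\mu|_\infty$ that is absorbed into the constant of $\lambda_0$; and the finite DAM of the product process is justified in the paper by the H\"older bound $\|X_{ij}X_{ik}-X_{ij,\{0\}}X_{ik,\{0\}}\|_2\leq \|X_{ij}\|_4\|X_{ik}-X_{ik,\{0\}}\|_4+\|X_{ik,\{0\}}\|_4\|X_{ij}-X_{ij,\{0\}}\|_4$ (fourth-moment DAMs) rather than by \eqref{eq: cov bound} as you cite, though for bounded binary data your claimed uniform bound holds all the same.
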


\begin{proof}
Denote $|\mu|_\infty = \max_{1\leq j\leq d} |\mu_j|$. Let $x = d^{-c-2}$. Then
$$
\bm{\Delta}_n(d^{-c-2}) = \sqrt{c+2}\left(\mathcal{C}_1 \sigma^*\log n +\mathcal{C}_2\|X.\|_2\right)\sqrt{\frac{\log d}{n}}
$$
By the Bonferroni inequality and Corollary \ref{thm: mean},
\begin{equation}
    \mathbb{P}(|\hat{\mu}-\mu|_\infty \geq \bm{\Delta}_n(d^{-c-2}))\leq 2e^{-1/4} d^{-c-1}
\end{equation}
    Since 
    $$
\max_{1\leq j,k \leq d}|\hat{\mu}_j \hat{\mu}_k - \mu_j \mu_k| \leq 2 |\mu|_\infty |\hat{\mu} - \mu|_\infty + |\hat{\mu}-\mu|_\infty^2,
    $$
    it follows that
\begin{equation}
\label{eq: mujmuk}
    \mathbb{P}\left(\max_{1\leq j,k \leq p}|\hat{\mu}_j \hat{\mu}_k - \mu_j \mu_k| 
 \geq 2|\mu|_\infty \bm{\Delta}_n(d^{-c-2})+\bm{\Delta}^2_n(d^{-c-2})\right) \leq 2e^{-1/4}d^{-c-1}.
\end{equation}
By the triangle inequality and H\"{o}lder inequality, we can compute the dependence measure of the process $(X_{ij}X_{ik})$, $i \in \mathbb{Z}$, as
\begin{equation}
    \begin{aligned}
        &\|X_{ij}X_{ik} - X_{ij,\{0\}}X_{ik,\{0\}}\|_2 \\
        \leq & \|X_{ij}\|_4 \|X_{ik} - X_{ik,\{0\}}\|_4 +\|X_{ik,\{0\}}\|_4 \|X_{ij} - X_{ij,\{0\}}\|_4\\
        = & \omega_4 (\bm{\Delta}_{i,4,j}+\bm{\Delta}_{i,4,k}),
    \end{aligned}
\end{equation}
which implies 
$$
\sup_{m\leq 0}\rho^{-m} \sum_{i=m}^\infty\|X_{ij}X_{ik} - X_{ij,\{0\}}X_{ik,\{0\}}\|_2 \leq 2 \omega_4 \|X.\|_4.
$$
Thus, 
\begin{equation}
\label{eq: mujk}
    \begin{aligned}
        \mathbb{P}\left(\frac{1}{n}\left|\sum_{i = 1}^n X_{ij}X_{ik} - \mathbb{E}X_{ij}X_{ik}\right|\geq \lambda_0 \right) \leq 2 e^{-1/4}d^{-c}
    \end{aligned}
\end{equation}
with 
$$
\begin{aligned}
    \lambda_0 = & \sqrt{c+2}[\mathcal{C}_1(2 |\mu|_\infty \sigma^*+1) \log n+\mathcal{C}_2 (2|\mu|_\infty \|X.\|_2+\omega_4\|X.\|_4 )]\sqrt{\frac{\log d}{n}} \\
    &+ (c+2)(\mathcal{C}_1  \sigma^* \log n  + \mathcal{C}_2 \|X.\|_2)^2 \frac{\log d}{n}
\end{aligned}
$$
Then, 
$$
\mathbb{P}(|\hat{\bm{\Sigma}}^{(0)} - \bm{\Sigma}^{(0)}|_{\max} \geq \lambda_0) \leq \frac{8e^{-1/4}}{3}d^{-c}
$$
follows from the Bonferroni inequality, \eqref{eq: mujmuk} and \eqref{eq: mujk} in view of $p\geq 3$ and 
$$
|\hat{\bm{\Sigma}}^{(0)} -\bm{\Sigma}^{(0)}|_{\max} \leq \max_{1\leq j,k\leq d}|\hat{\mu}_j\hat{\mu}_k -\mu_j\mu_k|+\max_{1\leq j,k \leq d} |\hat{\mu}_{jk} -\mu_{jk}|.
$$

Similarly, we can obtain $|\hat{\bm{\Sigma}}^{(1)} - \bm{\Sigma}^{(1)}|_{\max}\leq \lambda_0$ with at least $1- 4e^{-1/4}d^{-c}$. In conclusion, $
        |\hat{\bm{\Sigma}}^{(0)} - \bm{\Sigma}^{(0)}|_{\max} \leq \lambda_0 $ and $|\hat{\bm{\Sigma}}^{(1)} - \bm{\Sigma}^{(1)}|_{\max} \leq \lambda_0 $ holds simultaneously at least $1-\frac{20e^{-1/4}}{3}d^{-c}$.
\end{proof}

\subsection{Proof of Lemma \ref{lem: martingale difference}}
\begin{lemma}
\label{lem: martingale difference}
    Let $Z_t, 1 \leq t \leq n$, be $d$-dimensional martingale difference vectors, $p \geq 1$. Let $s>1$ and $q \geq 2$. Then
\begin{equation}
\label{eq: D1}
\left\|\left|Z_1+\ldots+Z_n\right|_s\right\|_q^2 \leq C_q \cdot s \sum_{t=1}^n\left\|\left|Z_t\right|_s\right\|_q^2
\end{equation}
where $C_q$ is a positive constant depending on $q$ only.
\end{lemma}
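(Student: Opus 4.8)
The plan is to exploit the 2-uniform smoothness of the $\ell^s$ geometry (which is precisely why the factor in the bound is linear in $s$) together with a Burkholder--Davis--Gundy-type square-function inequality to pass from second moments to general $L^q$ moments. Throughout write $S_n = Z_1 + \cdots + Z_n$. The regime of interest is $s \geq 2$, which is what the applications need (e.g. Lemma~\ref{lem: momeng for sum} uses $s=2$); for $1 < s < 2$ the $\ell^s$ norm is no longer $2$-smooth and a factor linear in $s$ cannot hold, as a one-active-coordinate-per-step example ($Z_t = \epsilon_t e_t$ with Rademacher $\epsilon_t$, giving $\||S_n|_s\|_q^2 = n^{2/s}$ against $\sum_t\||Z_t|_s\|_q^2 = n$) shows.

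First, recall the geometric input: for $s \geq 2$ the space $(\mathbb{R}^d, |\cdot|_s)$ is $2$-uniformly smooth, i.e. $|x+y|_s^2 + |x-y|_s^2 \leq 2|x|_s^2 + 2(s-1)|y|_s^2$ for all $x,y \in \mathbb{R}^d$ (a consequence of Clarkson's inequalities and the second-order expansion of $|\cdot|_s^2$). I would first dispose of the case $q=2$: conditioning on $\mathcal{F}_{n-1}$ and using $\mathbb{E}[Z_n\mid\mathcal{F}_{n-1}]=0$ annihilates the first-order cross term, giving $\mathbb{E}[|S_n|_s^2 \mid \mathcal{F}_{n-1}] \leq |S_{n-1}|_s^2 + (s-1)\,\mathbb{E}[|Z_n|_s^2\mid\mathcal{F}_{n-1}]$, and telescoping yields $\mathbb{E}|S_n|_s^2 \leq (s-1)\sum_t \mathbb{E}|Z_t|_s^2$.

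For general $q \geq 2$ the cross term no longer vanishes under an $L^q$ conditioning, so I would instead invoke the Banach-valued Burkholder--Davis--Gundy inequality available in $2$-smooth spaces (Pinelis): with smoothness constant of order $\sqrt{s}$ one has $\big\||S_n|_s\big\|_q \leq C_q\sqrt{s}\,\big\|(\sum_{t=1}^n |Z_t|_s^2)^{1/2}\big\|_q$, where the square function is the scalar random variable $(\sum_t |Z_t|_s^2)^{1/2}$. It then remains to dominate the square function: since $q/2 \geq 1$, the triangle inequality in $L^{q/2}$ gives $\big\|(\sum_t |Z_t|_s^2)^{1/2}\big\|_q^2 = \big\|\sum_t |Z_t|_s^2\big\|_{q/2} \leq \sum_t \big\||Z_t|_s^2\big\|_{q/2} = \sum_t \||Z_t|_s\|_q^2$. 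Squaring the BDG bound and combining with this estimate produces $\||S_n|_s\|_q^2 \leq C_q\,s\,\sum_t \||Z_t|_s\|_q^2$, after absorbing $(s-1)$ into $s$.

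The main obstacle is the $L^q$ upgrade. A naive coordinatewise reduction---apply the scalar martingale inequality to each $S_{n,j}$ and recombine by Minkowski---fails, because interchanging the $\ell^s$ and $L^q$ norms can only bound $\||Z_t|_s\|_q$ \emph{from above} by $(\sum_j \|Z_{t,j}\|_q^s)^{1/s}$, which is the wrong direction for a right-hand side one wishes to bound from below; the correlation among the coordinates of a single $Z_t$ across the sample space is genuinely lost. This is exactly what forces the vector-valued square-function inequality rather than a coordinatewise argument, and establishing that inequality with the correct $\sqrt{s}$ dependence on the smoothness constant is the technical heart of the proof.
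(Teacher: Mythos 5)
Your proposal is correct for $s\ge 2$ and is essentially the paper's own argument: the paper proves \eqref{eq: D1} by quoting Pinelis's Theorem 4.1 for martingales in $(2,D)$-smooth spaces (with $D=\sqrt{s-1}$ for $|\cdot|_s$) and then dominating both right-hand terms by $\bigl(\sum_{t=1}^n\||Z_t|_s\|_q^2\bigr)^{1/2}$, the predictable square function being handled exactly as in your final step, via the triangle inequality in $L^{q/2}$ plus Jensen. The only difference is cosmetic: you invoke a BDG-type bound with the raw square function $\bigl(\sum_t |Z_t|_s^2\bigr)^{1/2}$, whereas the paper keeps Pinelis's form with $\sup_t|Z_t|_s$ plus the conditional square function $\bigl[\sum_t\mathbb{E}\bigl(|Z_t|_s^2\mid\mathcal{G}_{t-1}\bigr)\bigr]^{1/2}$; both are then bounded in the same termwise manner.

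Your side observation is, however, a genuine correction to the statement rather than a mere restriction of scope: the lemma as written allows $1<s<2$, and your example $Z_t=\epsilon_t e_t$ (Rademacher signs, $d\ge n$) indeed gives $\||Z_1+\cdots+Z_n|_s\|_q^2=n^{2/s}$ against $\sum_t\||Z_t|_s\|_q^2=n$, so no bound of the form $C_q\, s$ can hold uniformly in $n$ when $s<2$. The paper's proof implicitly needs $s\ge2$ as well, since Pinelis's theorem requires $2$-uniform smoothness with a dimension-free constant, which $(\mathbb{R}^d,|\cdot|_s)$ possesses (with $D=\sqrt{s-1}$) only for $s\ge2$; for $1<s<2$ its modulus of smoothness is of power type $s$, and any $2$-smoothness constant must grow with $d$. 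The restriction propagates mildly: Lemma \ref{lem: Lemma D3} applies the present lemma with $s=\log d$, which requires $d\ge e^2$ (i.e.\ $d\ge 8$) rather than the stated $d\ge3$, though this is immaterial in the paper's high-dimensional regime.
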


\begin{proof}
    Suppose $Z_t, 1 \leq t \leq n$, are martingale difference vectors with respect to the $\sigma$-field $\mathcal{G}_t$. By Theorem 4.1 of \cite{pinelis1994optimum}, we have
$$
\left\|\left|Z_1+\ldots+Z_n\right|_s\right\|_q \leq C_q\left\{\left\|\sup _t\left|Z_t\right|_s\right\|_q+\sqrt{(s-1)}\left\|\left[\sum_{t=1}^n \mathbb{E}\left(\left|Z_t\right|_s^2 \mid \mathcal{G}_{t-1}\right)\right]^{1 / 2}\right\|_q\right\} .
$$
Hence, \eqref{eq: D1} follows in view of the fact that
$$
\left\|\sup _t\left|Z_t\right|_s\right\|_q \leq\left(\sum_{t=1}^n\left\|\left|Z_t\right|_s\right\|_q^q\right)^{1 / q} \leq\left(\sum_{t=1}^n\left\|\left|Z_t\right|_s\right\|_q^2\right)^{1 / 2},
$$
and the inequality
$$
\begin{aligned}
\left\|\left[\sum_{t=1}^n \mathbb{E}\left(\left|Z_t\right|_s^2 \mid \mathcal{G}_{t-1}\right)\right]^{1 / 2}\right\|_q & \leq\left(\sum_{t=1}^n\left\|\mathbb{E}\left(\left|Z_t\right|_s^2 \mid \mathcal{G}_{t-1}\right)\right\|_{q / 2}\right)^{1 / 2} \\
& \leq\left(\sum_{t=1}^n\left\|\left.\left|Z_t\right|\right|_s\right\|_q^2\right)^{1 / 2}
\end{aligned}
$$
which is implied by the triangle inequality and Jensen's inequality.
\end{proof} 

\subsection{Proof of Lemma \ref{lem: Lemma D3}}
\begin{lemma}
\label{lem: Lemma D3}
Let $Z_t, 1 \leq t \leq n$, be $d$-dimensional martingale difference vectors with $d \geq 1$. Let $\ell=1 \vee \log d$ and $q \geq 2$. Then
\begin{equation}
\label{eq: D2}
\left\|\left|Z_1+\ldots+Z_n\right|_{\infty}\right\|_q^2 \leq C_q \cdot \ell \sum_{t=1}^n\left\|\left|Z_t\right|_{\infty}\right\|_q^2,
\end{equation}
where $C_q$ is a positive constant depending on $q$ only.
\end{lemma}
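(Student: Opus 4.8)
The plan is to deduce the $\ell_\infty$ bound \eqref{eq: D2} from the $\ell_s$ bound already established in Lemma \ref{lem: martingale difference} by means of the elementary finite-dimensional norm comparisons and a single optimization over the free exponent $s$. The starting observation is that for every $x \in \mathbb{R}^d$ and every $s \geq 1$ one has $|x|_\infty \leq |x|_s$ and $|x|_s \leq d^{1/s}|x|_\infty$. Writing $S_n = Z_1 + \cdots + Z_n$, the first comparison applied pointwise gives $|S_n|_\infty \leq |S_n|_s$, so by monotonicity of the $L_q$ norm and Lemma \ref{lem: martingale difference},
\begin{equation*}
\left\|\,|S_n|_\infty\,\right\|_q^2 \leq \left\|\,|S_n|_s\,\right\|_q^2 \leq C_q\, s \sum_{t=1}^n \left\|\,|Z_t|_s\,\right\|_q^2 .
\end{equation*}
The second comparison applied to each summand, $|Z_t|_s \leq d^{1/s}|Z_t|_\infty$, then yields
\begin{equation*}
\left\|\,|S_n|_\infty\,\right\|_q^2 \leq C_q\, s\, d^{2/s} \sum_{t=1}^n \left\|\,|Z_t|_\infty\,\right\|_q^2 ,
\end{equation*}
valid for any fixed $s > 1$.

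The only substantive step is to choose $s$ so as to control the prefactor $s\, d^{2/s}$. I would take $s = 2 \vee \log d$, which is strictly larger than $1$ so that Lemma \ref{lem: martingale difference} indeed applies in every regime. When $\log d \geq 2$ we have $s = \log d = \ell$ and $d^{2/s} = e^2$; when $1 < \log d < 2$ we have $s = 2$ and $d^{2/s} = d < e^2$; and when $\log d \leq 1$ we again have $s = 2$, $d \leq e$, and $\ell = 1$. In all three cases $s \leq 2\ell$ and $d^{2/s} \leq e^2$, so $s\, d^{2/s} \leq 2 e^2 \ell$. Substituting this bound gives
\begin{equation*}
\left\|\,|S_n|_\infty\,\right\|_q^2 \leq 2 e^2 C_q\, \ell \sum_{t=1}^n \left\|\,|Z_t|_\infty\,\right\|_q^2 ,
\end{equation*}
which is exactly \eqref{eq: D2} after renaming the constant to $C_q' = 2 e^2 C_q$, still depending only on $q$.

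The argument carries no serious analytic difficulty; the single point requiring care is the optimization over $s$, which is precisely where the logarithmic factor $\ell = 1 \vee \log d$ enters, balancing the linear growth $s$ against the dimensional blow-up $d^{2/s}$. The remaining work is purely the bookkeeping of the small-$d$ edge cases needed to keep $s > 1$ and to verify the uniform bounds $s \leq 2\ell$ and $d^{2/s} \leq e^2$, after which the constant is absorbed into $C_q'$.
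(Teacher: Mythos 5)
Your proposal is correct, and its core mechanism is the same as the paper's: deduce the $\ell_\infty$ bound from Lemma \ref{lem: martingale difference} via the norm comparison $|v|_\infty \leq |v|_s \leq d^{1/s}|v|_\infty$ with $s$ of order $\log d$, so that the dimensional factor $d^{2/s}$ collapses to a constant while $s$ contributes the factor $\ell$. The one place you genuinely diverge is the treatment of small dimensions: since Lemma \ref{lem: martingale difference} requires $s>1$, the paper handles $d=1,2$ separately by citing Theorem 3.2 of Burkholder (1973) and only applies the norm comparison with $s=\log d$ for $d\geq 3$ (where $d^{1/\log d}=e$ exactly), whereas your choice $s = 2 \vee \log d$ keeps $s>1$ in every regime and makes the argument uniform over all $d\geq 1$, at the cost of a slightly worse constant ($2e^2 C_q$ versus $e^2 C_q$). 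Your case analysis checks out ($s \leq 2\ell$ and $d^{2/s}\leq e^2$ in all three regimes), so your version is a self-contained and marginally cleaner proof that avoids the extra external citation; the paper's version gives the sharper prefactor for $d\geq 3$. Both are fully valid.
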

\begin{proof}
If $d=1,2$, \eqref{eq: D2} can be obtained by Theorem 3.2 in \cite{burkholder1973distribution} without much difficulty. If $d \geq 3$, for any vector $v= \left(v_1, \ldots, v_d\right)^{\top}$, we have $|v|_{\infty} \leq|v|_{\log d} \leq d^{1 / \log d}|v|_{\infty}=e|v|_{\infty}$. Hence, \eqref{eq: D2} is an immediate result of Lemma \ref{lem: martingale difference} by letting $s=\log d$.
\end{proof}

\subsection{Proof of Lemma \ref{lem: Sigma hat}}
\begin{lemma}
\label{lem: Sigma hat}
   Under Assumptions in Proposition \ref{lem: gmc(q)}, we have
    \begin{equation}
        \begin{aligned}
            \| |\hat{\bm{\Sigma}}^{(0)} - \bm{\Sigma}^{(0)}|_{\max} \|_{q/2} =O\left( \sqrt{\frac{\log d}{n}} \right)\\
            \text{ and }\| |\hat{\bm{\Sigma}}^{(1)} - \bm{\Sigma}^{(1)}|_{\max}\|_{q/2} =O\left(\sqrt{\frac{\log d }{n}}\right)
        \end{aligned}
    \end{equation}
\end{lemma}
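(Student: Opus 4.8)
The plan is to reduce the max-norm deviation of each sample autocovariance matrix to a centered sum of a bilinear (product) process, and then to apply the martingale maximal inequality of Lemma \ref{lem: Lemma D3}, which is exactly what converts the count $d^2$ of entries into only a $\sqrt{\log d}$ factor inside an $L_{q/2}$ moment bound. Fix $k\in\{0,1\}$. Working entrywise and centering at the true mean $\mu$, I would decompose
\begin{equation*}
\hat{\bm{\Sigma}}^{(k)}_{jl}-\bm{\Sigma}^{(k)}_{jl}
=\frac{1}{n}\sum_{t=1}^{n-k}\mathbb{E}_0\!\left[(X_{t+k,j}-\mu_j)(X_{t,l}-\mu_l)\right]
-\frac{k}{n}\bm{\Sigma}^{(k)}_{jl}+R^{(k)}_{jl},
\end{equation*}
where $\mathbb{E}_0(\cdot)=\cdot-\mathbb{E}(\cdot)$ and $R^{(k)}_{jl}$ gathers the three cross terms created by replacing $\mu$ with $\bar{X}$. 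Since $k\le 1$, the boundary term $\tfrac{k}{n}\bm{\Sigma}^{(k)}_{jl}$ is $O(1/n)$, negligible against the target rate $\sqrt{\log d/n}$.

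For the leading (oracle) term, set $Y^{(k)}_{t,jl}=(X_{t+k,j}-\mu_j)(X_{t,l}-\mu_l)$ and regard $(Y^{(k)}_t)_t$ as a $d^2$-dimensional stationary process of the causal form \eqref{eq: causal}. By H\"older's inequality, exactly as in the computation in the proof of Lemma \ref{lem: consistency of sigma}, its functional dependence measure satisfies $\|Y^{(k)}_{t,jl}-Y^{(k)}_{t,jl,\{0\}}\|_{q/2}\lesssim \delta_{t+k,q,j}+\delta_{t,q,l}$, so the product process inherits the GMC$(q/2)$ property and a uniformly finite DAM from Proposition \ref{lem: gmc(q)}. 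I then use the martingale decomposition $\mathbb{E}_0 Y^{(k)}_{t,jl}=\sum_{r\ge 0}\mathcal{P}_{t-r}(Y^{(k)}_{t,jl})$ and group the sum by lag $r$. For each fixed $r$, $(\mathcal{P}_{t-r}(Y^{(k)}_{t,jl}))_t$ is a martingale-difference array, so Lemma \ref{lem: Lemma D3} with $\ell=1\vee\log(d^2)$ gives $\|\max_{j,l}|\sum_t \mathcal{P}_{t-r}(Y^{(k)}_{t,jl})|\|_{q/2}\lesssim \sqrt{\ell n}\,\delta^{Y}_{r}$, where $\delta^Y_r$ denotes the lag-$r$ dependence measure of the product process. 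Dividing by $n$, summing over $r$, and using the geometric decay of $\delta^Y_r$ (so that $\sum_r \delta^Y_r=O(1)$) yields
\begin{equation*}
\Big\| \max_{j,l}\Big|\tfrac{1}{n}\sum_{t=1}^{n-k}\mathbb{E}_0 Y^{(k)}_{t,jl}\Big| \Big\|_{q/2}
\lesssim \sqrt{\tfrac{\ell}{n}}\,\sum_{r\ge 0}\delta^Y_r
= O\!\left(\sqrt{\tfrac{\log d}{n}}\right),
\end{equation*}
since $\log(d^2)\asymp \log d$.

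It remains to show the correction terms $R^{(k)}_{jl}$ are of smaller order. The same martingale-plus-maximal-inequality argument applied to the sample mean $\bar X-\mu=\tfrac1n\sum_t(X_t-\mu)$ gives $\|\,|\bar X-\mu|_\infty\,\|_{q}=O(\sqrt{\log d/n})$, and likewise $\|\,|\tfrac1n\sum_t(X_t-\mu)|_\infty\,\|_q=O(\sqrt{\log d/n})$. Each cross term in $R^{(k)}_{jl}$ is a product of two such factors, so by the Cauchy--Schwarz-type bound $\|AB\|_{q/2}\le\|A\|_q\|B\|_q$ it is $O(\log d/n)$ in $L_{q/2}$, and the purely quadratic piece $|\bar X-\mu|_\infty^2$ is of the same order; both are $o(\sqrt{\log d/n})$. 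Collecting the oracle bound, the $O(1/n)$ boundary term, and the $O(\log d/n)$ remainders proves the bound for $\hat{\bm{\Sigma}}^{(0)}$, and the identical argument with $k=1$ gives the bound for $\hat{\bm{\Sigma}}^{(1)}$.

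The main obstacle is obtaining only a $\log d$ (not $d$) dependence inside a genuine $L_{q/2}$-moment bound rather than a tail/probability statement; this is precisely what forces the use of the martingale maximal inequality (Lemma \ref{lem: Lemma D3}) in place of a union bound, and it requires first verifying that the bilinear product process has a finite $(q/2)$-order DAM. That verification is where the finite $q$-th DAM hypothesis of Proposition \ref{lem: gmc(q)}, together with H\"older's inequality, is essential, and the delicate bookkeeping lies in coordinating the martingale lag-decomposition with the geometric decay so that the infinite sum over $r$ converges while only $\sqrt{\log d}$ is paid for the maximum over the $d^2$ coordinates.
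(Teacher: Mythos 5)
Your proposal is correct and follows the same core strategy as the paper's proof: reduce the entrywise maximum to the martingale maximal inequality of Lemma \ref{lem: Lemma D3}, so that the $d^2$ entries cost only a $\sqrt{\log d}$ factor, and control the per-summand $L_{q/2}$ moments through the dependence-adjusted norms supplied by Proposition \ref{lem: gmc(q)}. In fact, your write-up is more complete than the paper's. The paper invokes Lemma \ref{lem: Lemma D3} in one line directly on the summands $\mathbb{E}_0(X_t-\bar X)(X_t-\bar X)^\top$, which are not martingale differences, so the lemma does not apply as stated; your lag-$r$ projection decomposition $\mathbb{E}_0 Y^{(k)}_{t,jl}=\sum_{r\ge 0}\mathcal{P}_{t-r}(Y^{(k)}_{t,jl})$, with Lemma \ref{lem: Lemma D3} applied to each fixed-lag martingale-difference array and geometric decay used to sum over $r$, is precisely the missing justification (it parallels the proof of Lemma \ref{lem: momeng for sum}). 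Likewise, the paper centers at $\bar X$ throughout and never accounts for the randomness of $\bar X$, whereas you center at $\mu$ and bound the resulting cross terms by $O(\log d/n)$; your $O(1/n)$ boundary term is also what the paper's (apparently mistyped) $d/n$ term ought to be. One caveat applies to both proofs equally: the right-hand side of Lemma \ref{lem: Lemma D3} involves $\|\,|\mathcal{P}_{t-r}(Y^{(k)}_t)|_\infty\|_{q/2}$, the $L_{q/2}$ norm of the maximum over all $d^2$ coordinates of the projections, whereas the H\"older computation only controls the coordinate-wise quantity $\max_{j,l}\|Y^{(k)}_{t,jl}-Y^{(k)}_{t,jl,\{0\}}\|_{q/2}$; passing from the latter to the former in general costs a factor polynomial in $d$ (of order $d^{O(1/q)}$), or, exploiting the boundedness of the binary process and splitting the lag sum at $r\asymp\log d$, an extra $\log d$ factor. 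Since the paper's own bound $\|\,|\mathbb{E}_0(X_t-\bar X)(X_t-\bar X)^\top|_{\max}\|_{q/2}\le \|X_\cdot\|_q^2$ makes the identical leap, your argument is at least as rigorous as the one it is being compared against, but if you want a fully airtight proof you should define $\delta^Y_r$ as the uniform (max-inside) dependence measure and verify its summability over $r$ explicitly.
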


\begin{proof}
By Lemma \ref{lem: Lemma D3}, we have 
$$
\||\hat{\bm{\Sigma}}^{(0)}- {\bm{\Sigma}}^{(0)}|_{\max}\|_{q/2} \leq C_q \sqrt{\frac{\log d}{n}}\sum_{t=1}^n \| |\mathbb{E}_0(X_t - \Bar{X})(X_t - \Bar{X})^\top |_{\max} \|_{q/2} \leq C_{\rho,q} \sqrt{\frac{\log d}{n}} \|X_{\cdot}\|_q^2.
$$
    For any $1\leq i,j \leq d$
    \begin{align*}
        |\bm{\Sigma}^{(1)}_{ij}| = |\mathbb{E}(X_{t+1,i} - \Bar{X}_i)(X_{tj} - \Bar{X}_j )| \leq \|X_{t+1,i} - \Bar{X}_i\|_{q}  \|X_{tj} - \Bar{X}_j\|_{q}   \leq C,
    \end{align*}
we have
\begin{align*}
\||\hat{\bm{\Sigma}}^{(1)} - \bm{\Sigma}^{(1)}|_{\max}\|_{q/2} \leq \||n^{-1} \sum_{t=1}^{n-1} (X_{t} - \Bar{X})(X_{t+1} -\Bar{X})^\top - \bm{\Sigma}^{(1)}|_{\max}\|_{q/2} + d/n|\bm{\Sigma}^{(1)}|_{\max}\\
\leq C_{\rho,q} \sqrt{\frac{\ell}{n-1}} \|X_{\cdot}\|_q^2 + \frac{Cd}{n} \leq C_{\rho,q}'\sqrt{\frac{\log d}{n}}
\end{align*}
\end{proof}

\subsection{Proof of Lemma \ref{lem: kernel + cov}}
\begin{lemma}
\label{lem: kernel + cov}
    Suppose $K(\cdot)$ is a kernel function as in definition \ref{definition: Kernel function}, and the bandwidth $h_n$ satisfies $h_n \rightarrow \infty $ and $h_n \sqrt{\frac{\log d}{n}} \rightarrow 0$ as $n \rightarrow \infty $.  Then, 
    \begin{equation}
        \label{eq: kernel + cov}
\max_{i, j=1, \cdots, d} \left|\frac{1}{n} \sum_{t_1=1}^n \sum_{t_2=1}^n K\left(\frac{t_1-t_2}{h_n}\right){X}_{t_1,i} {X}_{t_2,j} -n \cov (\bar{X}_i, \bar{X}_j)\right| =O_p\left(h_n^{-1}+ h_n \sqrt{\frac{\log d}{n}}\right),
    \end{equation}
\end{lemma}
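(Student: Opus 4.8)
The plan is to split the error into a deterministic (bias) term and a stochastic (fluctuation) term after rewriting the double sum as a lag-window long-run covariance estimator. Treating the process as mean-zero (the quantities to which this lemma is applied in the proof of Theorem \ref{thm: swb} are centered), so that $\cov(X_{t_1,i},X_{t_2,j})=\mathbb{E}[X_{t_1,i}X_{t_2,j}]$, and substituting $\ell=t_1-t_2$, one has
$$
\frac1n\sum_{t_1,t_2=1}^n K\!\left(\tfrac{t_1-t_2}{h_n}\right)X_{t_1,i}X_{t_2,j}=\sum_{|\ell|<n}K\!\left(\tfrac{\ell}{h_n}\right)\widehat\gamma_{ij}(\ell),\quad \widehat\gamma_{ij}(\ell)=\frac1n\!\!\sum_{t}\!X_{t,i}X_{t-\ell,j},
$$
while the target obeys $n\cov(\bar X_i,\bar X_j)=\sum_{|\ell|<n}(1-|\ell|/n)\gamma_{ij}(\ell)$ with $\gamma_{ij}(\ell)=\mathbb{E}[X_{t,i}X_{t-\ell,j}]$. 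Thus the lemma reduces to a uniform consistency statement for a classical kernel spectral estimator at frequency zero, and the difference decomposes as $\big(\sum_\ell K(\ell/h_n)(\widehat\gamma_{ij}(\ell)-\mathbb{E}\widehat\gamma_{ij}(\ell))\big)+\big(\mathbb{E}\sum_\ell K(\ell/h_n)\widehat\gamma_{ij}(\ell)-n\cov(\bar X_i,\bar X_j)\big)=:S_{ij}+B_{ij}$.

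For the bias $B_{ij}$, since $\mathbb{E}\widehat\gamma_{ij}(\ell)=(1-|\ell|/n)\gamma_{ij}(\ell)$, it equals $\sum_\ell(1-|\ell|/n)(K(\ell/h_n)-1)\gamma_{ij}(\ell)$. Assumption B3 together with \eqref{eq: cov bound} gives the uniform geometric bound $|\gamma_{ij}(\ell)|\le C\rho^{|\ell|}$, and the smoothness of $K$ with $K(0)=1$ in Definition \ref{definition: Kernel function} yields $|K(\ell/h_n)-1|\le C|\ell|/h_n$. Hence $\max_{i,j}|B_{ij}|\le (C/h_n)\sum_\ell|\ell|\rho^{|\ell|}=O(h_n^{-1})$, which produces the first term in the stated rate.

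For the stochastic part $S_{ij}$, I would exploit that the Toeplitz kernel matrix $\mathbf K=(K((t_1-t_2)/h_n))_{t_1,t_2}$ is positive semidefinite (its Fourier transform is nonnegative by Definition \ref{definition: Kernel function}) with operator norm $\|\mathbf K\|_2\le\max_{t}\sum_s|K((t-s)/h_n)|=O(h_n)$, where finiteness of $\int K$ is used. Writing $nS_{ij}=Y_i^\top\mathbf K Y_j-\mathbb{E}[Y_i^\top\mathbf K Y_j]$ with $Y_i=(X_{1,i},\dots,X_{n,i})^\top$, this is a bilinear form in the dependent, mean-zero process. Because $X_t$ obeys GMC by Proposition \ref{lem: gmc(q)}, each product process $(X_{t,i}X_{t-\ell,j})_t$ again satisfies a GMC condition (as used repeatedly in the proof of Theorem \ref{thm: swb}), so by projecting onto martingale differences $\mathcal P_{t-k}(\cdot)$ and applying the martingale maximal inequality of Lemma \ref{lem: Lemma D3}, one obtains, with $\ell^\ast=1\vee\log(d^2)$, a bound of the form $\|\max_{i,j}|S_{ij}|\|_{q/2}\le C\sqrt{\ell^\ast}\,h_n/\sqrt n$; the factor $h_n$ is harvested from $\|\mathbf K\|_2$ via Cauchy--Schwarz exactly as in the Toeplitz estimates in the proof of Theorem \ref{thm: swb}, while $\sqrt{\log d}$ comes from Lemma \ref{lem: Lemma D3}. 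This gives $\max_{i,j}|S_{ij}|=O_{\mathbb P}(h_n\sqrt{\log d/n})$, the second term in the rate.

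The main obstacle is the uniform (over all $d^2$ pairs) concentration of the bilinear forms $Y_i^\top\mathbf K Y_j$ under only functional-dependence (GMC) assumptions: one must marry a Hanson--Wright/operator-norm control of the quadratic structure with a dependence-adjusted maximal inequality that costs only $\sqrt{\log d}$ rather than a polynomial factor in $d$. Verifying that the product processes inherit GMC with constants uniform in $(i,j,\ell)$, and that the martingale-difference decomposition interacts with the kernel weights so that the $h_n$ scaling is precisely the operator norm, is the delicate step; once that uniform moment bound is in hand, combining $B_{ij}$ and $S_{ij}$ yields \eqref{eq: kernel + cov}.
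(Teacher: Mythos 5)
Your decomposition into a bias term $B_{ij}$ and a fluctuation term $S_{ij}$ is exactly the paper's, and your bias argument is essentially theirs as well (one small repair: Definition \ref{definition: Kernel function} does not guarantee $\sup_{x\ge 0}|K'(x)|<\infty$, so the Lipschitz bound $|K(\ell/h_n)-1|\le C|\ell|/h_n$ should be used only for $|\ell|\le h_n$, where $\max_{x\in[0,1]}|K'(x)|$ is finite by continuity, with the tail $|\ell|>h_n$ absorbed by the geometric decay of $\gamma_{ij}(\ell)$; this split is what the paper does). The genuine gap is in the fluctuation term. The mechanism you invoke --- writing $nS_{ij}=Y_i^\top\mathbf{K}Y_j-\mathbb{E}[Y_i^\top\mathbf{K}Y_j]$ and ``harvesting $h_n$ from $\|\mathbf{K}\|_2$ via Cauchy--Schwarz exactly as in the Toeplitz estimates in the proof of Theorem \ref{thm: swb}'' --- cannot produce the rate $h_n/\sqrt{n}$. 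That estimate bounds $|Y_i^\top\mathbf{K}Y_j|$ by $\|\mathbf{K}\|_2\,|Y_i|_2|Y_j|_2\lesssim h_n\cdot n$, i.e.\ it yields only $|S_{ij}|=O_\mathbb{P}(h_n)$: Cauchy--Schwarz with the operator norm never exploits the centering, so the $\sqrt{n}$ cancellation is lost, and since $h_n\to\infty$ the resulting bound is vacuous. (In the proof of Theorem \ref{thm: swb} those Toeplitz bounds are applied only to remainder terms that carry additional small factors of order $|\tilde{\bm{\alpha}}_{ik}-\bm{\alpha}_{ik}|$, which is why an $O_\mathbb{P}(h_n)$ bound suffices there; here there is no such extra factor.)

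What actually delivers $h_n\sqrt{\log d/n}$ --- and is the paper's route --- is a lag decomposition rather than a spectral bound: write $S_{ij}=\frac{1}{n}\sum_{s\ge 0}K(s/h_n)\sum_t \mathbb{E}_0\left(X_{t+s,i}X_{t,j}\right)$ plus the symmetric part, note that each lag-$s$ product process inherits GMC with constants uniform in $(i,j,s)$, so the martingale-projection argument combined with Lemma \ref{lem: Lemma D3} (as in Lemma \ref{lem: Sigma hat}) gives $\left\|\max_{i,j}\left|\sum_t \mathbb{E}_0(X_{t+s,i}X_{t,j})\right|\right\|_{q/2}\le C\sqrt{n\log d}$ uniformly in $s$; the kernel then enters through the $\ell_1$-type bound $\sum_{s\ge 0}K(s/h_n)=O(h_n)$ (a row-sum bound on the weights, not Cauchy--Schwarz), yielding $\frac{1}{n}\cdot O(h_n)\cdot O(\sqrt{n\log d})=O(h_n\sqrt{\log d/n})$. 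You name the right ingredients (GMC of product processes, projections, Lemma \ref{lem: Lemma D3}), but as written the assembly is wrong at the decisive step, so the claimed bound $C\sqrt{\ell^\ast}\,h_n/\sqrt{n}$ does not follow from your argument. A genuine dependent-data Hanson--Wright inequality could in principle give an even sharper $\sqrt{h_n/n}$ rate, but no such tool is available in the paper and you do not supply one.
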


\begin{proof}
Given the kernel function $K(\cdot)$, by triangle inequality, we have
\begin{equation*}
    \begin{aligned}
        &\left|\frac{1}{n}\sum_{t_1 = 1 }^n \sum_{t_2 =1}^n K\left(\frac{t_1 - t_2}{h_n}\right) X_{t_1,i }X_{t_2,j} - n\cov(\bar{X}_i, \bar{X}_j)\right| \\
        \leq &\left|\frac{1}{n}\sum_{t_1 = 1 }^n \sum_{t_2 =1}^n \left(1 - K\left(\frac{t_1 - t_2}{h_n}\right) \right)\mathbb{E}X_{t_1,i }X_{t_2,j} \right|\\
        &+\left|\frac{1}{n}\sum_{t_1 = 1 }^n \sum_{t_2 =1}^n  K\left(\frac{t_1 - t_2}{h_n}\right) (X_{t_1,i }X_{t_2,j} - \mathbb{E}X_{t_1,i }X_{t_2,j} )\right|
    \end{aligned}
\end{equation*}
On one hand, by \eqref{eq: cov bound}, we have
\begin{equation*}
    \begin{aligned}
       & \left|\frac{1}{n}\sum_{t_1 = 1 }^n \sum_{t_2 =1}^n \left(1 - K\left(\frac{t_1 - t_2}{h_n}\right) \right)\mathbb{E}X_{t_1,i }X_{t_2,j} \right| \\
        \leq & \frac{C}{n}\sum_{t_1 = 1 }^n \sum_{t_2 =1}^n \left(1 - K\left(\frac{t_1 - t_2}{h_n}\right) \right) \sum_{k=0}^\infty \bm{\Delta}_{t_1 -t_2+k,2,i} \bm{\Delta}_{k,2,j}\\
        \leq & C_1 \sum_{s=0}^\infty \left(1 - K\left(\frac{s}{h_n}\right) \right) \sum_{k=0}^\infty \bm{\Delta}_{s+k,2,i}\bm{\Delta}_{k,2,j}\\
        \leq &\frac{C_1 \max_{x \in [0,1]}|K'(x)|}{h_n}\sum_{k=0}^\infty \sum_{s=0}^{h_n}s \bm{\Delta}_{s+k,2,i}\bm{\Delta}_{k,2,j} + \sum_{k=0}^\infty \sum_{s=h_n+1}^\infty \bm{\Delta}_{s+k,2,i}\bm{\Delta}_{k,2,j}= O\left( \frac{1}{h_n}\right)
    \end{aligned}
\end{equation*}
On the other hand,
\begin{equation*}
    \begin{aligned}
       & \left\|\frac{1}{n}\sum_{t_1=1}^n \sum_{t_2 =1}^n K\left(\frac{t_1-t_2}{h_n}\right) X_{t_1, i}X_{t_2,j} -\mathbb{E}X_{t_1,i}X_{t_2,j}\right\|_{q/2}\\
        \leq & \frac{1}{n}\sum_{s=0}^{n-1}K\left(\frac{s}{h_n}\right)\left\|\sum_{t_2=1}^{n-s}X_{t_2 + s,i}X_{t_2,j} - \mathbb{E}X_{t_2 + s,i}X_{t_2,j}\right\|_{q/2} \\
     &   + \frac{1}{n} \sum_{s=0}^{n-1} K\left(\frac{s}{h_n}\right)\left\|\sum_{t_1=1}^{n-s}X_{t_1,i}X_{t_1+s,j} - \mathbb{E}X_{t_1,i}X_{t_1+s,j}\right\|_{q/2}\leq  \frac{C}{n} \sqrt{n} \sum_{s=0}^\infty K\left(\frac{s}{h_n}\right) \leq \frac{C_1 h_n}{\sqrt{n}}
    \end{aligned}
\end{equation*}

    Therefore,
    \begin{equation*}
        \begin{aligned}
            &\left\|\max_{i,j=1,\ldots, d}\left|\frac{1}{n}\sum_{t_1 = 1 }^n \sum_{t_2 =1}^n K\left(\frac{t_1 - t_2}{h_n}\right) X_{t_1,i }X_{t_2,j} - n\cov(\bar{X}_i, \bar{X}_j)\right|\right\|_{q/2}\\
            \leq&\max_{i,j=1,\ldots, d}  \left|\frac{1}{n}\sum_{t_1 = 1 }^n \sum_{t_2 =1}^n \left(1 - K\left(\frac{t_1 - t_2}{h_n}\right) \right)\mathbb{E}X_{t_1,i }X_{t_2,j} \right|\\
        &+\left\| \max_{i,j=1,\ldots, d}\left|\frac{1}{n}\sum_{t_1 = 1 }^n \sum_{t_2 =1}^n  K\left(\frac{t_1 - t_2}{h_n}\right) (X_{t_1,i }X_{t_2,j} - \mathbb{E}X_{t_1,i }X_{t_2,j} )\right|\right\|_{q/2 }=O \left(\frac{1}{h_n} + h_n \sqrt{\frac{\log d}{n}} \right)
        \end{aligned}
    \end{equation*}
\end{proof}

\end{document}